\documentclass[12pt]{article}
\newcommand{\blind}{0}
\usepackage{todonotes}
\usepackage{algorithm}
\usepackage{algorithmic}
\usepackage{float}
\usepackage{hyperref}
%

\usepackage{paralist, amsmath, amsthm, amssymb, color, graphicx}
\usepackage[margin=1in]{geometry}
\usepackage{verbatim}
\usepackage{amssymb}
\usepackage{setspace}
\usepackage{mathtools}
\usepackage{color}
\usepackage{cancel}

\usepackage[round]{natbib}

\usepackage{stmaryrd}

\usepackage{xfrac}

\usepackage{mathrsfs}
\usepackage{amscd}
\usepackage{dsfont}
\usepackage{tikz}
\usetikzlibrary{arrows}
\usetikzlibrary{graphs}
\usetikzlibrary{arrows.meta}

\usepackage{enumerate}


\theoremstyle{plain}
\newtheorem{thm}{Theorem}[section] 
\newtheorem{lem}[thm]{Lemma}
\newtheorem{prop}[thm]{Proposition}
\newtheorem{cor}[thm]{Corollary}

\theoremstyle{definition}
\newtheorem{defn}[thm]{Definition} 

\newtheorem{example}[thm]{Example}
\newtheorem{remark}[thm]{Remark}


\usepackage{listings}
\usepackage{color} 
\definecolor{mygreen}{RGB}{28,172,0} 
\definecolor{mylilas}{RGB}{170,55,241}
\definecolor{mygray}{gray}{0.95}

\lstset{language=R,%
    breaklines=true,%
    morekeywords={matlab2tikz},
    keywordstyle=\color{blue},%
    morekeywords=[2]{1}, keywordstyle=[2]{\color{black}},
    identifierstyle=\color{black},%
    stringstyle=\color{mylilas},
    commentstyle=\color{mygreen},%
    backgroundcolor = \color{mygray},
    showstringspaces=false,
    numbers=left,%
    numberstyle={\tiny \color{black}},
    numbersep=9pt, 
    emph=[1]{for,end,break},emphstyle=[1]\color{red}, 
}

\newcommand{\mscr}[1]{\mathscr{#1}}

\newcommand{\RR}{\mathbb{R}}

\newcommand{\EE}{\mathbb{E}}

\newcommand{\al}{\alpha}

\newcommand{\ga}{\gamma}
\newcommand{\de}{\delta}
\newcommand{\ep}{\epsilon}

\newcommand{\ta}{\theta}
\newcommand{\Ta}{\Theta}
\newcommand{\la}{\lambda}

\newcommand{\sas}{\sigma^2}

\newcommand{\De}{\Delta}

\usepackage{mathtools}

\renewcommand{\l}{\left}
\renewcommand{\r}{\right}

\newcommand{\defeq}{\vcentcolon=}

\newcommand{\iid}{\overset{\text{iid}}{\sim}}

\DeclareMathOperator{\var}{Var}

\newcommand{\ds}{\displaystyle}

\newcommand{\indep}{\rotatebox[origin=c]{90}{$\models$}}

\newcommand{\fig}[3]{\begin{figure}[h!]\begin{center}\includegraphics[#1]{#2}\end{center}\caption{#3}\label{fig:#2}\end{figure}}

\title{Structure and Sensitivity in Differential Privacy:\\ Comparing K-Norm Mechanisms}

\date{}

\graphicspath{{./Pictures/}}
\begin{document}

\def\spacingset#1{\renewcommand{\baselinestretch}%
{#1}\small\normalsize} \spacingset{1}

\if1\blind
{
  \author{}
  \maketitle
} \fi

\if0\blind
{
\author{Jordan Awan \& Aleksandra Slavkovi\'c}
\maketitle
  \medskip
} \fi

\bigskip
\begin{abstract}
Differential privacy (DP), provides a framework for provable privacy protection against arbitrary adversaries, while allowing the release of summary statistics and synthetic data. We address the problem of releasing a noisy real-valued statistic vector $T$, a function of sensitive data under DP, via the class of $K$-norm mechanisms with the goal of minimizing the noise added to achieve privacy. First, we introduce the \emph{sensitivity space of $T$}, which extends the concepts of sensitivity polytope and sensitivity hull to the setting of arbitrary statistics $T$. We then propose a framework consisting of three methods for comparing the $K$-norm mechanisms: 1) a multivariate extension of stochastic dominance, 2) the entropy of the mechanism, and 3) the conditional variance given a direction, to identify the optimal $K$-norm mechanism. In all of these criteria, the optimal $K$-norm mechanism is generated by the convex hull of the sensitivity space. Using our methodology, we extend the objective perturbation and functional mechanisms and apply these tools to logistic and linear regression, allowing for private releases of statistical results. Via simulations and an application to a housing price dataset, we demonstrate that our proposed methodology offers a substantial improvement in utility for the same level of risk. 
\end{abstract}

\noindent%
{\it Keywords: Statistical Disclosure Control, Entropy, Information Theory, Statistical Depth, Stochastic Dominance, Regression}  
\vfill

\newpage


%

%


\section{Introduction}
\label{Introduction}

Statistical Disclosure Limitation (SDL) or Control (SDC) refers to the broad class of methods developed to address the trade-off between limiting the disclosure risk of sharing and publishing sensitive data, while at the same time maintaining the utility and validity of statistical inference \citep{Duncan1986,Duncan1989,Fienberg1998}. Until recently, the statistical literature on disclosure limitation has built on a probabilistic notion of disclosure as proposed by   \citet{Dalenius1977}: ``If the release of the statistics [T] makes it possible to determine the value [of confidential statistical data] more accurately than is possible without access to [T], a disclosure has taken place.'' However, while many statistical procedures appear innocuous, \citet{Dwork2017:Exposed} demonstrate through a survey of potential disclosure risks (i.e., possible attacks on private data) due to release of aggregate statistics, that it can be nontrivial to determine if a disclosure has taken place or not. In fact, while Dalenius' notion of disclosure risk is intuitive, formalizing this definition requires knowing the prior knowledge of the attacker, their side information, as well as their computational power. In contrast, differential privacy (DP), introduced in \citep{Dwork2006:Sensitivity}, has emerged as a formal framework to quantify the ``privacy level'', which provides provable protection against adversaries with arbitrary priors, unlimited side information, and unbounded computational power. Differential privacy guarantees that whether an individual is in a database or not, the results of a DP procedure should be similar in terms of their probability distribution; this guarantee offers a sense of ``plausible deniability'' and limits the ability of an adversary to infer about any particular individual in the database.  The strength of the privacy guarantee is characterized by a real value $\ep>0$, called the privacy-loss budget, where smaller values of $\ep$ provide a stronger privacy guarantee.

Differential privacy was proposed by the computer science community, but over the last decade an emphasis has been put on linking DP to fundamental statistical concepts in order to expand and improve its applicability. 
\citet{Wasserman2010:StatisticalFDP} showed that satisfying DP for a small value of $\ep$ 
guarantees that certain hypotheses an adversary may attempt to test about particular individuals in the dataset have low power. \citet{Wasserman2010:StatisticalFDP} also proposed tools for releasing histograms and density estimates under DP. In \citet{Dwork2009:Robust}, it was shown that the amount of noise required to privatize a statistic is related to concepts in the field of robust statistics. Since its inception, the DP framework has been expanded to include mechanisms for private releases of principal components \citep{Chaudhuri2013}, model selection \citep{Lei2016}, hypothesis tests \citep{Vu2009,Wang2015:Revisiting,Gaboardi2016,Awan2018:Binomial,Canonne2019}, confidence intervals \citep{Karwa2017}, network analysis \citep{Karwa2016:Sharing,Karwa2016}, as well as analyses of functional data \citep{Hall2013,Mirshani2019,Awan2019}, to list a few. Besides limiting privacy risk, \citet{Dwork2015:ADA} show that the tools of DP can be used to obtain accurate statistical inferences in adaptive data analysis.

However, DP is often criticized for a substantial drop in statistical utility and thus a lack of applicability, especially in finite sample settings. \citet{Bun2018} show that there is a significant sample complexity cost to answering a set of queries under DP, in terms of the dimension of the database. On the other hand, holding the dimension of the individual's data fixed, \citet{Smith2011:Privacy-preservingSE} shows that under mild assumptions, asymptotically efficient estimators are achievable under DP.  While it is encouraging that asymptotically, DP mechanisms can perform as well as non-private methods, practical implementations of DP mechanisms often suffer from a considerable drop in performance even for moderate sample sizes. For example, \citet{Fienberg2010} criticized DP mechanisms that release privatized contingency tables for producing unacceptably inaccurate results, and \citet{Vu2009} show that significant adjustments in the sample size and the sampling distribution of the private test statistic are required to conduct the simplest of hypothesis tests. Therefore, there is a need to not only develop DP tools for more statistical problems, but to optimize existing DP tools to improve their practical accuracy on finite sample data problems. 
%
In this paper, we address the problem of releasing a noisy real-valued statistic vector $T$, a function of sensitive data under DP, via the class of $K$-norm mechanisms with the goal of minimizing the noise added to achieve privacy, and optimizing the use of the privacy-loss budget $\ep$ for a fixed statistic and sample size $n$.

While the computer science community has produced finite sample complexity bounds for many problems (i.e. \citealp{Hardt2010:GeometryDP,Bun2018,Cai2017,Acharya2018,Canonne2019} for linear queries, identity testing, and hypothesis testing, respectively, to name a few), there is a much more limited literature on optimizing DP mechanisms for a fixed sample size. \citet{Ghosh2012} show that for any count query, the \emph{geometric mechanism} simultaneously minimizes the expected value of a wide variety of loss functions. \citet{Geng2015} developed a \emph{staircase mechanism} which they show maximizes utility for real-valued statistics, with a focus on $\ell_1$ and $\ell_2$ error. \citet{Geng2013} show that for two-dimensional real-valued statistics, a correlated multidimensional staircase mechanism minimizes the expected $\ell_1$ loss. \citet{Wang2014} show that for a database of one person, the minimum-entropy mechanism (with a differentiable density) is Laplace. \citet{Wang2017} shows that for $d$-dimensional databases with sensitivity based on the $\ell_1$ norm, the minimum-entropy mechanism is to add independent Laplace noise to each coordinate. \citet{Awan2018:Binomial} developed uniformly most powerful DP hypothesis tests for Bernoulli data based on the \emph{Tulap distribution}, a variant of the staircase mechanism. Furthermore, \citet{Karwa2016:Sharing} provide a method of obtaining maximum likelihood estimates for exponential graph models, which correct for the bias introduced by standard DP methods.


The original and most common method of achieving DP for the release of an $m$-dimensional real statistic vector $T$ is through the Laplace mechanism, which adds independent Laplace random variables to each entry of $T$ before releasing. The Laplace mechanism can be generalized to the K-norm mechanisms introduced by \citet{Hardt2010:GeometryDP}, a family of unbiased mechanisms (mean is the non-private $T$) which are the focus of this paper. The $K$-norm mechanisms (here on abbreviated as $K$-mech) are a family of additive mechanisms determined by the choice of a norm on $\RR^m$.  
Using the $\ell_1$ norm leads to the Laplace mechanism which is popularly used (e.g., see \citet{Dwork2006:Sensitivity,Smith2011:Privacy-preservingSE,Zhang2012:FunctionalMR,Yu2014}). In  \citet{Chaudhuri2009:Logistic}, \citet{Chaudhuri2011:DPERM}, \citet{Kifer2012:PrivateCERM}, \citet{Song2013:StochasticGradient}, and \citet{Yu2014}, the $\ell_2$ norm variant of the $K$-mech is used in applications of empirical risk minimization. The $\ell_\infty$ norm variant of the $K$-mech is proposed in \citet{Steinke2017}, as a mechanism to optimize the worst case error when releasing one-way marginals of a high-dimensional binary database.  \citet{Hall2012:NewSADP} developed a minimax-optimal $K$-mech for density estimation, and \citet{Xiao2015} study and implement $K$-mechs related to two-dimensional polytopes for the application of location data. With such a large class of mechanisms, it is natural to ask which $K$-mech maximizes the utility of the output. \citet{Hardt2010:GeometryDP} began this comparison by studying $K$-mechs 
to release linear statistics, with the goal of optimizing the sample complexity, as the sample size and dimension of $T$ increases. However, the restriction to linear statistics limits the types of problems that can be tackled, and simplifies the geometric problem. In contrast, our goal is to choose the best $K$-mech to optimize performance for an arbitrary fixed statistic $T$ and sample size, optimizing finite sample statistical utility. By allowing non-linear statistics, we are able to accommodate a larger set of applications; this extension to non-linear statistics also results in more complex geometric structures, which we explore.

{\bf Our Contributions } 
The objective of this paper is to optimize the performance of the $K$-norm mechanisms, for a given arbitrary statistic at a fixed sample size. To do this, we first introduce the new geometric notion, \emph{sensitivity space} $S_T$ of $T$, related to the \emph{sensitivity} of $T$ (i.e., the amount that $T$ can vary by changing one person's information in the dataset). When releasing a noisy version of a statistic $T$ under $\ep$-DP, the amount of noise required is related to the sensitivity of $T$ and 
an imprecise estimate of the sensitivity, which does not carefully consider the structure of $T$, can amplify the loss in statistical utility. 
The proposed sensitivity space allows for the rigorous theoretical and practical comparison of $K$-Mechs, with the goal of minimizing the magnitude of noise introduced to satisfy $\ep$-DP. 
It also generalizes the polytopes used in \citet{Hardt2010:GeometryDP}, which are called \emph{sensitivity polytope} in \citet{Dwork2014,Dwork2015,Nikolov2015,Kattis2016}, and \emph{sensitivity hull} in \citet{Xiao2015,Xiao2017}, all of which are designed for linear statistics. When extended to nonlinear statistics however, the sensitivity space need not be a polytope, and can take a wider variety of forms, which are mathematically more complex. Furthermore, the generalization to arbitrary statistics allows the $K$-mechs to be implemented within many existing DP mechanisms, and by choosing the $K$-mech carefully, we can significantly improve the performance of these mechanisms for many statistical problems, such as regression and empirical risk minimization.

In order to identify the optimal $K$-norm mechanism for a fixed statistic and sample size, we propose a novel framework of comparing the mechanisms that relies on measuring the sensitivity space. The framework consists of proposing three theoretical perspectives of comparing the $K$-norm mechanisms and linking them with two optimal decision rules. The three perspectives, (1) a multivariate version of stochastic dominance, (2) the entropy of the mechanism, and (3) the conditional variance given a unit direction, 
each result in one of two stochastic orderings on the K-mechs, which we call the \emph{containment order} and the \emph{volume order}. We show that in all of these criteria, the optimal K-norm mechanism is generated by the convex hull of our proposed sensitivity space, a fundamental result with which we generalize previous results in the literature such as the use of the $\ell_\infty$ norm in \citet{Steinke2017}). 
 The first method of comparison is a novel stochastic ordering which is a multivariate extension of stochastic dominance \citep{Quirk1962}, motivated by notions of statistical depth \citep{Mosler2013}, which could be of interest outside of privacy as an alternative stochastic ordering on distributions \citep{Zuo2000Nonparam}. 
Second, we compare the $K$-mechs in terms of the entropy of the noise-adding distribution. We show that the entropy of a $K$-mech is determined by the volume of its corresponding norm ball. 
Finally, we compare $K$-mechs in terms of their conditional variance given a unit direction, which we show is based on the containment of the norm balls. This work offers both geometric insight into the mechanisms as well as a simple decision criteria to choose between mechanisms.

Using our new methodology, 
we also extend the commonly used objective perturbation 
and functional mechanisms 
to permit arbitrary $K$-mechs, allowing for the application of our techniques to optimize the finite sample performance of these mechanisms.
Objective perturbation \citep{Chaudhuri2009:Logistic,Chaudhuri2011:DPERM,Kifer2012:PrivateCERM} and functional mechanism \citep{Zhang2012:FunctionalMR} are highly influential works  which have had a large impact on the field of differential privacy, and are among the state of the art mechanisms for regression problems\footnote{In our preliminary simulations on DP regression mechanisms, we found that objective perturbation and functional mechanism were among the top performing algorithms for logistic and linear regression, respectively, especially for moderate sample sizes.}. 
We illustrate how our theoretical methodology applies to the problems of logistic and linear regression via objective perturbation and functional mechanism, respectively. Through simulations and a real data application, we demonstrate that by carefully choosing the $K$-mech at crucial steps, we obtain significant gains in the finite-sample accuracy of these mechanisms for the same level of $\epsilon$, improving the applicability of these mechanisms for real data problems. 
From another perspective, by optimizing the performance we are able to provide the same level of accuracy with a smaller value of $\epsilon$, allowing for better use of the privacy-loss budget to answer other potential statistical queries.


{\bf Organization} In Section \ref{s:background}, we review the background of DP, introduce the sensitivity space, and demonstrate its connection to the $K$-norm mechanisms. We explore an example in Section \ref{s:explore} to demonstrate the mathematical structure of the sensitivity space for non-linear statistics. In the main section, Section \ref{s:compare}, we propose three methods to compare the $K$-norm mechanisms. In Subsection \ref{s:stochastic} we propose a stochastic partial ordering of the $K$-mechs, which in Subsection \ref{s:depth} we show is connected to concepts in statistical depth. In Subsection \ref{s:entropy} we derive the entropy of a $K$-mech in terms of the volume of its corresponding norm ball. In Subsection \ref{s:variance} we prove that the conditional variance of the mechanism is optimized based on the containment of the norm balls. In Subsection \ref{s:variance}, we show that under each of our criteria, the optimal mechanism is generated by the convex hull of the sensitivity space. 


In Section \ref{s:objective}, we extend the objective mechanism to allow for arbitrary $K$-mechs. We apply objective perturbation to the problem of logistic regression in Subsection \ref{s:logistic}, and derive the sensitivity space for this problem. In Subsection \ref{s:logisticSimulation}, we demonstrate through simulations that choosing the $K$-mech based on our criteria substantially improves the accuracy of the private logistic regression. In Section \ref{s:functional}, we modify the functional mechanism to allow for arbitrary $K$-mechs, with a focus on linear regression. For this problem the sensitivity space can be written in closed form, allowing us to implement the optimal $K$-mech exactly. We demonstrate the finite-sample utility gains of our approach in  Subsections \ref{s:linearSimulations} and \ref{s:housing} through simulations and a real data example.

We end with concluding remarks and discussion in Section \ref{s:conclusions}. For convenience, we collect algorithms to sample from several $K$-mechs in the appendix, Subsection \ref{s:sampling}. All proofs and some technical details are postponed to Subsection \ref{s:proofs}.

Throughout the paper, we aim to emphasize which results are our contribution and which are from the previous literature as follows: Any result/definition with a name and no citation is original to the present paper, and those with a citation are of course from the literature. Some minor definitions have no name or citation, but are common concepts in the fields of Statistics and Mathematics.

\section{Differential Privacy and Sensitivity Space}
\label{s:background}

In this section, we review the necessary background on differential privacy and propose the new concept that we refer to as the \emph{sensitivity space} of a statistic $T$, which extends the concepts of sensitivity polytope and sensitivity hull to the setting of nonlinear statistics $T$.

Differential privacy \citep{Dwork2006:Sensitivity}, provides a framework for a strong provable privacy protection against arbitrary adversaries while allowing the release of some statistics and potentially synthetic data. It requires the introduction of additional randomness into the analyses such that the distribution of the output does not change substantially if one person were to be in the database or not. A non-technical introduction to DP can be found in \citet{Nissim2017:Nontechnical}, and a comprehensive introduction can be found in \citet{Dwork2014:AFD}. 

Before we state the definition of DP, we recall the Hamming distance, which we use to measure the similarity of two databases. The definition of DP can easily be modified to allow for alternative metrics on the space of databases. 
\begin{defn}
Let $X,Y\in \mscr X^n$ for any space $\mscr X$. The \emph{Hamming distance} between $X$ and $Y$ is $\de(X,Y) = \#\{i\mid X_i \neq Y_i\}$, the number of entries where $X$ and $Y$ differ.
\end{defn}

\begin{defn}[\citealp{Dwork2006:Sensitivity}]\label{DP}
Let $X = (X_1,\ldots, X_n)\in \mscr X^n$. For $\ep>0$, a mechanism (random function) $M:\mscr X^n \rightarrow \mscr Y$ satisfies $\ep$-Differential Privacy ($\ep$-DP) if for all measurable sets $B$, and all pairs of databases $X$ and $X'$ such that $\de(X,X')=1$, we have that
 \[P(M(X') \in B\mid X')\leq \exp(\ep)P(M(X)\in B\mid X).\]
\end{defn}
Let's take a moment to discuss the cast of characters in Definition \ref{DP}. The value $\ep$ is called either the privacy parameter, or the privacy-loss budget. Smaller $\ep$ corresponds to more privacy, but as $\ep$ approaches infinity there is no privacy guarantee. We think of $X_i$ as the information provided by individual $i$, so  $\mscr X$ is the set of possible observations from one individual. We make no assumptions about the nature of  $\mscr X$: while it is common for $\mscr X$ to be a subset of $\RR^m$, $\mscr X$ could also be a set of networks, survey responses, or any other data structure. We call $\mscr Y$ the output space, which contains the possible values our statistic of interest can take on. In Definition \ref{DP} we do not place any restrictions on $\mscr Y$, however in this paper we  focus on $\mscr Y = \RR^m$ and use Lebesgue measure denote as $\lambda(\cdot)$. Finally the mechanism $M$ is our method of introducing randomness into the output, which is what achieves privacy.


A statistically insightful interpretation of Definition \ref{DP} was provided by \citet{Wasserman2010:StatisticalFDP}, connecting it to hypothesis testing. If an adversary wants to test whose data is in the $i^{th}$ entry of $X$ at level $\al$ test, Proposition \ref{DPHypothesis} assures us that the probability of rejecting the null hypothesis is small.

\begin{prop}[\citealp{Wasserman2010:StatisticalFDP}]
\label{DPHypothesis}
Suppose that $M:\mscr X^n\rightarrow \mscr Y$ satisfies $\ep$-DP, $P$ is a probability measure on $\mscr X^n$, and $Z=M(X)$ is the released output of the mechanism $M$. Then any level $\al$ test which is a function of $Z$, $M$, and $P$, of $H_0:X_i=u$ versus $H_1:X_i=v$ has power bounded above by $\al \exp(\ep)$.
\end{prop}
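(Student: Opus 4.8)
The plan is to reduce the claim to an application of Definition \ref{DP} to a single pair of adjacent databases. Since the test is a function only of the released output $Z = M(X)$ (treating the mechanism $M$ and the prior $P$ as fixed, known objects), we may regard it as a measurable function $\phi : \mscr Y \to [0,1]$, allowing for randomized tests; its size is $\EE[\phi(Z)]$ under $H_0$ and its power is $\EE[\phi(Z)]$ under $H_1$. I would fix the remaining coordinates $X_{-i} = (X_j)_{j\neq i}$ at arbitrary values $x_{-i}$ and set $X^{(0)} = (x_{-i},u)$ and $X^{(1)} = (x_{-i},v)$, the databases consistent with $H_0$ and $H_1$. If $u=v$ the two hypotheses coincide and there is nothing to prove, so assume $u\neq v$, whence $\de(X^{(0)},X^{(1)})=1$.

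First I would record the key inequality for indicator tests: for $\phi = \ONE_B$, $\ep$-DP applied to the adjacent pair $X^{(1)}, X^{(0)}$ gives directly
\[ P(M(X^{(1)})\in B \mid x_{-i}) \le \exp(\ep)\, P(M(X^{(0)})\in B \mid x_{-i}). \]
Next I would extend this to a general randomized test $\phi : \mscr Y \to [0,1]$ by approximating $\phi$ from below by nonnegative simple functions and passing to the limit via linearity of expectation and monotone convergence; this yields
\[ \EE\!\left[\phi(M(X^{(1)})) \mid x_{-i}\right] \le \exp(\ep)\, \EE\!\left[\phi(M(X^{(0)})) \mid x_{-i}\right], \]
which is just the familiar statement that $\ep$-DP controls the privacy loss, so that integrating against any $[0,1]$-valued function preserves the multiplicative factor $\exp(\ep)$.

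To conclude, note that the level constraint forces the conditional size $\EE[\phi(M(X^{(0)})) \mid x_{-i}]$ to be at most $\al$ for every configuration $x_{-i}$ of the nuisance coordinates, so the displayed inequality makes the conditional power at most $\al\exp(\ep)$ for every $x_{-i}$; averaging over $X_{-i}$ under $P$ then gives the stated bound on the unconditional power. There is no substantive obstacle here, as the argument is a direct unwinding of the definition; the only points that call for care are the passage from indicator tests to arbitrary randomized tests and the bookkeeping around what ``level $\al$'' should mean once $X_{-i}$ is regarded as a composite nuisance parameter, and the clean choice used in the proof is to require the size bound to hold conditionally on each $x_{-i}$.
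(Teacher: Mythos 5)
Your proof is correct. The paper does not actually prove Proposition \ref{DPHypothesis} --- it is stated as a citation to \citet{Wasserman2010:StatisticalFDP} with no proof in the appendix --- and your argument (extend the $\ep$-DP inequality from indicators to $[0,1]$-valued test functions, apply it to the adjacent pair $(x_{-i},u)$, $(x_{-i},v)$, and interpret the level constraint conditionally on the nuisance coordinates before averaging) is exactly the standard argument from that reference, including the careful handling of the one genuine subtlety, namely what ``level $\al$'' means when $X_{-i}$ is unspecified.
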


\begin{remark}
\label{DifferentDP}
Besides Definition \ref{DP}, 
other formulations of DP have been proposed; see \citet{Kifer2012axiomatic} for a formal axiomatization of privacy. For instance, by replacing $\mscr X^n$ with any set, and  $\de(\cdot,\cdot)$ with any metric on that set, one obtains a new notion of DP. Let $\mscr X^* = \{()\}\cup\mscr X\cup \mscr X^2\cup \cdots$ be the set of all finite tuples with entries in $\mscr X$, where $()$ represents the empty tuple. A popular alternative to Definition \ref{DP} takes $X\in \mscr X^*$, and $\de(X,X') = 1$ if $X$ can be obtained from $X'$ by adding or deleting an entry. We will refer to this notion as add/delete-DP, which appears in \citet{Dwork2014:AFD}. Note that $\ep$-add/delete-DP implies $2\ep$-DP. Other differences are that $n$ is not publicly known, and the mechanism must be well defined on any $X\in \mscr X^*$. While we use Definition \ref{DP} for concreteness, our theoretical results can be modified to accommodate these other settings.
\end{remark}

In this paper, we  study mechanisms that add a random vector to a statistic vector $T$. For such mechanisms, the variance of the random vector, and thus the additonal noise added, must be scaled differently depending on  the {\em sensitivity} of $T$, which captures the magnitude by which a single individual's data can change the output. The $\ell_1$-sensitivity was first introduced in \citet{Dwork2006:Sensitivity}, but sensitivity can be measured by other norms; e.g.,  $\ell_2$-sensitivity as in \citet{Chaudhuri2011:DPERM}. 

As we demonstrate in Sections \ref{s:logisticSimulation}, \ref{s:linearSimulations}, and \ref{s:housing}, the choice of norm for sensitivity calculations can have a significant impact on the performance of DP methods. Next we introduce a key new notion, the \emph{sensitivity space}, whose structure allows us to choose the best norm (See Section \ref{s:compare} for how to evaluate ``best'') for an arbitrary statistic in $\RR^m$. Notions similar to the sensitivity space have played an important role in the theoretical DP literature, but were limited to the study of linear statistics and often considered only $\ell_1$ norms on databases. In the case of linear statistics on binary-valued databases, the convex hull of the sensitivity space results in a linear transformation of the $L_1$ ball, which is referred to as the \emph{sensitivity polytope} \citep{Dwork2014,Dwork2015,Nikolov2015,Kattis2016}. \citet{Xiao2015,Xiao2017} study the setting of discrete location data, where the convex hull of the sensitivity space forms a two-dimensional polytope they call the \emph{sensitivity hull}. In this paper, the proposed sensitivity space is defined for {\em arbitrary} (not necessary linear) statistics in $\RR^m$, allowing for a wider variety of statistical applications, and generalizes the earlier notions of sensitivity polytope/hull. In Subsection \ref{s:explore}, we explore a simple example which demonstrates the complex geometry of the sensitivity space when applied to non-linear statistics. In Section \ref{s:compare}, we show that the optimal $K$-mech is determined by the convex hull of the sensitivity space. In fact, this is a fundamental result to know\footnote{We thank a reviewer for emphasizing this point.} in order to improve design of differentially private mechanisms in a formal and principled manner. In Sections \ref{s:logistic} and \ref{s:linear} we demonstrate this by using our more general notion of sensitivity space to optimize DP algorithms for linear and logistic regression.

\begin{defn}[Sensitivity Space]\label{AdjacentOutput}
Let $T:\mscr X^n\rightarrow \RR^m$ be any function. The \emph{sensitivity space} of $T$ is 
\[S_T = \l\{ u \in \RR^m \middle| \begin{array}{c}\exists X,X'\in \mscr X^n\text{ s.t. } \de(X,X')=1\\ \text{and }u=T(X)-T(X')\end{array}\r\}.\]
\end{defn}
The sensitivity space consists of all possible differences in the statistic vector $T$, when computed on two databases differing in one entry. 
Before  introducing the notion of sensitivity, we define \emph{norm balls}, which are the sets in one-to-one correspondence with norms.

\begin{defn}
A set $K\subset \RR^m$ is a \emph{norm ball} if $K$ is 
\begin{inparaenum}[1)]\item convex, \item bounded,  \item absorbing: $\forall u \in \RR^m$, $\exists c>0$ such that $u\in cK$, and \item
 symmetric about zero: if $u\in K$, then $-u\in K$. \end{inparaenum}
\label{NormBall}
\end{defn}

It is well known that if $K\subset \RR^m$ is a norm ball, then we can define a norm $\lVert \cdot \rVert_K: \RR^m \rightarrow \RR^{\geq0}$, given by $\lVert u \rVert_K = \inf \{c\in \RR^{\geq 0}\mid u\in cK\}$. We call $\lVert \cdot \rVert_K$ the \emph{$K$-norm}. In fact, any norm $\lVert \cdot \rVert$ can be generated this way by taking $K = \{u \mid \lVert u\rVert \leq 1\}$.

The sensitivity of a statistic $T$ is the largest amount that $T$ changes when one entry of $T$ is changed. Geometrically, the sensitivity of $T$ is the largest radius of $S_T$ measured by the norm of interest. 

\begin{defn}[$K$-norm Sensitivity]
For a norm ball $K \subset \RR^m$, the \emph{$K$-norm sensitivity} of $T$ is 
\[\Delta_K(T) = \sup_{\de(X,X')=1} \lVert T(X) - T(X')\rVert_K = \sup_{u \in S_T} \lVert u \rVert_K .\]

For $p\in[1,\infty]$, the \emph{$\ell_p$-sensitivity} of $T$ is $\Delta_p(T) = \sup_{u\in S_T} \lVert u\rVert_p$.
If $T$ is one-dimensional, we simply say the \emph{sensitivity} of $T$ is $\Delta(T) = \sup_{u \in S_T} |u|$.
\label{Sensitivity}
\end{defn}

The family of mechanisms we study in this paper are the $K$-Norm mechanisms ($K$-mechs), introduced in \citet{Hardt2010:GeometryDP}. While the focus of \citet{Hardt2010:GeometryDP} is on linear statistics, and uses a different metric on the input database, they remark that the mechanism is valid in more general settings as well. In Proposition \ref{prop:KNorm} and throughout the paper, we note Lebesgue measure as $\la(\cdot)$. Recall that the Lebesgue measure of a set $S$ can be interpreted as the volume of $S$. 

\begin{prop}[$K$-Norm Mechanism: \citealp{Hardt2010:GeometryDP}]\label{prop:KNorm}
%
Let $X  \in \mscr X^n$ and $T:\mscr X^n \rightarrow \RR^m$. Let $\lVert \cdot\rVert_K$ be any norm on $\RR^m$ and let $K = \{x\mid \lVert x \rVert_K\leq 1\}$ be its unit ball. Call
$\De_K(T) = \sup_{u\in S_T} \lVert u\rVert_K$.
Let $V$ be a random variable in $\RR^m$, with density $f_V(v) = \frac{\exp(\frac{-\ep}{\De}\lVert v\rVert_K)}{\Gamma(m+1)\la\left(\frac{\Delta}{\ep} K\right)}$, where $\Delta_K(T)\leq \Delta<\infty$. Then releasing $T(X)+V$, satisfies $\ep$-DP.
\end{prop}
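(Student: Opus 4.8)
The plan is to verify the $\ep$-DP condition directly from Definition \ref{DP}, since the mechanism is additive and the density of $V$ is explicit. First I would check that $f_V$ is a genuine probability density: it is nonnegative by inspection, and the normalizing constant $\Gamma(m+1)\la\!\l(\tfrac{\De}{\ep}K\r)$ is exactly $\int_{\RR^m}\exp\!\l(\tfrac{-\ep}{\De}\lVert v\rVert_K\r)\,dv$. This integral is finite because $K$ is bounded (so $\lVert\cdot\rVert_K$ grows at least linearly in $\lVert v\rVert_2$, giving exponential decay), and the precise value follows from the standard ``layer-cake'' computation: writing $\int_{\RR^m} g(\lVert v\rVert_K)\,dv = \la(K)\int_0^\infty g(r)\,m r^{m-1}\,dr$ for radial $g$ — a consequence of the scaling $\la(rK)=r^m\la(K)$ — and then evaluating $\int_0^\infty e^{-\ep r/\De}\,m r^{m-1}\,dr = m!\,(\De/\ep)^m = \Gamma(m+1)\la\!\l(\tfrac{\De}{\ep}K\r)/\la(K)$.

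Next, fix databases $X,X'$ with $\de(X,X')=1$ and a measurable set $B$. The output $M(X)=T(X)+V$ has density $v\mapsto f_V(v-T(X))$, so
\[
\frac{P(M(X')\in B)}{P(M(X)\in B)} = \frac{\int_B \exp\!\l(\tfrac{-\ep}{\De}\lVert y-T(X')\rVert_K\r)\,dy}{\int_B \exp\!\l(\tfrac{-\ep}{\De}\lVert y-T(X)\rVert_K\r)\,dy},
\]
since the identical normalizing constants cancel. It therefore suffices to bound the ratio of integrands pointwise: for every $y$,
\[
\exp\!\l(\tfrac{-\ep}{\De}\lVert y-T(X')\rVert_K\r) \le \exp(\ep)\exp\!\l(\tfrac{-\ep}{\De}\lVert y-T(X)\rVert_K\r),
\]
which, after taking logarithms and rearranging, is equivalent to $\lVert y-T(X)\rVert_K - \lVert y-T(X')\rVert_K \le \De$. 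By the triangle inequality the left side is at most $\lVert T(X')-T(X)\rVert_K$, and since $T(X)-T(X')\in S_T$ by Definition \ref{AdjacentOutput}, this is at most $\De_K(T)\le\De$ by hypothesis. Integrating the pointwise bound over $B$ and dividing gives $P(M(X')\in B)\le \exp(\ep)P(M(X)\in B)$, which is exactly $\ep$-DP.

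There is no serious obstacle here; the argument is the familiar Laplace-mechanism proof with $|\cdot|$ replaced by $\lVert\cdot\rVert_K$. The only points requiring a little care are (i) justifying finiteness of the normalizing integral and the exact layer-cake evaluation using $\la(rK)=r^m\la(K)$ and the absorbing/boundedness properties of the norm ball from Definition \ref{NormBall}, and (ii) noting that the pointwise-ratio bound is uniform in $y$, so it survives integration over an arbitrary measurable $B$ — in particular no measure-zero issues arise because $f_V$ is everywhere positive and continuous. I would also remark that the case $m=1$, $K=[-1,1]$ recovers the Laplace mechanism as a sanity check.
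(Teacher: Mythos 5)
Your proof is correct. Note that the paper itself gives no proof of this proposition---it is stated as a cited result from \citet{Hardt2010:GeometryDP}---so there is nothing internal to compare against; your argument (verify the normalizing constant via the layer-cake identity $\la(rK)=r^m\la(K)$, then bound the pointwise density ratio by the reverse triangle inequality and $\lVert T(X)-T(X')\rVert_K\leq \De_K(T)\leq\De$) is exactly the standard one underlying that citation, and all the steps, including the symmetry of $S_T$ and of $\lVert\cdot\rVert_K$ needed to pass between $T(X)-T(X')$ and $T(X')-T(X)$, check out.
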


Since all norms are equivalent in $\RR^m$, requiring that $\Delta_K(T)<\infty$ is equivalent to requiring that $S_T$ is bounded.

Since norms are symmetric about zero, any $K$-mech has mean $T(X)$ and so the $K$-mechs are a class of unbiased mechanisms, with respect to the randomness introduced for privacy. We refer to the $K$-mech with norm $\ell_p$ as the $\ell_p$-mechanism ($\ell_p$-mech). Note that in the case where the norm is $\ell_1$, the mechanism results in adding independent $\mathrm{Laplace}(\Delta/\ep)$ noise to each entry of $T$. So, the $K$-norm mechanisms can be viewed as a generalization of the Laplace mechanism.

Finally, Proposition \ref{PostComp} states that postprocessing cannot increase privacy risk. Postprocessing is both important as a privacy guarantee and as a useful tool to construct complex DP mechanisms (for example, the functional mechanism \citet{Zhang2012:FunctionalMR}).

\begin{prop}[Postprocessing: \citealp{Dwork2014:AFD}]\label{PostComp}
Let $X \in \mscr X^n$, $M:\mscr X^n\rightarrow \mscr Y$ be a random function, and $f:\mscr Y\rightarrow \mscr Z$ be any function. 
If $M$ is $\ep$-DP, then $f\circ M$ is $\ep$-DP.
\end{prop}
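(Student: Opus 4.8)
The plan is to verify the $\ep$-DP inequality directly from Definition \ref{DP}, reducing the statement about $f\circ M$ to the already-assumed statement about $M$ by pulling measurable sets back through $f$. First I would fix a measurable set $B \subseteq \mscr Z$ and a pair of databases $X, X' \in \mscr X^n$ with $\de(X,X') = 1$. The key observation is the elementary set-theoretic identity
\[
\{y \in \mscr Y \mid f(y) \in B\} = f^{-1}(B),
\]
so that for any random function $M$ we have the event equality $(f\circ M)(X) \in B \iff M(X) \in f^{-1}(B)$, and likewise for $X'$. Consequently
\[
P\bigl((f\circ M)(X') \in B \mid X'\bigr) = P\bigl(M(X') \in f^{-1}(B) \mid X'\bigr).
\]

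Next I would apply the $\ep$-DP hypothesis on $M$ to the measurable set $f^{-1}(B)$, which is legitimate precisely because $f$ is a (measurable) function and hence $f^{-1}(B)$ is a measurable subset of $\mscr Y$. This yields
\[
P\bigl(M(X') \in f^{-1}(B) \mid X'\bigr) \leq \exp(\ep)\, P\bigl(M(X) \in f^{-1}(B) \mid X\bigr)
  = \exp(\ep)\, P\bigl((f\circ M)(X) \in B \mid X\bigr),
\]
where the final equality again uses the event equality above. Chaining the three displays gives exactly $P((f\circ M)(X') \in B \mid X') \leq \exp(\ep) P((f\circ M)(X) \in B \mid X)$, which is the defining inequality of $\ep$-DP for $f\circ M$. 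Since $B$, $X$, and $X'$ were arbitrary, the proof is complete.

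The only subtlety — and the one point deserving a sentence of care rather than genuine difficulty — is measurability: one must implicitly assume $f$ is measurable (with respect to whatever $\sigma$-algebras are placed on $\mscr Y$ and $\mscr Z$) so that $f \circ M$ is itself a well-defined random function and $f^{-1}(B)$ is a measurable set to which the hypothesis on $M$ applies. In the deterministic-$f$ case stated here this is the whole story; if one wanted the more general version where $f$ is itself randomized, one would additionally condition on the internal randomness of $f$ (independent of $M$'s randomness) and integrate, but that extension is not needed for the statement as given. No geometry or properties of the $K$-norm mechanisms enter; this is purely a manipulation of conditional probabilities and preimages.
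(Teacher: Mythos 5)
Your argument is correct: the preimage/pullback reduction $P((f\circ M)(X)\in B) = P(M(X)\in f^{-1}(B))$ followed by applying the $\ep$-DP hypothesis of $M$ to the measurable set $f^{-1}(B)$ is exactly the standard proof of the postprocessing property, and your remark that $f$ must be measurable (and that randomized $f$ would require conditioning on its independent internal randomness and integrating) is the right caveat. Note that the paper itself does not prove this proposition — it is quoted from \citet{Dwork2014:AFD} — so there is no in-paper proof to compare against, but your proof is the one given in that reference and is complete as stated.
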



\subsection{Exploring Sensitivity Space}
\label{s:explore}
In this subsection, we provide a simple example which illustrates the relation between the sensitivity with respect to different norms, and the sensitivity space. In particular, we demonstrate the complex geometry of the sensitivity space when applied to non-linear statistics.

The $K$-norm sensitivity of $T$ is often studied as an algebraic object, being the supremum over a set of values. However, we can instead consider how it is geometrically related to the sensitivity space of Definition \ref{AdjacentOutput}. Geometrically, $\Delta_K(T)$ is the radius of the smallest $\lVert \cdot \rVert_K$-ball containing $S_T$. We study the $\ell_p$-sensitivity for the following extended example.

Throughout this paper the $\ell_1$, $\ell_2$, and $\ell_\infty$-mechs will make frequent appearances. All three of these mechanisms have been seen in the literature and they all have efficient sampling algorithms, which we detail in Section \ref{s:sampling}. On the other hand, while we consider arbitrary $K$-mechs later in the paper, we acknowledge that in general $K$-mechs are much harder to implement and sample from. We give a method to sample from arbitrary $K$-mechs in Subsection \ref{s:sampling}, via rejection sampling.
\begin{example}
\label{SquaredExample}
Suppose our database is $X = (X_1,\ldots, X_n) \in [-1,1]^n$, and our statistic of interest is $T(X) =(\sum_{i=1}^n X_i, \sum_{i=1}^n 2X_i^2)$. For this example, the sensitivity space is 
\begin{align*}
S_T&= \l\{ (u_1,u_2)\in \RR^2\middle| \begin{array}{ccl}u_1&=&x_1-x_2\\u_2&=&2x_1^2-2x_2^2\end{array}, \text{ for some } x_1,x_2\in [-1,1]\r\}\\
&= \l\{ (u_1,u_2) \in [-2,2]^2\middle |\  |u_2| \leq \begin{cases} 2-2(1-u_1)^2&\text{if } u_1\geq 0\\ 2-2(u_1+1)^2&\text{if } u_1<0\end{cases}\r\}.
\end{align*}
A plot of $S_T$ as a subset of $\RR^2$ is shown in Figure \ref{AOS}. For this example, we can work out the $\ell_1$, $\ell_2$, and $\ell_\infty$ sensitivities of $T$ exactly:
\begin{equation}
\Delta_1(T)=3.125, \qquad \Delta_2(T) =1/4\sqrt{71+8\sqrt{2}}, \qquad \Delta_\infty(T)=2.
\label{ExactSensitivity}
\end{equation}
Of these three, only $\Delta_\infty(T)$ can be computed by inspection. To compute $\Delta_1(T)$ and $\Delta_2(T)$, we solve the calculus problem $\max_{u_1\in [-2,2]} \lVert\binom{u_1}{u_2}\rVert$, where $u_2 = 2-2(u_1-1)^2$ (by symmetry of $S_T$, this is sufficient). The left plot of Figure \ref{AOS} shows the norm balls $\{ u\mid \lVert u\rVert_p\leq \Delta_p(T)\}$ for $p=1,2,\infty$.

\begin{figure}
\begin{center}
\includegraphics[width = .44\linewidth]{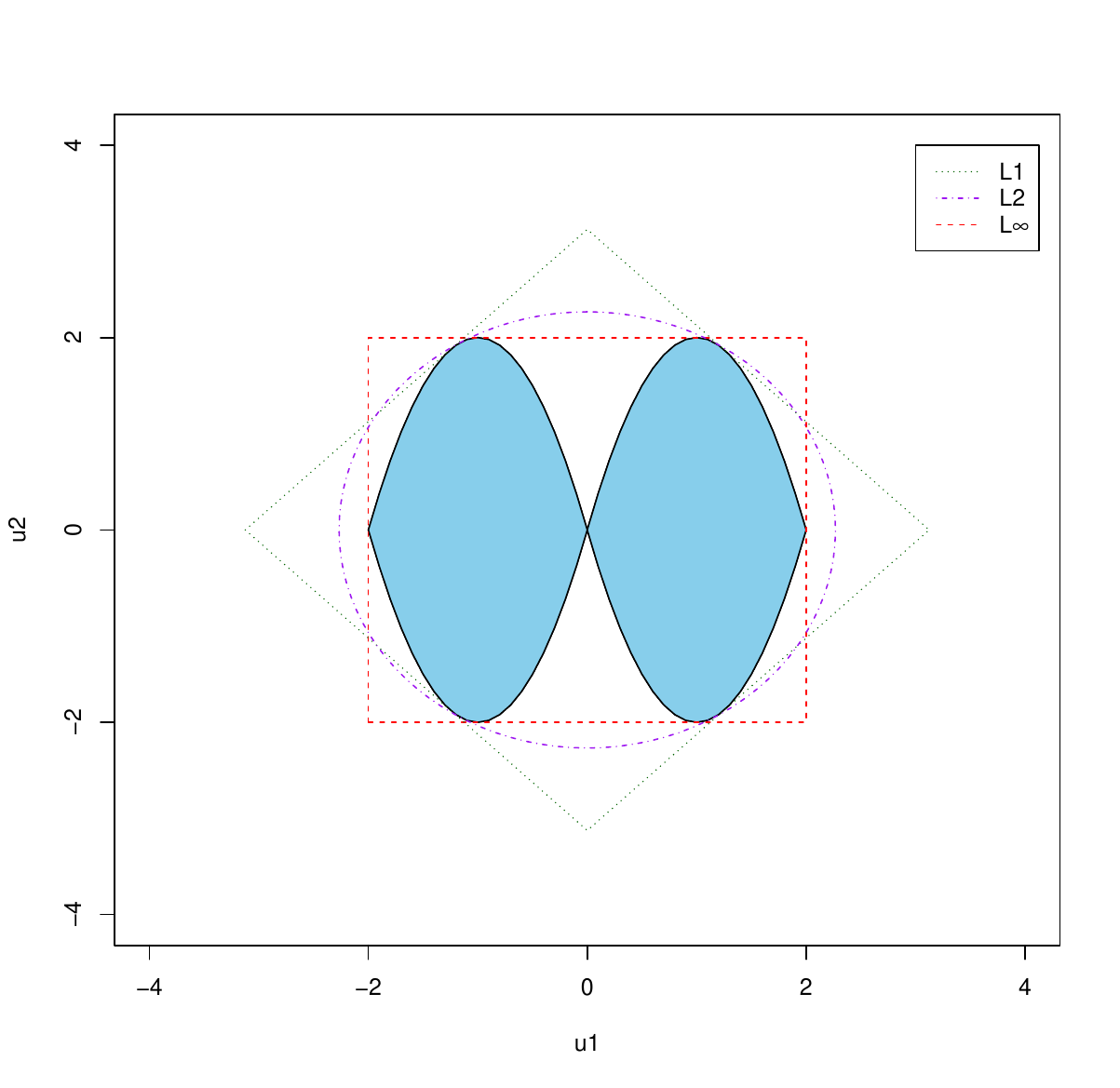}
\includegraphics[width = .44\linewidth]{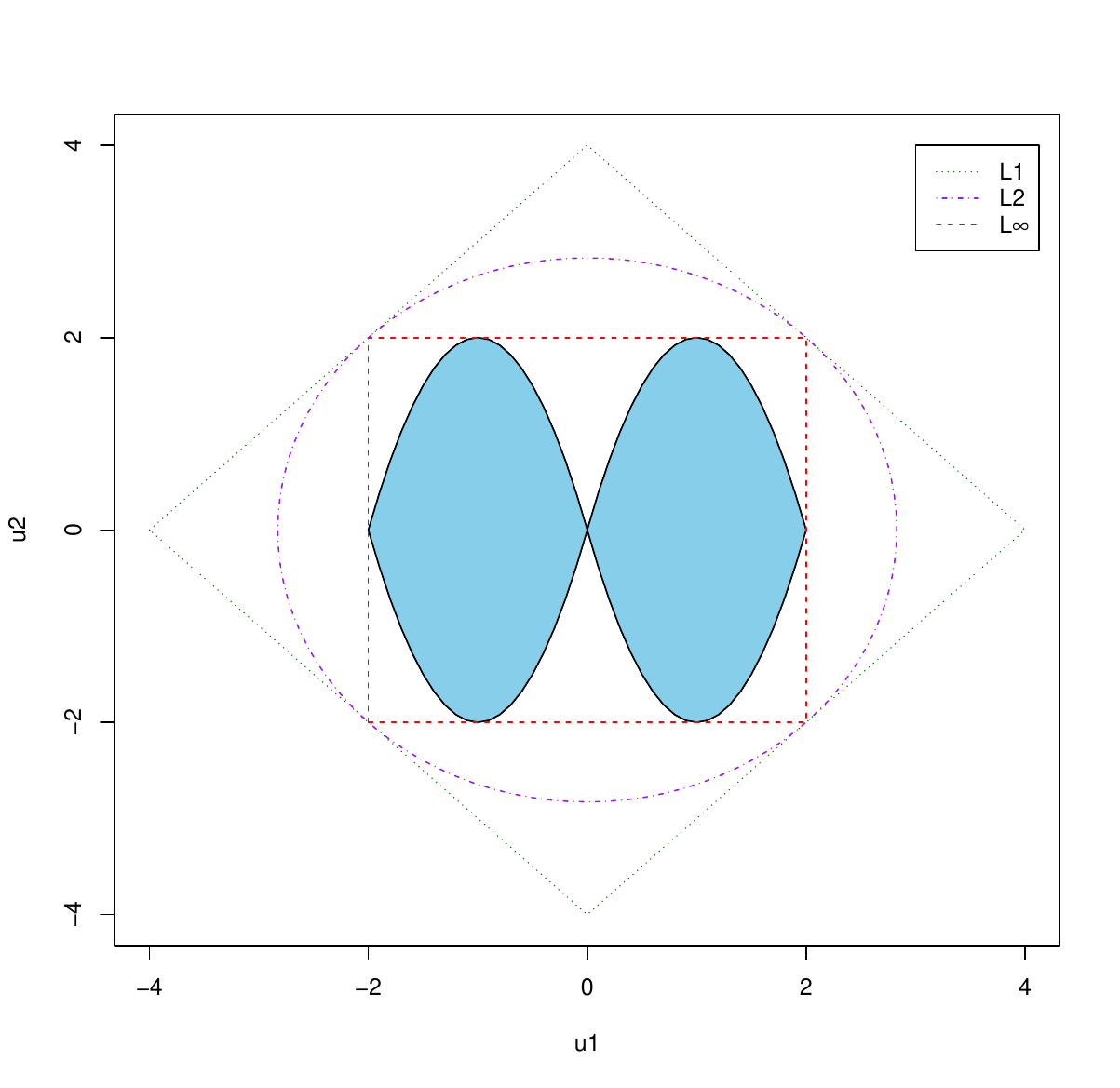}
\end{center}
\caption{The blue shaded area of both plots is the sensitivity space $S_T$ for Example \ref{SquaredExample}. In the left plot, the $\ell_p$ norm balls of radius $\Delta_p(T)$ as computed in Equation \eqref{ExactSensitivity}, for $p=1,2,\infty$ are plotted. In the right plot, the $\ell_p$ norm balls of radius $\Delta_p(T)$ using the approximations computed in Equation \eqref{ApproximateSensitivity}, for $p=1,2,\infty$ are plotted.}
\label{AOS}
\end{figure}

While for this example we are able to compute $\De_p(T)$ exactly for $p=1,2,\infty$, often in the literature, $\Delta_1(T)$ and $\Delta_2(T)$ are approximated in Equation \ref{ApproximateSensitivity}, visualized in the right plot of Figure \ref{AOS}. Such approximations are common; for example, see \citet{Zhang2012:FunctionalMR} and \citet{Yu2014}.  In these cases, the $\ell_1$ norm is used without considering other options, but by inspecting the right plot of Figure \ref{AOS} we see that $\ell_\infty$ is likely a better choice. After developing a formal and systematic method of choosing the best $K$-mech, we return to this problem in Examples \ref{ex:approx} and \ref{ex:volume} to confirm this intuition.

\begin{equation}
\begin{split}
\Delta_1(T)&= \sup_{\de(X,X')=1} \lVert T(X) - T(X')\rVert_1\leq \sum_{i=1}^2\sup_{\de(X,X')=1} |T_i(X) - T_i(X')|=4\\
\Delta_2(T)&
= \sup_{\de(X,X')=1} \sqrt{ \sum_{i=1}^2(T_i(X) - T_i(X'))^2}
\leq \sqrt{\sum_{i=1}^2\sup (T_i(X) - T_i(X'))^2 }= \sqrt 8.
\end{split}
\label{ApproximateSensitivity}
\end{equation}

 Note that when using these approximations, the norm balls $\{u\mid \lVert u\rVert_p\leq \Delta_p(T)\}$ are now larger, as seen in the right plot of Figure \ref{AOS}. On the other hand, in the left plot of Figure \ref{AOS}, we see that when we use the exact sensitivities via Equation \eqref{ExactSensitivity}, none of the norm balls we considered are contained in any of the others. In Section \ref{s:compare}, we develop criteria to determine the best norm ball in either scenario. If one norm ball is contained in another, we show in Theorems \ref{thm:tightness} and \ref{thm:variance} that in a strong sense the smaller norm ball is preferred. If however, neither norm ball is contained in the other, we propose using the norm ball with the smaller volume justified by Theorems \ref{thm:entropy} and Corollary \ref{cor:depth}. 
 In Subsections \ref{s:logisticSimulation}, \ref{s:linearSimulations}, and \ref{s:housing} we show that choosing the  $K$-mech based on our proposed criteria can have a substantial impact on finite sample utility. 
\end{example}

\section{Comparing $K$-Norm Mechanisms}
\label{s:compare}
In this section, we develop our main theoretical contributions that enable the comparison of $K$-norm mechanisms for a given statistic $T$ and sample size $n$, to determine which mechanism optimizes the finite sample utility. There are several different methods one could use to assess the optimality of a mechanism, for instance in terms of minimizing the expected value of a loss function. However in Section \ref{s:conclusions}, we provide a cautionary example, showing that for such objectives, the optimal mechanism changes for each loss function, and depends on the scaling of the statistic. Instead, we prefer to compare the mechanisms in more intrinsic measures which do not depend on the scaling or coordinate system used. In particular, we propose a novel framework consisting of three methods for comparing the $K$-mechs in terms of 1) stochastic ordering and statistical depth, 2) entropy, and 3) conditional variance. We show that each of these perspectives results in one of two stochastic orderings on the $K$-mechs. The first of the two orderings, which we call the \emph{containment order}, compares two $K$-mechs based on whether one associated norm ball is contained in the other. The other, called the \emph{volume order} is based on the volume of the associated norm balls. The containment order is a partial order, whereas the volume order is a total order which extends the containment order. We show that in both orderings, there is a minimal element which we call the ``optimal $K$-mech,'' whose corresponding norm is the convex hull of the sensitivity space.

\begin{defn}[Containment and Volume Orders]\label{def:orders}
  Let $V$ and $W$ be two random variables on $\RR^m$ with densities $f_V(v)= c \exp(-\frac{\ep}{\Delta_K}\lVert v\rVert_K)$ and $f_W(w) = c\exp(-\frac{\ep}{\Delta_H}\lVert w \rVert_H)$. We say that $V$ is preferred over $H$ in the \emph{containment order} if $\Delta_K\cdot K \subset \Delta_H \cdot H$. We say that $V$ is preferred over $H$ in the \emph{volume order} if $\la(\Delta_K \cdot K) \leq \la (\Delta_H \cdot H)$. 
\end{defn}

We propose a stochastic ordering which orders random variables based on the containment of certain ``level sets,'' which results in the containment order. We show that this stochastic ordering can also be motivated by concepts in statistical depth; based on the statistical depth literature, we arrive at an alternative stochastic ordering equivalent to the volume order. 
Another approach to comparing the $K$-norm mechanisms is based on their entropy, which we show is equivalent to the volume order. 
Finally, we compare the $K$-norm mechanisms in terms of their conditional variance, given a unit direction. This comparison also results in the containment order. 

In Figure \ref{fig:diagram}, a diagram illustrates our proposed framework, that is the relation between each of our perspectives, and the stochastic orderings associated with them. We can view the items in the top row as theoretical approaches to comparing the $K$-mechs, and items in the bottom row as decision criteria used to implement the comparison. 

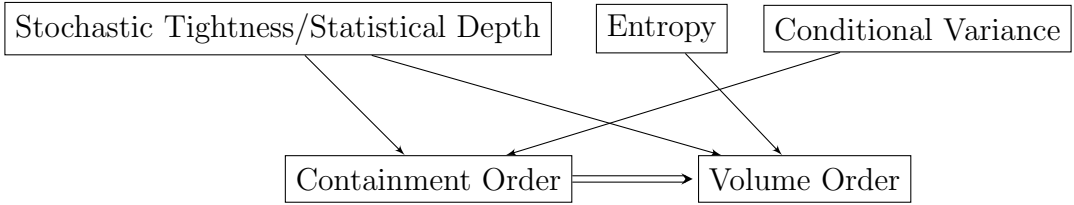
\begin{figure}
    \centering
\begin{tikzpicture}
\tikzset{vertex/.style = {draw,minimum size=1.5em}}
\tikzset{edge/.style = {->,> = latex'}}
\tikzstyle{vecArrow} = [thick, decoration={markings,mark=at position
   1 with {\arrow[semithick]{open triangle 60}}},
   double distance=1.4pt, shorten >= 5.5pt,
   preaction = {decorate},
   postaction = {draw,line width=1.4pt, white,shorten >= 4.5pt}]
\tikzstyle{innerWhite} = [semithick, white,line width=1.4pt, shorten >= 4.5pt]
\node[vertex] (stoch) at  (0,0) {Stochastic Tightness/Statistical Depth};
\node[vertex] (entropy) at (5.1,0) {Entropy};
\node[vertex] (cond) at (8.5,0) {Conditional Variance};

\node[vertex] (contain) at (2,-2) {Containment Order};
\node[vertex] (volume) at (7,-2) {Volume Order};

\draw[edge] (stoch) to (contain);
\draw[edge] (stoch) to (volume);
\draw[edge] (entropy) to (volume);
\draw[edge] (cond) to (contain);

 \draw[double,double distance=2pt,->,>=stealth,shorten >=2pt] (contain) to (volume);

\end{tikzpicture}
\caption{The top row represents the three theoretical perspectives, the bottom represents the two possible decision rules. A standard arrow indicates that a given theoretical perspective justifies the corresponding decision rule. The double arrow indicates that if two mechanisms are ordered with respect to containment, this implies they are ordered with respect to volume. }
\label{fig:diagram}
\end{figure}

\subsection{Stochastic Tightness}\label{s:stochastic}
In this section, we propose a multivariate extension of stochastic dominance, called \emph{stochastic tightness}\footnote{Not to be confused with the notion of \emph{tightness} from measure theory.} which orders the $K$-norm mechanisms based on the geometry of level sets we define and refer to as \emph{concentration sets}, and can be viewed as a multivariate extension of stochastic dominance. The idea is based on showing that the level sets containing probability $\alpha$ of one distribution are always contained in the corresponding level sets of another distribution. In Subsection \ref{s:depth}, we show that these level sets can also be motivated by concepts in the field of statistical depth. When applied to the $K$-norm mechanisms, we show that this ordering is equivalent to the containment order.

Recall that \emph{stochastic dominance} is a partial order on real-valued random variables, originally proposed in the context of decision theory \citep{Quirk1962}. A random variable $X$ stochastically dominates $Y$ if $F_X(t)\leq F_Y(t)$ for all $t\in \RR$, where $F(\cdot)$ represents the cumulative distribution function. 
Intuitively if $X$ stochastically dominates $Y$, then $X$ takes larger values than $Y$ in a strong sense.

There have been several efforts to extend stochastic dominance to 1) compare the dispersion of random variables about a center and 2) allow for the comparison of multivariate distributions. A detailed summary of these various extensions and their connection to statistical depth can be found in \citet{Zuo2000Nonparam}. 
Here, we introduce \emph{stochastic tightness}, a multivariate stochastic ordering with properties similar to stochastic dominance, which fits into the framework of \citet{Zuo2000Nonparam}. The idea is that the concentration sets, defined in Definition \ref{def:concentration} serve a similar purpose as the cumulative distribution function. While we use the notion of stochastic tightness in this paper to compare $K$-norm mechanisms, it is in fact applicable for a wider set of random variables and may be of independent interest in the statistical community.

First, we define the $\alpha$-concentration set of a random variable. 

\begin{defn}[Concentration Sets]\label{def:concentration}
  Let $X$ be a unimodal, continuous random variable on $\RR^m$ with center $a$. Assume further that $f_X$ is decreasing along any ray away from the center (i.e. $f(x)\leq f(\alpha x+(1-\alpha)a)$ for  all $x$ and all $\alpha \in [0,1]$), and that for all $t>0$, $P(\{x\mid f(x)=t\})=0$. For any $\alpha\in (0,1)$, the set $S_X^\alpha$ is defined to be the smallest (wrt Lebesgue measure) set such that $P(X\in S_{X}^\alpha)\geq \alpha$. We call $S_X^\alpha$ a \emph{$\alpha$-concentration set}.
\end{defn}

In Definition \ref{def:concentration}, the assumptions imply that the $\alpha$-concentration set is unique (see property 1) of Lemma \ref{lem:Sprops}). Note that all of these assumptions are easily verified for any $K$-mech.

\begin{defn}[Stochastic Tightness]\label{def:stoch}
  Let $X$ and $Y$ be random variables on $\RR^m$ with center $a$, which satisfy the assumptions of Definition \ref{def:concentration}. We say that $X$ is \emph{stochastically tighter about $a$} than $Y$ if for all $\alpha\in (0,1)$, $S_X^\alpha \subseteq  S_Y^\alpha$.
\end{defn}

Before we investigate the $\alpha$-concentration sets of the $K$-norm mechanisms, we require a lemma giving the marginal distribution of the magnitude of a $K$-norm random variable, sa measured by the $K$-norm. Conveniently, this works out as a gamma random variable. This result is similar to the decomposition given in \citet{Hardt2010:GeometryDP}, showing that a $K$-norm random variable can be generated by multiplying a gamma and a uniform random variable. In \citet{Kifer2012:PrivateCERM}, a $K$-norm random variable is referred to as a multivariate gamma distribution, which this result supports.

\begin{lem}[$K$-norm Marginal Distribution]\label{lem:gamma}
  For the random variable $V\in \RR^m$ with density $f_V(v) \propto \exp(-a \lVert v\rVert_K)$, the norm of $V$ is marginally distributed as  $\lVert V \rVert_K \sim \mathrm{Gamma}(m,a)$. 
\end{lem}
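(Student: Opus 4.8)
The plan is to compute the distribution of $R = \lVert V\rVert_K$ directly via the coarea / radial decomposition of Lebesgue measure adapted to the norm $\lVert\cdot\rVert_K$. First I would note that the density $f_V(v) = c\exp(-a\lVert v\rVert_K)$ with $c = \big(\Gamma(m+1)\lambda(a^{-1}K)\big)^{-1}$, using the normalization already recorded in Proposition \ref{prop:KNorm} (with the roles of $\ep/\Delta$ played by $a$). The key geometric fact is that for any norm ball $K\subset\RR^m$ and any $r>0$, $\lambda(rK) = r^m\lambda(K)$, since scaling by $r$ in $\RR^m$ multiplies Lebesgue measure by $r^m$. Consequently the ``sphere'' $\{v : \lVert v\rVert_K = r\}$ carries, under the pushforward of Lebesgue measure to the radial variable, a density proportional to $\frac{d}{dr}\lambda(rK) = m\,r^{m-1}\lambda(K)$.

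Concretely, I would write, for any $r\ge 0$,
\[
P(\lVert V\rVert_K \le r) = \int_{rK} c\exp(-a\lVert v\rVert_K)\,dv.
\]
Then I would change variables to ``polar-like'' coordinates adapted to $K$: write $v = s\,\omega$ with $s = \lVert v\rVert_K \ge 0$ and $\omega$ on the unit sphere $\partial K$, so that $dv = s^{m-1}\,ds\,d\mu(\omega)$ for the appropriate surface measure $\mu$ on $\partial K$ with total mass $\mu(\partial K) = m\,\lambda(K)$ (this is exactly the statement that $\lambda(K) = \int_0^1 s^{m-1}\,ds\cdot\mu(\partial K) = \mu(\partial K)/m$). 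This gives
\[
P(\lVert V\rVert_K \le r) = c\,\mu(\partial K)\int_0^r s^{m-1}e^{-as}\,ds = c\,m\,\lambda(K)\int_0^r s^{m-1}e^{-as}\,ds.
\]
Differentiating in $r$ yields the density $f_R(r) = c\,m\,\lambda(K)\,r^{m-1}e^{-ar}$ on $r>0$. Finally I would verify that the constant matches the Gamma$(m,a)$ normalization: since $c = \big(\Gamma(m+1)\lambda(a^{-1}K)\big)^{-1} = \big(m!\,a^{-m}\lambda(K)\big)^{-1}$, we get $c\,m\,\lambda(K) = \frac{m}{m!\,a^{-m}} = \frac{a^m}{(m-1)!} = \frac{a^m}{\Gamma(m)}$, so $f_R(r) = \frac{a^m}{\Gamma(m)}r^{m-1}e^{-ar}$, which is exactly the Gamma$(m,a)$ density (shape $m$, rate $a$). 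This also re-derives the normalizing constant, giving an internal consistency check.

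The main obstacle is making the ``polar coordinates adapted to a general norm ball'' step rigorous: unlike the Euclidean case there is no canonical rotationally invariant surface measure, so one must define $\mu$ on $\partial K$ as the measure induced by the map $(s,\omega)\mapsto s\omega$ from $(0,\infty)\times\partial K$ and check the Jacobian factor is $s^{m-1}$. A clean way to sidestep the differential-geometric bookkeeping is to avoid $\partial K$ altogether: directly substitute $v = rw$ with $w\in K$ in the integral over $rK$, obtaining $\int_{rK} e^{-a\lVert v\rVert_K}\,dv = r^m\int_K e^{-ar\lVert w\rVert_K}\,dw$, and then handle the $w$-integral by the layer-cake formula $\int_K e^{-ar\lVert w\rVert_K}\,dw = \int_0^\infty \lambda(\{w\in K : e^{-ar\lVert w\rVert_K} > t\})\,dt$, reducing everything to the elementary identity $\lambda(\{w : \lVert w\rVert_K \le \rho\}\cap K) = \min(\rho,1)^m\lambda(K)$. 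Either route works; I would present the layer-cake version since it only uses the homogeneity $\lambda(\rho K) = \rho^m\lambda(K)$ and no surface measures. The remaining steps are routine: monotone differentiation under the integral sign and arithmetic with $\Gamma(m+1) = m!$.
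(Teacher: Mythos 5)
Your proof is correct, but it takes a genuinely different route from the paper. The paper computes the moment generating function: for $0\le t<a$ it writes $\EE\exp(t\lVert V\rVert_K)$ as the ratio of the normalizing constants of $\exp(-(a-t)\lVert v\rVert_K)$ and $\exp(-a\lVert v\rVert_K)$, and since that constant scales as $a^{-m}$ the ratio is $(1-t/a)^{-m}$, which is recognized as the Gamma$(m,a)$ MGF. You instead compute the law of $R=\lVert V\rVert_K$ directly, via $P(R\le r)=\int_{rK}f_V$ together with a radial decomposition (or, in your cleaner variant, the substitution $v=rw$ plus the layer-cake formula), arriving at the explicit density $\frac{a^m}{\Gamma(m)}r^{m-1}e^{-ar}$. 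Both arguments ultimately rest on the same single geometric fact, the homogeneity $\lambda(rK)=r^m\lambda(K)$; the paper's version is shorter because it only needs to know how the normalizing constant depends on the rate, whereas yours does a bit more bookkeeping but yields more: the exact radial density (not just a transform identification), an independent re-derivation of the normalizing constant in Proposition \ref{prop:KNorm} as a consistency check, and a decomposition that is reused implicitly in Lemmas \ref{lem:concentration} and \ref{lem:conditional}. Your caution about the surface measure on $\partial K$ is well placed, and the layer-cake workaround you propose does legitimately avoid it. Either proof is acceptable; yours is the more elementary and self-contained of the two.
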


In Lemma \ref{lem:concentration}, we show that the $\alpha$-concentration sets for a $K$-norm random variable are dilations of $K$. This result is based on two facts: 1) for unimodal and continuous distributions, concentration sets are in 1-1 correspondence with level sets of the density \citep[Theorem 9.3.2]{Casella2002}, and 2) the density of a $K$-norm random variable is constant for values with the same $K$-norm. 

\begin{lem}[Concentration of $K$-mech]\label{lem:concentration}
  For the random variable $V\in \RR^m$ with density $f_V(v) \propto \exp(-a \lVert v\rVert_K)$, the $\alpha$-concentration set is
  \[S_V^\alpha = \{v\mid \lVert v\rVert_K \leq t\} = tK,\]
  where $t$ is the $\alpha$-quantile of $\mathrm{Gamma}(m,a)$.
\end{lem}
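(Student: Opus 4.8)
The plan is to combine the two facts that are already highlighted before the statement: that concentration sets of a unimodal continuous density are exactly the superlevel sets of the density \citep[Theorem 9.3.2]{Casella2002}, and that the density $f_V$ is radially constant with respect to $\lVert\cdot\rVert_K$. First I would record that $f_V(v) = c\exp(-a\lVert v\rVert_K)$ is a strictly decreasing function of $\lVert v\rVert_K$, so for any threshold $s>0$ the superlevel set $\{v\mid f_V(v)\geq s\}$ is precisely a set of the form $\{v\mid \lVert v\rVert_K\leq r\}$ for the corresponding radius $r = -\tfrac1a\log(s/c)$ (with the empty set and all of $\RR^m$ as the degenerate cases). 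Since $K$ is a norm ball, $\{v\mid \lVert v\rVert_K\leq r\} = rK$, so the candidate concentration sets are exactly the dilations $rK$, $r\ge 0$.

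Next I would pin down which dilation has probability exactly $\alpha$. By Lemma \ref{lem:gamma}, $\lVert V\rVert_K\sim \mathrm{Gamma}(m,a)$, so
\[
P(V\in rK) = P(\lVert V\rVert_K\leq r) = G(r),
\]
where $G$ is the $\mathrm{Gamma}(m,a)$ CDF. This $G$ is continuous and strictly increasing on $(0,\infty)$ (the gamma density is positive there), so there is a unique $t$ with $G(t)=\alpha$, namely the $\alpha$-quantile, and $P(V\in tK)=\alpha$ exactly. Then I invoke \citep[Theorem 9.3.2]{Casella2002}: among all measurable sets with probability at least $\alpha$, the minimal-Lebesgue-measure one is a superlevel set of the density; combined with the first paragraph this superlevel set must be some $rK$, and combined with the strict monotonicity of $G$ the only dilation achieving probability $\geq\alpha$ with minimal volume is $r=t$. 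Hence $S_V^\alpha = tK = \{v\mid\lVert v\rVert_K\leq t\}$, which is the claim. One should also note the hypotheses of Definition \ref{def:concentration} are met here (unimodal, centered at $0$, radially decreasing, and $P(\{f_V=s\})=0$ for every $s$ because the level set $\{\lVert v\rVert_K = r\}$ has Lebesgue measure zero), so the concentration set is well defined and unique.

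I do not anticipate a serious obstacle; this is essentially an application of the highest-density-region characterization plus Lemma \ref{lem:gamma}. The one place to be slightly careful is the appeal to \citep[Theorem 9.3.2]{Casella2002}: that result is usually phrased for real-valued random variables or for a specific credible-set construction, so I would either state the multivariate version I need explicitly (the minimal-volume set of prescribed probability for a continuous density with $P(\{f=s\})=0$ is $\{f\geq s\}$ for the appropriate $s$) and note it follows from a standard exchange/bathtub argument, or simply give the one-line direct argument: if $A$ has $\lambda(A)=\lambda(tK)$ but $A\neq tK$, then $\lambda(A\setminus tK)=\lambda(tK\setminus A)$ while $f_V> s$ on $tK\setminus A$ and $f_V\le s$ on $A\setminus tK$ (where $s$ is the density value on the boundary of $tK$), forcing $P(V\in A)\le P(V\in tK)=\alpha$, with equality only if the symmetric difference is null. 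This makes the minimality self-contained and avoids over-relying on the exact statement of the cited theorem.
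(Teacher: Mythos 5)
Your proposal is correct and follows essentially the same route as the paper's proof: invoke the highest-density-region characterization \citep[Theorem 9.3.2]{Casella2002} to see that $S_V^\alpha$ is a superlevel set of $f_V$, observe that such sets are dilations $rK$ because the density depends on $v$ only through $\lVert v\rVert_K$, and then use Lemma \ref{lem:gamma} to identify the radius $t$ as the $\alpha$-quantile of $\mathrm{Gamma}(m,a)$. Your added bathtub argument and hypothesis checks are more careful than what the paper writes out, but they are elaborations of the same argument rather than a different one.
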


Finally, we combine Definition \ref{def:stoch} and Lemma \ref{lem:concentration} to show that one $K$-norm mechanism is stochastically tighter than another if its scaled norm ball is contained in the other.

\begin{thm}[Stochastic Tightness of $K$-mechs]\label{thm:tightness}
Let $K_V$ and $K_W$ be two norm balls in $\RR^m$, let $\Delta_V$ and $\Delta_W$ be positive real numbers. Define the two random variables $V$ and $W$ on $\RR^m$ with densities $f_V(v) \propto \exp\left(\frac{-\ep}{\Delta_V}\lVert v\rVert_{K_V}\right)$ and $f_W(w) \propto \exp\left(\frac{-\ep}{\Delta_W} \lVert w \rVert_{K_W}\right)$.  The random variable $V$ is stochastically tighter about zero than $W$ if and only if $\Delta_V \cdot K_V\subset \Delta_W \cdot K_W$.
\end{thm}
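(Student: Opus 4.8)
The plan is to reduce the theorem to the purely geometric statement about containment of scaled norm balls, using Lemma \ref{lem:concentration} to identify the $\alpha$-concentration sets explicitly. First I would observe that both $V$ and $W$ satisfy the hypotheses of Definition \ref{def:concentration} (unimodal, continuous, log-concave hence decreasing along rays from the origin, and with level sets of measure zero since the density is a strictly decreasing function of a norm), so stochastic tightness about zero is well defined for this pair. Then, applying Lemma \ref{lem:concentration} with $a = \ep/\Delta_V$ and $a = \ep/\Delta_W$ respectively, I get $S_V^\alpha = t_V(\alpha)\, K_V$ and $S_W^\alpha = t_W(\alpha)\, K_W$, where $t_V(\alpha)$ is the $\alpha$-quantile of $\mathrm{Gamma}(m,\ep/\Delta_V)$ and similarly for $t_W(\alpha)$. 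By the scaling property of the Gamma distribution, $t_V(\alpha) = (\Delta_V/\ep)\, q(\alpha)$ and $t_W(\alpha) = (\Delta_W/\ep)\, q(\alpha)$, where $q(\alpha)$ is the $\alpha$-quantile of $\mathrm{Gamma}(m,1)$. Hence $S_V^\alpha = \frac{q(\alpha)}{\ep}\,(\Delta_V\cdot K_V)$ and $S_W^\alpha = \frac{q(\alpha)}{\ep}\,(\Delta_W\cdot K_W)$.

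With these explicit formulas the equivalence is almost immediate. For the ``if'' direction: if $\Delta_V\cdot K_V\subset \Delta_W\cdot K_W$, then scaling both sides by the positive scalar $q(\alpha)/\ep$ preserves the containment, so $S_V^\alpha\subseteq S_W^\alpha$ for every $\alpha\in(0,1)$, which is exactly stochastic tightness about zero. For the ``only if'' direction: suppose $S_V^\alpha\subseteq S_W^\alpha$ for all $\alpha$; pick any single $\alpha$ and divide both sides by $q(\alpha)/\ep>0$ to recover $\Delta_V\cdot K_V\subseteq \Delta_W\cdot K_W$. (A small point: the theorem states $\subset$ in the conclusion; I would either read this as $\subseteq$ throughout or note that since $q(\alpha)/\ep$ ranges over all positive reals as $\alpha$ ranges over $(0,1)$, the set containments at all levels are all equivalent to the single containment of the scaled balls, so strictness is a non-issue — I would just be consistent with $\subseteq$.)

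The only genuinely non-routine ingredient is justifying that the $\alpha$-concentration sets are exactly the dilates of the norm ball, and that is already packaged in Lemma \ref{lem:concentration} (which in turn rests on Lemma \ref{lem:gamma} for the Gamma marginal and on the correspondence between concentration sets and density level sets for unimodal continuous laws). So the main ``obstacle'' is really bookkeeping: making sure the normalizing constants and the $\ep/\Delta$ rate parameters are tracked correctly through the Gamma scaling, and confirming that the $\alpha$-quantile $q(\alpha)$ of $\mathrm{Gamma}(m,1)$ is strictly positive and finite for all $\alpha\in(0,1)$ so that multiplying and dividing by it is legitimate. Once Lemma \ref{lem:concentration} is invoked, the proof is three or four lines of elementary set algebra, so I would keep it short and just cite Lemma \ref{lem:concentration} and the Gamma scaling relation.
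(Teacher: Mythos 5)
Your proof is correct and takes essentially the same route as the paper's: both invoke Lemma \ref{lem:concentration} to identify the $\alpha$-concentration sets as dilates of the norm balls and then reduce stochastic tightness to containment of the scaled balls. You are in fact a bit more careful than the paper, which compresses the Gamma-scaling bookkeeping into the phrase ``for the same value of $t$'' and leaves the ``only if'' direction implicit.
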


We see from Theorem \ref{thm:tightness} that the ordering based on stochastic tightness is equivalent to the containment ordering, as defined in Definition \ref{def:orders}. This gives us our first way of comparing the $K$-norm mechanisms to determine the optimal mechanism.

\begin{remark}
  While in the one-dimensional setting, stochastic dominance implies that the expected value of any increasing objective is maximized, an analogous property does not hold in multivariate settings. In fact, it may be the case that the $K$-mech is stochastically tighter than the $H$-mech, and yet many common loss functions are not optimized by $K$. See a cautionary example in Section \ref{s:conclusions}.
\end{remark}

\subsubsection{Statistical Depth}\label{s:depth}
Next, we construct a depth function whose depth regions coincide with the $\alpha$-concentration sets, proposed in the previous section. The $\alpha$-concentration sets are in fact a special case of \emph{depth regions}, sets determined by a \emph{depth function}. Based on this connection, the statistical depth literature provides additional justification for the stochastic tightness ordering, and also proposes a second ordering which we show is equivalent to the volume ordering.


First, we review the axioms and terminology of statistical depth, and then construct a depth function whose depth regions agree with the concentration sets of Definition \ref{def:concentration}. The first measure of statistical depth was introduced by \citet{Tukey1975}, which is commonly called the  half-space depth, or Tukey depth. An alternative to the Tukey depth was introduced by \citet{Liu1988,Liu1990}, and these works provide unifying axioms that characterize statistical depth. Such axioms were further formalized in \citet{Zuo2000General}. The concepts in statistical depth have also been connected to multivariate notions of order statistics, quantiles, and outlyingnesss measures \citep{Serfling2006}. See \citet{Mosler2013} for a review of the statistical depth literature and several examples of statistical depth functions.

\begin{defn}
  [Depth Function \citep{Zuo2000General}]\label{def:depth}
  A \emph{depth function} is a map $D:\RR^m \rightarrow \RR$ such that for any random variable $X$ on $\RR^m$ (or a specific sub-class of random variables), the following properties hold:
  \begin{itemize}
  \item [(A1)](Affine invariance) For any $x\in \RR^m$, any full rank $A\in \RR^{m\times m}$, and any $b\in \RR^m$, we have that $D_{{AX + b}}(Ax+b) = D_{X}(x)$.
  \item [(A2)] (Maximality at the center) If $x_0$ is the center of $X$, then $D_X(x_0)  =  \max_{x\in \RR^m} D_X(x)$
  \item [(A3)] (Linear monotonicity relative to the center) $D_X(x) \leq D_X((1-\alpha)x_0 + \alpha x)$ for all $\alpha \in [0,1]$ and all $x \in \RR^m$.
  \item [(A4)] (Vanishing at infinity) $\lim_{\lVert x \rVert \rightarrow \infty} D_X(x) = 0$.
  \end{itemize}
\end{defn}

Intuitively, a \emph{depth function} provides a measure of how close a point in $\RR^m$ is to the ``center'' of a given distribution, with higher values indicating that the point is closer. The first axiom requires that the depth function is invariant under affine transformations, so that the depth measure does not depend on the coordinate system. The third axiom says that the depth function decreases when moving away from the center.

One can use the depth function to draw contours of the distribution illustrating its shape. These contours are called \emph{depth regions}, sets of  points with depth greater than a given value. One can also construct depth regions with a specified probability content $\alpha$ as in Definition \ref{def:region}, which are conceptually similar to the $\alpha$-concentration sets of Definition \ref{def:concentration}. Depth regions with probability content $\alpha$ have also been referred to as \emph{depth lifts} \citep{Mosler2013}.

\begin{defn}
  [Depth Region \citep{zuo2000structural,Mosler2013}]\label{def:region}
  Given a depth function $D:\RR^m\rightarrow \RR$, the region of depth $d$ for the random variable $X$ is $C_X(d) \defeq \{x\in \RR^m \mid D_X(x) \geq d\}$. The \emph{depth region with probability content $\alpha$},  is $C_X(d(\alpha))$, where $d(\alpha) \defeq \inf\{d\in \RR\mid P_X(C_X(d))\geq \alpha\}$.
\end{defn}

  Based off of statistical depth, there are two natural stochastic orderings that have been considered in the literature. The first orders distributions based on the containment of depth regions \citep{Mosler2013}. The second order, proposed by \citet{Zuo2000Nonparam} orders two distributions based on the volume of the depth regions.

\begin{defn}
 [More Dispersed \citep{Mosler2013} and More Scattered  \citep{Zuo2000Nonparam}]
 Let $D:\RR^m \rightarrow \RR$ be a depth function , and let $X$ and $Y$ be two random variables on $\RR^m$. We say that $X$ is \emph{more dispersed} than $Y$ if $C_X(d(\alpha))\subset C_Y(d(\alpha))$ for all $\alpha\in (0,1)$. We say that $X$ is \emph{more scattered} than $Y$ if $\la(C_X(d(\alpha)))\geq \la(C_Y(d(\alpha)))$ for all $\alpha\in (0,1)$. 
\end{defn}


In Theorem \ref{thm:depth}, we construct a depth function whose depth regions agree with the $\alpha$-concentration sets. Verifying the axioms of Definition \ref{def:depth} requires a few technical properties of concentration sets, stated in Lemma \ref{lem:Sprops}, found in the Appendix. 

\begin{thm}[Depth for Concentration Sets]\label{thm:depth}
  Assume that a random variable $X$ on $\RR^m$ satisfies all of the conditions in Definition \ref{def:concentration}. Assume further that for all $x\in \RR^m$, $\lim_{t\rightarrow \infty} f_X(tx)=0$. Then $S_X^\alpha$ is a depth region with probability content $\alpha$ corresponding to the depth function $D_X(x) = 1- \inf\{\alpha\mid x\in S_X^\alpha\}$.
\end{thm}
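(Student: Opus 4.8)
The plan is to prove the theorem in two stages: first, verify that the proposed map $D_X(x) = 1 - \inf\{\alpha \mid x \in S_X^\alpha\}$ satisfies the four axioms (A1)--(A4) of Definition \ref{def:depth} on the class of random variables described in Definition \ref{def:concentration}; and second, identify the depth regions $C_X(d(\alpha))$ of this depth function with the concentration sets $S_X^\alpha$, which is the actual assertion. Throughout I would use the structural facts about concentration sets collected in Lemma \ref{lem:Sprops}: that $S_X^\alpha$ is the super-level set $\{f_X \geq t_\alpha\}$ for a unique threshold $t_\alpha>0$ with $P(X\in S_X^\alpha)=\alpha$; that the family $\{S_X^\alpha\}_{\alpha\in(0,1)}$ is nondecreasing in $\alpha$, star-shaped about the center $a$, and contains $a$; that it is affinely equivariant, $S_{AX+b}^\alpha = A\,S_X^\alpha + b$; and that $\alpha\mapsto t_\alpha$ is continuous and strictly decreasing (this last using $P(\{f_X=t\})=0$).

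For the axioms: (A1) is immediate from affine equivariance, since $x\in S_X^\alpha \iff Ax+b\in S_{AX+b}^\alpha$ forces the two defining infima to agree, hence $D_{AX+b}(Ax+b)=D_X(x)$. For (A2), since $a\in S_X^\alpha$ for every $\alpha\in(0,1)$ the infimum at $x=a$ is $0$, so $D_X(a)=1$, while $D_X\le 1$ everywhere because the infimum is nonnegative; thus the center maximizes $D_X$. For (A3), star-shapedness of each $S_X^\beta$ about $a$ gives that $x\in S_X^\beta$ implies $(1-t)a+tx\in S_X^\beta$ for $t\in[0,1]$, so $\{\beta\mid x\in S_X^\beta\}\subseteq\{\beta\mid (1-t)a+tx\in S_X^\beta\}$ and taking infima yields $D_X(x)\le D_X((1-t)a+tx)$. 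For (A4), it suffices to show each $S_X^\alpha$ with $\alpha<1$ is bounded: then for any $\alpha_0<1$ one has $x\notin S_X^{\alpha_0}$ once $\lVert x\rVert$ is large, hence $\inf\{\alpha\mid x\in S_X^\alpha\}\ge\alpha_0$ and $D_X(x)\le 1-\alpha_0$, giving $D_X(x)\to 0$. Boundedness is where the extra hypothesis $\lim_{t\to\infty}f_X(tx)=0$ enters: if $S_X^\alpha$ contained points $y_n=a+r_n\theta_n$ with $r_n\to\infty$ and (along a subsequence) $\theta_n\to\theta^*$ on the unit sphere, then ray-monotonicity of $f_X$ about $a$ gives $a+R\theta_n\in S_X^\alpha$ for every fixed $R$ and all large $n$; since $S_X^\alpha$ is closed, $f_X(a+R\theta^*)\ge t_\alpha>0$ for all $R$, contradicting the ray-decay hypothesis.

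For the second stage, observe $C_X(d)=\{x\mid D_X(x)\ge d\}=\{x\mid \inf\{\alpha\mid x\in S_X^\alpha\}\le 1-d\}$. Using $S_X^\alpha=\{f_X\ge t_\alpha\}$ with $\alpha\mapsto t_\alpha$ continuous and strictly decreasing, the defining infimum is attained, so $D_X(x)\ge d$ holds exactly when $x\in S_X^{1-d}$; together with nestedness this yields $C_X(d)=S_X^{1-d}$ for $d\in(0,1)$. Then $P_X(C_X(d))=P_X(S_X^{1-d})=1-d$ by Lemma \ref{lem:Sprops}, so the probability-content level $d(\alpha)$ of Definition \ref{def:region} is $1-\alpha$, whence $C_X(d(\alpha))=S_X^{\alpha}$, i.e.\ $S_X^\alpha$ is precisely the depth region with probability content $\alpha$.

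The step I expect to be the main obstacle is the verification of (A4): showing that the concentration sets $S_X^\alpha$ with $\alpha<1$ are bounded is exactly where the hypothesis $\lim_{t\to\infty}f_X(tx)=0$ is consumed, and it does \emph{not} follow from finiteness of Lebesgue measure alone — a ray-decreasing density can have finite-measure super-level sets that are unbounded — so the argument must genuinely combine the pointwise-along-rays decay of $f_X$ with continuity and compactness of the sphere. A secondary technical point is justifying that the infimum $\inf\{\alpha\mid x\in S_X^\alpha\}$ is attained (and taken over a nonempty set on the support, with the convention $D_X\equiv 0$ off the support), so that the set identity $C_X(d)=S_X^{1-d}$ holds on the nose rather than only up to a $\lambda$-null boundary; this is handled by the continuity and strict monotonicity of $\alpha\mapsto t_\alpha$ from Lemma \ref{lem:Sprops} together with $P(\{f_X=t\})=0$.
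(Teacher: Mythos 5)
Your proposal is correct and follows the same overall strategy as the paper --- verify axioms (A1)--(A4) using the structural properties of concentration sets collected in Lemma \ref{lem:Sprops} --- but it is more careful in two places, and these differences are worth recording. First, for (A4) the paper only establishes the radial statement $\lim_{t\to\infty}D_X(tx)=0$ for each fixed $x$, by choosing $t$ with $f_X(tx)\leq C(1-\ga)$; the axiom as stated in Definition \ref{def:depth} is the stronger $\lim_{\lVert x\rVert\to\infty}D_X(x)=0$, and your route --- proving that each $S_X^\alpha$ with $\alpha<1$ is bounded via a compactness argument on the sphere combined with ray-monotonicity and the ray-decay hypothesis --- actually delivers the full uniform version. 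You are also right that this is where the hypothesis $\lim_{t\to\infty}f_X(tx)=0$ is genuinely consumed and that finite Lebesgue measure of the super-level sets would not suffice. Second, you explicitly carry out the identification $C_X(d)=S_X^{1-d}$, $d(\alpha)=1-\alpha$, and hence $C_X(d(\alpha))=S_X^\alpha$, including the attainment of the infimum needed for the set identity to hold exactly; the paper treats this step as immediate and omits it. One small caveat: you attribute continuity and strict monotonicity of $\alpha\mapsto t_\alpha$ to Lemma \ref{lem:Sprops}, which only records that $C(\alpha)$ is decreasing; those additional properties do follow from the standing assumptions (continuity of the density and $P(\{f_X=t\})=0$), but they should be stated and justified rather than cited. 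With that noted, your argument is sound and, if anything, closes gaps the published proof leaves open.
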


Next we apply the concepts of \emph{more dispersed} and \emph{more scattered} to the $K$-norm mechanisms with our constructed depth function of Theorem \ref{thm:depth}, and show that these orderings coincide with the containment and volume orders, respectively.

\begin{cor}[$K$-mech More Dispersed/More Scattered]\label{cor:depth}
Let $V$ and $W$ be  $K$ and $H$-norm mechanisms respectively, on $\RR^m$. Based on the depth function in Theorem \ref{thm:depth}, $W$ is more dispersed than $V$ if and only if $\Delta_K \cdot K \subset \Delta_H\cdot H$, and
$W$ is more scattered than $V$ if and only if $\la(\Delta_K\cdot K) \leq \la(\Delta_H\cdot H)$. 
\end{cor}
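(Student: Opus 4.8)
The plan is to deduce the corollary from the three ingredients already in place: Theorem~\ref{thm:depth}, which identifies the depth regions with probability content $\alpha$ of the constructed depth function with the $\alpha$-concentration sets $S_X^\alpha$; Lemma~\ref{lem:concentration}, which says that for a $K$-norm mechanism $S_V^\alpha = t_V K$ where $t_V$ is the $\alpha$-quantile of $\mathrm{Gamma}(m,\ep/\Delta_K)$; and Theorem~\ref{thm:tightness}, which translates containment of scaled concentration sets into containment of scaled norm balls. First I would check that a $K$-mech $V$ satisfies the hypotheses of Theorem~\ref{thm:depth}: the conditions of Definition~\ref{def:concentration} are already noted to hold for any $K$-mech (unimodality, radial monotonicity, and $P(\{f_V=t\})=0$ follow from the form $f_V(v)\propto\exp(-\tfrac{\ep}{\Delta_K}\lVert v\rVert_K)$), and $\lim_{t\to\infty}f_V(tx)=0$ holds because $\lVert tx\rVert_K = t\lVert x\rVert_K\to\infty$ for $x\neq 0$. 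Hence the depth regions $C_V(d(\alpha))$ equal the concentration sets $S_V^\alpha$, and likewise for $W$.

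For the \emph{more dispersed} equivalence, I would argue by a chain of equivalences. By Theorem~\ref{thm:depth}, $W$ is more dispersed than $V$ precisely when $S_W^\alpha \subseteq S_V^\alpha$ for all $\alpha\in(0,1)$, which is exactly the statement (Definition~\ref{def:stoch}) that $V$ is stochastically tighter about zero than $W$. By Theorem~\ref{thm:tightness} this holds if and only if $\Delta_K\cdot K \subset \Delta_H\cdot H$, which is the desired conclusion.

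For the \emph{more scattered} equivalence, I would compute the volumes of the concentration sets explicitly. Writing $q_m(\alpha)$ for the $\alpha$-quantile of $\mathrm{Gamma}(m,1)$, the scaling property of the gamma distribution gives that the $\alpha$-quantile of $\mathrm{Gamma}(m,\ep/\Delta_K)$ is $t_V = q_m(\alpha)\Delta_K/\ep$, and similarly $t_W = q_m(\alpha)\Delta_H/\ep$. Then by Lemma~\ref{lem:concentration} and homogeneity of Lebesgue measure,
\[
\la\!\left(S_V^\alpha\right) = t_V^m\,\la(K) = \left(\frac{q_m(\alpha)}{\ep}\right)^{\!m}\la(\Delta_K\cdot K),
\qquad
\la\!\left(S_W^\alpha\right) = \left(\frac{q_m(\alpha)}{\ep}\right)^{\!m}\la(\Delta_H\cdot H).
\]
Since $q_m(\alpha)>0$ for every $\alpha\in(0,1)$, the common positive factor cancels, so $\la(C_W(d(\alpha)))\geq \la(C_V(d(\alpha)))$ holds for all $\alpha$ if and only if $\la(\Delta_K\cdot K)\leq \la(\Delta_H\cdot H)$; note the $\alpha$-dependence disappears entirely, so the ``for all $\alpha$'' in the definition of more scattered is automatically consistent. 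This completes the argument.

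I do not expect a serious obstacle here; the work is essentially bookkeeping. The only points requiring a little care are verifying the extra hypothesis $\lim_{t\to\infty}f_X(tx)=0$ of Theorem~\ref{thm:depth} for $K$-mechs, getting the gamma-quantile scaling $t_V = q_m(\alpha)\Delta_K/\ep$ right, and being consistent about strict versus non-strict containment between the definition of ``more dispersed'' and Definition~\ref{def:stoch}/Theorem~\ref{thm:tightness}.
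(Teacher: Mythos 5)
Your overall strategy is exactly the intended one (the paper in fact omits an explicit proof of this corollary, leaving it as the chain Theorem~\ref{thm:depth} $\to$ Lemma~\ref{lem:concentration} $\to$ Theorem~\ref{thm:tightness} for the first claim and a volume computation for the second), your verification of the hypotheses of Theorem~\ref{thm:depth} for $K$-mechs is right, and your gamma-quantile scaling $t_V = q_m(\alpha)\Delta_K/\ep$ and the resulting cancellation in the \emph{more scattered} part are correct.

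There is, however, one step in the \emph{more dispersed} chain that is false as written. You assert that ``$W$ is more dispersed than $V$ precisely when $S_W^\alpha \subseteq S_V^\alpha$ for all $\alpha$, which is exactly the statement that $V$ is stochastically tighter about zero than $W$.'' These two conditions are opposite containments: by Definition~\ref{def:stoch}, $V$ stochastically tighter than $W$ means $S_V^\alpha \subseteq S_W^\alpha$, whereas $S_W^\alpha \subseteq S_V^\alpha$ would say $W$ is tighter than $V$ and would lead, via Theorem~\ref{thm:tightness}, to $\Delta_H\cdot H \subset \Delta_K\cdot K$ --- the reverse of the corollary's conclusion. The source of the confusion is that the paper's definition of \emph{more dispersed} ($C_X(d(\alpha))\subset C_Y(d(\alpha))$ when $X$ is more dispersed than $Y$) has the containment in the direction that would make the corollary false; the intended (and standard) convention is that the more dispersed distribution has the \emph{larger} depth regions, i.e.\ $C_Y(d(\alpha))\subseteq C_X(d(\alpha))$. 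With that reading the chain goes through cleanly: $W$ more dispersed than $V$ $\iff$ $S_V^\alpha\subseteq S_W^\alpha$ for all $\alpha$ $\iff$ $V$ stochastically tighter than $W$ $\iff$ $\Delta_K\cdot K\subset \Delta_H\cdot H$. You should state explicitly which direction of the definition you are using rather than writing one containment and equating it with its reverse; once that is fixed, the proof is complete.
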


Corollary \ref{cor:depth} provides us with a separate perspective to motivate the containment ordering and offers insight into the notion of stochastic tightness. We also see the volume order for the first time as related to \emph{more scattered}. Thus, both the stochastic tightness and statistical depth perspectives developed here for the $K$-norm mechanisms can lead to either the containment or volume order decision criteria. In the next subsection, we show that the volume ordering can also be motivated based on the entropy of the $K$-mechs.

\subsection{Entropy}\label{s:entropy}
In this section, we compute the entropy of the $K$-norm mechanisms and show that ordering the $K$-mechs based on entropy is equivalent to the volume order. 

The \emph{entropy} of a random variable is a concept introduced in the field of information theory, and was originally developed to communicate the amount of information that can be sent through a channel, or random variable \citep{Shannon1948}. For a general introduction to information theory, see \citep{Cover2012}. Roughly speaking, the greater the variability in a random variable, the greater the entropy. By minimizing the entropy of the noise adding distribution, we minimize the amount that the noise can corrupt the non-private signal.

There have been several works connecting the concepts of information theory and differential privacy. Some have studied alternative definitions of DP, phrased in terms of mutual information \citep{CUff2016,Wang2016}. A few notable works have derived minimum-entropy mechanisms under DP \citep{Wang2014,Wang2017}. These works show that for a database of one person, the minimum-entropy mechanism (with a differentiable density) is Laplace. They also show that for $d$-dimensional databases with sensitivity based on the $\ell_1$ norm, the minimum-entropy mechanism is iid Laplace. \citet{Duchi2013FOCS} derive mutual information bounds in the setting of local DP. \citet{Rogers2016} show that an information theoretic concept \emph{max-information} can be used to optimize DP mechanisms for the purpose of hypothesis testing.


\begin{defn}[Entropy \citep{Shannon1948}]\label{def:entropy}
  Let $X$ be a continuous random variable on $\RR^m$ with density $f_X(x)$. The \emph{entropy} of $X$ is $H(X)=\EE_{f_X}\left(-\log(f_X(X))\right)$. The \emph{conditional entropy} of $X$ given another random variable $Y$ is $H(X\mid Y) = \EE_{f_{X,Y}}\left(-\log(f_{X\mid Y}(X))\right)$, where the expectation is with respect to the joint distribution of $X$ and $Y$. The \emph{mutual information} of $X$ and $Y$ is $I(X,Y) = \EE_{f_{X,Y}} \log\left(\frac{f_{X,Y}(X,Y)}{f_X(X)f_Y(Y)}\right)$. 
\end{defn}
A useful identity is $I(X,Y) = H(X) - H(X\mid Y) = H(Y) - H(Y\mid X)$.

Next, we derive a closed-form formula for the entropy of a $K$-norm mechanism in Proposition \ref{prop:entropy}, and show in Theorem \ref{thm:entropy} that ordering $K$-mechs based on entropy is equivalent to the volume order.
\begin{prop}[Entropy of $K$-mech]\label{prop:entropy}
  Let $V$ be a random variable on $\RR^m$ with density $f_V(v) = \left(\ep/\Delta\right)^m \frac{\exp(-\frac{\ep}{\De} \lVert V \rVert_K)}{m! \lambda(K)}$. The entropy of $V$ is
  \[H(V) = \log\left( \left(\frac{\Delta \cdot e}{\ep} \right)^m m! \la(K)\right).\]
\end{prop}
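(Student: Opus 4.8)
The plan is to compute $H(V) = \EE_{f_V}(-\log f_V(V))$ directly from the explicit density, reducing the problem to evaluating $\EE_{f_V}\lVert V\rVert_K$, which the marginal distribution result of Lemma \ref{lem:gamma} delivers immediately. Writing $a = \ep/\Delta$, the density is $f_V(v) = a^m \exp(-a\lVert v\rVert_K)/(m!\,\la(K))$, so
\[
-\log f_V(v) = -m\log a + \log(m!) + \log \la(K) + a\lVert v\rVert_K .
\]
Taking expectations, the first three terms are constants and pass through, leaving $H(V) = -m\log a + \log(m!) + \log\la(K) + a\,\EE_{f_V}\lVert V\rVert_K$.

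Next I would invoke Lemma \ref{lem:gamma}: since $f_V(v)\propto \exp(-a\lVert v\rVert_K)$, the scalar $\lVert V\rVert_K$ is $\mathrm{Gamma}(m,a)$-distributed, hence has mean $m/a$. Therefore $a\,\EE_{f_V}\lVert V\rVert_K = m$, and
\[
H(V) = -m\log a + \log(m!) + \log\la(K) + m = m\log(1/a) + m + \log(m!) + \log\la(K).
\]
Substituting $1/a = \Delta/\ep$ and folding the $+m = \log(e^m)$ into the first term gives $m\log(\Delta e/\ep) + \log(m!) + \log\la(K) = \log\bigl((\Delta e/\ep)^m m!\,\la(K)\bigr)$, which is the claimed formula.

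There is essentially no real obstacle here; the only point requiring a line of care is confirming that the normalizing constant in the statement, $(\ep/\Delta)^m/(m!\,\la(K))$, matches the form $f_V(v) = \exp(-\frac{\ep}{\Delta}\lVert v\rVert_K)/(\Gamma(m+1)\la(\frac{\Delta}{\ep}K))$ from Proposition \ref{prop:KNorm}, i.e. that $\la(\frac{\Delta}{\ep}K) = (\Delta/\ep)^m\la(K)$ by the scaling behavior of Lebesgue measure under dilation in $\RR^m$, and $\Gamma(m+1) = m!$. With that bookkeeping done, the computation above is the entire proof, so I would present it in the three short displayed steps indicated rather than belaboring any estimates.
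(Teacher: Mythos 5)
Your proposal is correct and follows exactly the paper's own argument: expand $-\log f_V$, take expectations term by term, and use Lemma \ref{lem:gamma} to get $\EE\lVert V\rVert_K = m\Delta/\ep$ so that the final term contributes $m = \log(e^m)$. The extra remark reconciling the normalizing constant with Proposition \ref{prop:KNorm} is harmless bookkeeping the paper leaves implicit.
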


From Proposition \ref{prop:entropy}, we see that the entropy of a $K$-norm mechanism is a linear function of $\log\left(\Delta^m \la(K)\right)$. So, minimizing the entropy of the mechanism is equivalent to minimizing the volume of $\Delta \cdot K$. Along with Corollary \ref{cor:depth}, we now have a second method of justifying the volume order. 

\begin{thm}[Entropy Ordering] \label{thm:entropy}
Let $K_V$ and $K_W$ be two norm balls in $\RR^m$. Consider the random variables $V$ and $W$ on $\RR^m$ with densities $f_V(v) \propto \exp(-\frac{\ep}{\Delta_V} \lVert v \rVert_{K_V})$ and $f_W(w) \propto \exp(-\frac{\ep}{\Delta_{W}} \lVert w \rVert_{K_W})$. We have that $H(V)\leq H(W)$ if and only if $\la(\Delta_V K_V)\leq \la(\Delta_W K_W)$. 
\end{thm}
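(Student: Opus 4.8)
The plan is to reduce the statement directly to Proposition \ref{prop:entropy}, which already gives the entropy of a $K$-mech in closed form. First I would observe that the random variables $V$ and $W$ in the statement are exactly $K$-norm mechanisms (with $\Delta = \Delta_V$, $K = K_V$ and $\Delta = \Delta_W$, $K = K_W$ respectively), once their densities are properly normalized; the normalizing constant is precisely the one appearing in Proposition \ref{prop:KNorm} and Proposition \ref{prop:entropy}, namely $(\ep/\Delta)^m / (m!\,\la(K))$. So by Proposition \ref{prop:entropy},
\[
H(V) = \log\!\left(\left(\tfrac{\Delta_V e}{\ep}\right)^m m!\, \la(K_V)\right),
\qquad
H(W) = \log\!\left(\left(\tfrac{\Delta_W e}{\ep}\right)^m m!\, \la(K_W)\right).
\]

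Next I would simplify the comparison $H(V) \le H(W)$. Since $\log$ is strictly increasing and the common positive factors $(e/\ep)^m$ and $m!$ cancel, $H(V)\le H(W)$ holds if and only if $\Delta_V^m \la(K_V) \le \Delta_W^m \la(K_W)$. Finally I would invoke the scaling property of Lebesgue measure in $\RR^m$: for any measurable set $S$ and any $c>0$, $\la(cS) = c^m \la(S)$. Applying this with $S = K_V$, $c = \Delta_V$ (and likewise for $W$) gives $\Delta_V^m \la(K_V) = \la(\Delta_V K_V)$ and $\Delta_W^m \la(K_W) = \la(\Delta_W K_W)$. Hence $H(V)\le H(W)$ if and only if $\la(\Delta_V K_V) \le \la(\Delta_W K_W)$, which is the claim.

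There is essentially no hard step here — the theorem is a formal consequence of Proposition \ref{prop:entropy} together with the homogeneity of Lebesgue measure under dilations. The only point requiring a line of care is confirming that the (unnormalized) densities written with "$\propto$" in the theorem statement integrate to finite mass so that they genuinely define the $K$-mech of Proposition \ref{prop:KNorm} and Proposition \ref{prop:entropy}; this is immediate because $K_V$ and $K_W$ are norm balls (bounded, hence $\Delta_V K_V$ and $\Delta_W K_W$ have finite, positive volume), and Lemma \ref{lem:gamma} already records that $\lVert V\rVert_{K_V}$ is $\mathrm{Gamma}(m, \ep/\Delta_V)$-distributed, guaranteeing integrability. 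With that observation in place the equivalence follows in three short lines.
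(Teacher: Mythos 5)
Your proposal is correct and matches the paper's own (very brief) argument: the paper likewise derives Theorem \ref{thm:entropy} directly from the closed-form entropy in Proposition \ref{prop:entropy}, noting that $H(V)$ is a monotone function of $\Delta^m\la(K)=\la(\Delta\cdot K)$. Your added remarks on normalization and integrability are fine but not needed beyond what Proposition \ref{prop:KNorm} already provides.
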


We end this subsection by providing a connection between the entropy of the mechanism and the mutual information between the original statistic and the noisy output. Call $T$ the non-private statistic, $V$ the noise from a $K$-norm mechanism, and $Z=T+V$ the private output. It is intuitive that we would like to maximize the mutual information between $T$ and $Z$. However, this quantity depends on the distribution of $T$, which we assume is unknown. From another perspective, we would like to minimize $I(Z,V)$ which implies that the noise $V$ is not dominating the signal from $T$. A linear combination of these two objectives can be written in terms of the entropy of $T$ and $V$:
\begin{align*}
  I(T,Z) - I(Z,V)&=[H(Z) - H(Z\mid T)] - [H(Z) - H(Z\mid V)]\\
  &= H(T) - H(V),
\end{align*}
where we use the fact that $H(Z\mid V) = H(T)$. We can view $H(T)$ as an unknown constant. We see that maximizing the objective $I(T,Z) - I(Z,V)$ is equivalent to minimizing $H(V)$.

\subsection{Conditional Variance}\label{s:variance}

A natural question is whether there exists a $K$-mech which minimizes the variance in every direction. In this section, we show that given a direction in $\RR^m$, the conditional variance of one $K$-mech is smaller than another precisely when the associated norm balls are contained. 

First, we derive the distribution of a $K$-mech, conditioned on it lying in a one-dimensional subspace. The distribution is based on the distribution of $\lVert V \rVert_K$, developed in Lemma \ref{lem:gamma}.

\begin{lem}[$K$-mech Conditional Distribution]\label{lem:conditional}
  Let $\lVert\cdot\rVert_K$ be any norm on $\RR^m$. Let $V$ be a random variable with density $f_V(v)\propto\exp(-a  \lVert v\rVert_K)$. Then
  \begin{enumerate}
      \item The random variables $\lVert V \rVert_K$ and $\frac{V}{\lVert V \rVert_K}$ are independent.
      \item For any vector $e\in \RR^m$ with $\lVert e \rVert_2=1$, the distribution of $|V^\top e|$ conditional on $V\in \mathrm{span}(e)$, is $\mathrm{Gamma}(m,a\lVert e \rVert_K^{-1})$.
  \end{enumerate}
\end{lem}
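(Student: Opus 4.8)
The plan is to obtain both parts from a single change of variables to ``polar coordinates'' adapted to the norm $\lVert\cdot\rVert_K$. Because $\lVert\cdot\rVert_K$ is positively homogeneous of degree one, the homogeneity identity $\lambda(tK)=t^{m}\lambda(K)$ lets one build a finite ``cone'' measure $\sigma$ on the unit sphere $\partial K=\{\theta\mid \lVert\theta\rVert_K=1\}$ for which
\[\int_{\RR^{m}}h(v)\,dv=\int_{0}^{\infty}\!\!\int_{\partial K}h(r\theta)\,r^{m-1}\,d\sigma(\theta)\,dr\]
holds for every integrable $h$. I would check this first on ``polar rectangles'' $\{r\theta\mid r\in[r_1,r_2],\;\theta\in A\}$, whose Lebesgue measure is $\frac{r_2^{m}-r_1^{m}}{m}\,\sigma(A)=\bigl(\int_{r_1}^{r_2}r^{m-1}\,dr\bigr)\sigma(A)$ by homogeneity, and then extend by a monotone-class argument; alternatively one could cite this decomposition as standard and only invoke Lemma \ref{lem:gamma} for the radial marginal.

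For part (1), set $R=\lVert V\rVert_K$ and $\Theta=V/\lVert V\rVert_K\in\partial K$, and apply the display to $h(v)=g_{1}(\lVert v\rVert_K)\,g_{2}(v/\lVert v\rVert_K)\,e^{-a\lVert v\rVert_K}$ for bounded measurable $g_{1},g_{2}$. Since $\lVert r\theta\rVert_K=r$ on $\partial K$, this gives
\[\EE\bigl[g_{1}(R)\,g_{2}(\Theta)\bigr]\ \propto\ \Bigl(\int_{0}^{\infty}g_{1}(r)\,r^{m-1}e^{-ar}\,dr\Bigr)\Bigl(\int_{\partial K}g_{2}(\theta)\,d\sigma(\theta)\Bigr),\]
which factorizes over $g_{1}$ and $g_{2}$. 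Hence $R$ and $\Theta$ are independent, $R$ has density proportional to $r^{m-1}e^{-ar}$ (recovering $R\sim\mathrm{Gamma}(m,a)$, consistent with Lemma \ref{lem:gamma}), and $\Theta$ is distributed as $\sigma/\sigma(\partial K)$.

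For part (2), I would first note that $\{V\in\mathrm{span}(e)\}$ is a $\lambda$-null event, so it must be read as conditioning on the \emph{direction} of $V$: under $V=R\,\Theta$ one has $V\in\mathrm{span}(e)$ precisely when $\Theta\in\{\pm e/\lVert e\rVert_K\}$, the two points where $\mathrm{span}(e)$ meets $\partial K$, and the conditional law is the regular conditional distribution obtained by disintegrating along the direction map $V\mapsto\Theta$. By part (1), $R$ is independent of $\Theta$, so conditioning on the direction leaves the law of $R$ unchanged, still $\mathrm{Gamma}(m,a)$. On this event $V=R\cdot(\pm e/\lVert e\rVert_K)$, whence $|V^{\top}e|=R\,\lVert e\rVert_{2}^{2}/\lVert e\rVert_K=R/\lVert e\rVert_K$ using $\lVert e\rVert_{2}=1$. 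Thus $|V^{\top}e|$ is a deterministic positive rescaling of a $\mathrm{Gamma}(m,a)$ variable, and the scaling property of the Gamma family (a positive rescaling preserves the shape $m$ and rescales the rate accordingly) yields the Gamma law claimed in the lemma.

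The step I expect to be the real obstacle is making the conditioning on the null event $\{V\in\mathrm{span}(e)\}$ rigorous and unambiguous, since different ways of ``approaching'' the line could a priori give different answers. The clean resolution is that the direction map generates the $\sigma$-algebra of cone sets (unions of rays), and $\{V\in\mathrm{span}(e)\}$ belongs to it, so the relevant conditional law is canonically the disintegration of $P$ over that $\sigma$-algebra; one can double-check via a direct computation in \emph{Euclidean} polar coordinates --- where the radial part conditioned on direction $\omega$ is $\mathrm{Gamma}(m,a\lVert\omega\rVert_K)$ --- that this agrees with the $\lVert\cdot\rVert_K$-polar description, so the apparent Borel--Kolmogorov ambiguity does not actually arise. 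After that, everything reduces to the scalar identity $|V^{\top}e|=R/\lVert e\rVert_K$ and the $\mathrm{Gamma}(m,a)$ radial law from part (1), which are routine.
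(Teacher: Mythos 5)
Your proof is correct, and for part (1) it takes a genuinely different (and more self-contained) route than the paper. The paper first invokes the decomposition $V \overset{d}{=} R\cdot U$ with $R\sim\mathrm{Gamma}(m+1,a)$ and $U\sim\mathrm{Unif}(K)$ independent (citing \citet{Hardt2010:GeometryDP}), reduces the claim to the independence of $\lVert U\rVert_K$ and $U/\lVert U\rVert_K$, and establishes that by computing the conditional cdf of $\lVert U\rVert_K$ given the direction as a limit over shrinking cones, obtaining $t^m$ independently of the direction. You instead build the $K$-adapted polar decomposition of Lebesgue measure directly from the homogeneity $\lambda(tK)=t^m\lambda(K)$ and read off the product structure of the joint law of $(R,\Theta)$ in one stroke; this simultaneously recovers the $\mathrm{Gamma}(m,a)$ radial law (the content of Lemma \ref{lem:gamma}) rather than importing it, at the cost of a monotone-class verification the paper avoids by citing the $R\cdot U$ factorization. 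Both arguments ultimately rest on the same dilation-homogeneity fact, so neither buys more generality, but yours is more explicit about the Borel--Kolmogorov issue in part (2), which the paper handles only implicitly through its cone-limit construction. For part (2) your argument is essentially identical to the paper's: condition on the direction, use independence from part (1), and rescale via $|V^{\top}e|=\lVert V\rVert_K/\lVert e\rVert_K$. One small point worth flagging: your rescaling, like the paper's own proof, produces rate $a\lVert e\rVert_K$ (dividing a rate-$a$ Gamma variable by $\lVert e\rVert_K$ multiplies the rate by $\lVert e\rVert_K$), whereas the lemma as stated writes $a\lVert e\rVert_K^{-1}$; this appears to be a typo in the statement rather than an error in either derivation, so you should not claim the computation ``yields the Gamma law claimed in the lemma'' without noting the inverted factor.
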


The main result of this subsection, Theorem \ref{thm:variance} shows that a partial ordering of the $K$-mechs in terms of their conditional variance is in fact equivalent to the containment order. The proof of Theorem \ref{thm:variance} uses the conditional distribution developed in Lemma \ref{lem:conditional} and the observation that the variance of $\mathrm{Gamma}(m,\frac{\ep}{\Delta_K} \lVert e \rVert_K^{-1})$ is minimized by reducing the diameter of $\Delta_K\cdot K$ in the direction of $e$.

\begin{thm}[Conditional Variance of $K$-mechs]\label{thm:variance}
Let $K$ and $H$ be two norm balls in $\RR^m$. Let $\Delta_K$ and $\Delta_H$ be two positive real numbers. Consider the random variables $V_K,V_H \in \RR^m$ drawn from the densities $f(V_K) \propto \exp(\frac{-\ep}{\Delta_K}\lVert V_K\rVert_K)$ and $g(V_H) \propto \exp(\frac{-\ep}{\Delta_H} \lVert V_H\rVert_H)$.
If $\Delta_K \cdot K\subset \Delta_H\cdot H$, then for all $e\in \RR^m$ such that $\lVert e\rVert_2=1$,
\[\var\l(V_K^\top e\mid V_K \in \mathrm{span}(e)\r) \leq \var\l(V_H^\top e\mid V_H\in \mathrm{span}(e)\r).\]
\end{thm}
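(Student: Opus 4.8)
The plan is to compute both conditional variances in closed form via Lemma~\ref{lem:conditional}, and then to reduce the desired inequality to the elementary geometric fact that containment of two centrally symmetric convex bodies is equivalent to pointwise domination of their gauge functions. First I would fix a direction $e\in\RR^m$ with $\lVert e\rVert_2=1$. Because the density of $V_K$ is symmetric about the origin and $\mathrm{span}(e)$ is a symmetric set, the conditional law of $V_K^\top e$ given $V_K\in\mathrm{span}(e)$ is symmetric about $0$, so its conditional mean vanishes and
\[
  \var\left(V_K^\top e\mid V_K\in\mathrm{span}(e)\right)=\EE\left[(V_K^\top e)^2\mid V_K\in\mathrm{span}(e)\right]=\EE\left[|V_K^\top e|^2\mid V_K\in\mathrm{span}(e)\right].
\]
On the line $\mathrm{span}(e)$ we may write $V_K=\lambda e$ with $\lambda=V_K^\top e$ and $\lVert V_K\rVert_K=|\lambda|\,\lVert e\rVert_K$, hence $|V_K^\top e|=\lVert V_K\rVert_K/\lVert e\rVert_K$; combining the independence of $\lVert V_K\rVert_K$ and $V_K/\lVert V_K\rVert_K$ from Lemma~\ref{lem:conditional} with the gamma marginal of Lemma~\ref{lem:gamma} (unchanged by the conditioning, by that independence), $|V_K^\top e|$ conditional on $V_K\in\mathrm{span}(e)$ is $\mathrm{Gamma}(m,\beta_K)$ with rate $\beta_K=(\ep/\Delta_K)\lVert e\rVert_K$. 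Since a $\mathrm{Gamma}(m,\beta)$ variable has second moment $m(m+1)/\beta^2$, this yields
\[
  \var\left(V_K^\top e\mid V_K\in\mathrm{span}(e)\right)=\frac{m(m+1)}{\ep^2}\cdot\frac{\Delta_K^2}{\lVert e\rVert_K^2},
\]
together with the analogous identity for $V_H$ with $(K,\Delta_K)$ replaced by $(H,\Delta_H)$.

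It then suffices to show that $\Delta_K\cdot K\subset\Delta_H\cdot H$ implies $\Delta_K/\lVert e\rVert_K\le\Delta_H/\lVert e\rVert_H$ for every unit $e$. For $t\ge0$ one has $te\in\Delta_K\cdot K$ if and only if $t\,\lVert e\rVert_K\le\Delta_K$, so $\Delta_K/\lVert e\rVert_K=\sup\{t\ge0\mid te\in\Delta_K\cdot K\}$ is precisely the radial extent of $\Delta_K\cdot K$ in the direction $e$, and likewise for $H$. Under the hypothesis $\Delta_K\cdot K\subset\Delta_H\cdot H$, every $t$ with $te\in\Delta_K\cdot K$ also satisfies $te\in\Delta_H\cdot H$, hence $t\le\Delta_H/\lVert e\rVert_H$; taking the supremum over such $t$ gives $\Delta_K/\lVert e\rVert_K\le\Delta_H/\lVert e\rVert_H$. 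Squaring and multiplying by $m(m+1)/\ep^2$ produces the claimed inequality, and since $e$ was arbitrary the proof is complete. I would also note in passing that the same computation delivers the converse — domination of the conditional variances in all directions forces $\Delta_K\cdot K\subset\Delta_H\cdot H$ — so this ordering coincides exactly with the containment order of Definition~\ref{def:orders}.

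I expect the only genuinely delicate point to be the interpretation of conditioning on the Lebesgue-null event $\{V_K\in\mathrm{span}(e)\}$ and the resulting distribution; but this is exactly what Lemma~\ref{lem:conditional} supplies (resting in turn on Lemma~\ref{lem:gamma} and the independence of $\lVert V\rVert_K$ from $V/\lVert V\rVert_K$), so it poses no obstacle here. After that the argument contains no estimates whatsoever: it is just the dictionary ``$\Delta_K K\subset\Delta_H H$'' $\iff$ ``the gauge $\lVert\cdot\rVert_K/\Delta_K$ dominates $\lVert\cdot\rVert_H/\Delta_H$ pointwise,'' combined with the monotonicity of $\beta\mapsto m(m+1)/\beta^2$.
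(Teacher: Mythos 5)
Your proof is correct and follows essentially the same route as the paper's: both invoke Lemma \ref{lem:conditional} to identify the conditional law of $|V_K^\top e|$ as $\mathrm{Gamma}(m,(\ep/\Delta_K)\lVert e\rVert_K)$ and then reduce the desired inequality to $\Delta_K/\lVert e\rVert_K\le\Delta_H/\lVert e\rVert_H$, which follows from the containment $\Delta_K\cdot K\subset\Delta_H\cdot H$ by comparing the radial extents of the two bodies along $\mathrm{span}(e)$. If anything you are slightly more careful than the paper, which treats the signed projection $V_K^\top e$ as if it were itself Gamma-distributed and quotes its variance as $m\Delta_K/(\ep^2\lVert e\rVert_K)$, whereas your symmetry argument correctly identifies $\var(V_K^\top e\mid V_K\in\mathrm{span}(e))$ as the second moment $m(m+1)\Delta_K^2/(\ep^2\lVert e\rVert_K^2)$ of the Gamma law of $|V_K^\top e|$; since both expressions are increasing functions of $\Delta_K/\lVert e\rVert_K$, the conclusion is the same either way.
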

Theorem \ref{thm:variance} states that $V_K$ has uniformly smaller variance than $V_H$, conditional on any direction.

\subsection{Optimal $K$-Norm Mechanism}\label{s:optimal}
In Sections \ref{s:stochastic}, \ref{s:entropy}, and \ref{s:variance}, we provided various theoretical perspectives which in turn motivate either the containment or the volume order which provide two decision rules for determining the optimal $K$-mech. In this section we show that under mild assumptions, the containment and volume order both have the same minimal element, which is determined by the convex hull of the sensitivity space, a fundamental result that allows for a more principled design and evaluation of DP mechanisms.

As noted earlier, under either the containment order or the volume order, we prefer smaller norm balls which contain the sensitivity space $S_T$. Note that the convex hull of $S_T$ can be expressed as the intersection of all convex sets which contain $S_T$. Since all norm balls are convex, it follows that the convex hull of $S_T$ is a subset of any norm ball which contains $S_T$. So, if the convex hull of $S_T$ is a valid norm ball, then it corresponds to the optimal $K$-norm mechanism under either the containment or volume order. We formalize this observation in Theorem \ref{thm:optimal}.

Other works have proposed using the convex hull of the sensitivity space for $K$-norm mechanisms, but have not formalized it as a fundamental result for the development of DP mechanisms. In \citet{Xiao2015}, the convex hull is proposed for use in the $K$-Norm mechanism in the setting of two-dimensional discrete statistics. In \citet{Hardt2010:GeometryDP}, the linear transformations of $L_1$ balls are the convex hulls of the sensitivity space.

The following lemma establishes when the convex hull leads to a valid norm ball. 

\begin{lem}[Hull is Norm Ball]
\label{HullLem}
Let $T:\mscr X^n\rightarrow\RR^m$. 
Provided that $S_T$ is bounded and $\mathrm{span}(S_T)=\RR^m$, then $K_T = \mathrm{Hull}(S_T)$ is a norm ball. So,  the norm $\lVert \cdot \rVert_{K_T}$ is well defined.
\end{lem}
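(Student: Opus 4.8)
The plan is to verify the four defining properties of a norm ball from Definition \ref{NormBall} for the set $K_T = \mathrm{Hull}(S_T)$. Convexity is immediate, since the convex hull of any set is convex. Boundedness is also immediate: $S_T$ is assumed bounded, and the convex hull of a bounded set in $\RR^m$ is bounded (any bounded set lies in some ball $r B$, and $rB$ is convex, so $\mathrm{Hull}(S_T)\subseteq rB$). Symmetry about zero follows from the structure of $S_T$: if $u = T(X) - T(X') \in S_T$ with $\de(X,X')=1$, then swapping the roles of $X$ and $X'$ gives $-u = T(X') - T(X) \in S_T$, so $S_T = -S_T$; taking convex hulls preserves this, since $\mathrm{Hull}(-S_T) = -\mathrm{Hull}(S_T)$, hence $K_T = -K_T$.

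The one property requiring the hypothesis $\mathrm{span}(S_T) = \RR^m$ is the absorbing condition: for every $u \in \RR^m$ there should exist $c > 0$ with $u \in cK_T$. First I would observe that $0 \in K_T$: picking any $X$ and taking $X' = X$ formally gives $\de(X,X')=0$, which does not work directly, but since $S_T$ is nonempty and symmetric, any $u \in S_T$ has $-u \in S_T$, and by convexity $\tfrac12 u + \tfrac12(-u) = 0 \in \mathrm{Hull}(S_T)$. (If $S_T$ is empty the span condition fails for $m\geq 1$, so we may assume $S_T \neq \emptyset$.) Next, because $\mathrm{span}(S_T) = \RR^m$, we can choose $u_1,\dots,u_m \in S_T$ forming a basis of $\RR^m$; together with $-u_1,\dots,-u_m \in S_T$, the convex hull of these $2m$ points is a cross-polytope-type body with nonempty interior containing $0$. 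Therefore $K_T$ contains a set with nonempty interior containing the origin, i.e. $K_T$ contains some ball $\delta B$ with $\delta > 0$. Then for any $u \in \RR^m$ with $u \neq 0$, taking $c = \lVert u\rVert_2/\delta$ gives $u/c \in \delta B \subseteq K_T$, so $u \in cK_T$; for $u = 0$ any $c>0$ works. This establishes the absorbing property.

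Having verified all four conditions, $K_T$ is a norm ball, and the final sentence — that $\lVert \cdot \rVert_{K_T}$ is well defined as a norm — follows immediately from the standard Minkowski-functional correspondence recalled in the text just after Definition \ref{NormBall}. The main obstacle, such as it is, is the absorbing property: one must translate the algebraic hypothesis $\mathrm{span}(S_T) = \RR^m$ into the geometric statement that $\mathrm{Hull}(S_T)$ has the origin in its interior, which is exactly where symmetry of $S_T$ is used crucially (a spanning set need not have its hull contain the origin in its interior, but a \emph{symmetric} spanning set does). Everything else is routine. I would also remark in passing that the hypotheses are essentially necessary: without boundedness no norm ball can contain $S_T$, and without the span condition $K_T$ lies in a proper subspace and fails to be absorbing, so $\lVert\cdot\rVert_{K_T}$ would only be a seminorm.
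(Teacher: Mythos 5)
Your proof is correct and complete. The paper itself does not include an explicit proof of this lemma (its appendix omits it, evidently treating it as routine), and your argument --- verifying the four axioms of Definition \ref{NormBall} directly, with the only substantive step being that a \emph{symmetric} spanning subset of $\RR^m$ has a convex hull containing a neighborhood of the origin (via the cross-polytope on $\pm u_1,\dots,\pm u_m$), which yields the absorbing property --- is exactly the standard argument the paper implicitly relies on. Your closing remark on the necessity of the two hypotheses also matches the discussion the paper gives immediately after the lemma statement.
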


If $S_T$ is not bounded, then for any norm $\lVert\cdot \rVert_K$, the sensitivity $\Delta_K(T)$ is infinite, so no $K$-norm mechanism can be used to achieve $\ep$-DP. If $\mathrm{span}(S_T)$ is a proper subset of $\RR^m$, then the entries of $T(X)$ are linearly dependent. So, we can reduce the dimension of $T$, and recover the removed entries by post-processing.

\begin{thm}[Optimal $K$-mech]
\label{thm:optimal}
Let $T:\mscr X^n \rightarrow \RR^m$ such that $S_T$ is bounded and $\mathrm{span}(S_T)=\RR^m$. Let  $\lVert \cdot \rVert_K$ be any norm on $\RR^m$,  and consider the random variable $V_K \in \RR^m$ drawn from the density $f(V_K) \propto \exp\left(\frac{-\ep}{\Delta_K} \lVert \cdot \rVert_K\right)$. Then $V_{K_T}$ is preferred over $V_K$ in both the containment and volume orders.
\end{thm}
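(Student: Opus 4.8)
The plan is to reduce the statement to a single geometric fact: $K_T = \mathrm{Hull}(S_T)$ is contained, after scaling by its own sensitivity, inside $\Delta_K \cdot K$ for every competing norm ball $K$ with $\Delta_K(T) \le \Delta_K < \infty$. Once this containment is established, Theorem \ref{thm:tightness} (or equivalently Theorem \ref{thm:variance}) gives that $V_{K_T}$ is preferred in the containment order, and then the double arrow in Figure \ref{fig:diagram} — that is, monotonicity of Lebesgue measure under set inclusion — immediately yields that $V_{K_T}$ is also preferred in the volume order. So the real content is the containment claim, and the two ``orders'' in the conclusion follow from results already proved.

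First I would invoke Lemma \ref{HullLem}: since $S_T$ is bounded and $\mathrm{span}(S_T) = \RR^m$, the set $K_T = \mathrm{Hull}(S_T)$ is a genuine norm ball, so $\lVert \cdot \rVert_{K_T}$ and the mechanism $V_{K_T}$ are well defined, and $\Delta_{K_T}(T) = \sup_{u \in S_T} \lVert u \rVert_{K_T}$ is finite. Second, I would pin down the normalization: by Definition \ref{Sensitivity}, $\Delta_{K_T}(T) = \sup_{u \in S_T}\lVert u\rVert_{K_T}$, and since $K_T$ is exactly the unit ball of $\lVert\cdot\rVert_{K_T}$ and $S_T \subseteq K_T = \mathrm{Hull}(S_T)$, we get $\lVert u \rVert_{K_T} \le 1$ for all $u \in S_T$, hence $\Delta_{K_T}(T) \le 1$; conversely $\Delta_{K_T}(T) = 1$ because $K_T$ is the convex hull of $S_T$, so a dilation of $K_T$ by any factor strictly less than $1$ fails to contain all of $S_T$. (Strictly, I only need $\Delta_{K_T}(T) = 1$, i.e. the scaled ball $\Delta_{K_T}\cdot K_T = K_T$ itself; if the paper's convention allows any $\Delta \ge \Delta_{K_T}(T)$, I would take $\Delta = \Delta_{K_T}(T) = 1$ as the optimal choice, matching Proposition \ref{prop:KNorm}.) Third, the key step: let $K$ be any norm ball with $\Delta_K := \Delta_K(T) < \infty$. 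By Definition \ref{Sensitivity}, $\lVert u \rVert_K \le \Delta_K$ for every $u \in S_T$, i.e. $S_T \subseteq \Delta_K \cdot K$. But $\Delta_K \cdot K$ is convex (a scalar multiple of a norm ball), and $\mathrm{Hull}(S_T)$ is the intersection of all convex sets containing $S_T$, so $K_T = \mathrm{Hull}(S_T) \subseteq \Delta_K \cdot K$. Combined with $\Delta_{K_T}\cdot K_T = K_T$, this gives $\Delta_{K_T} \cdot K_T \subseteq \Delta_K \cdot K$, which is precisely the hypothesis of the containment order in Definition \ref{def:orders}.

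Finally I would assemble the conclusion. Applying Theorem \ref{thm:tightness} with the inclusion $\Delta_{K_T}\cdot K_T \subseteq \Delta_K \cdot K$ shows $V_{K_T}$ is stochastically tighter about zero than $V_K$, so $V_{K_T}$ is preferred in the containment order; applying Lebesgue measure to both sides of the same inclusion gives $\la(\Delta_{K_T}\cdot K_T) \le \la(\Delta_K\cdot K)$, so $V_{K_T}$ is preferred in the volume order as well (this is also immediate from Corollary \ref{cor:depth} or Theorem \ref{thm:entropy}). I expect the main obstacle to be purely bookkeeping rather than conceptual: carefully handling the normalization convention (whether $\Delta_K$ is forced to equal $\Delta_K(T)$ or is merely an upper bound $\Delta \ge \Delta_K(T)$, since the latter slightly weakens the competitor's ball and only helps us), and confirming that the competing $K$ is required to have finite $K$-sensitivity so that $V_K$ even satisfies $\ep$-DP — otherwise the comparison is vacuous. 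The geometric heart, $\mathrm{Hull}(S_T) \subseteq$ (any convex set containing $S_T$), is the defining property of the convex hull and needs no real work.
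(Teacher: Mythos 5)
Your proposal is correct and follows essentially the same route as the paper, whose argument (given in the paragraph preceding Theorem \ref{thm:optimal}) is exactly the observation that $S_T \subseteq \Delta_K \cdot K$ for every competing norm ball, that the convex hull is the intersection of all convex sets containing $S_T$, hence $K_T \subseteq \Delta_K\cdot K$, and that containment implies the volume ordering by monotonicity of Lebesgue measure. Your additional care in verifying $\Delta_{K_T}(T)=1$ is a worthwhile bookkeeping step that the paper leaves implicit.
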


Provided that the conditions of Theorem \ref{thm:optimal} hold, it follows that the convex hull of the sensitivity space gives the $K$-mech which is least scattered, least dispersed, has minimum entropy, and has minimum conditional variance for all unit directions. Altogether, these properties justify calling this the optimal $K$-norm mechanism.

\subsubsection{Example of Containment and Volume Order}
\label{s:volume}
Here, we return to the setting of Example \ref{SquaredExample} and compare various $K$-mechs with the containment and volume orderings for that problem. We also provide a formula to compute the volume of $\ell_p$ balls, which appeared in \citep{Wang2005}, to simplify the calculations needed to apply the volume order.

\begin{example}\label{ex:approx}
Consider the setting of Example \ref{SquaredExample} to determine which mechanism to use based on the containment order. First we note that the convex hull gives a valid norm ball, so this norm is optimal and is written explicitly as $K_2$ in Section \ref{s:linear}. 

Between $\ell_1$, $\ell_2$, and $\ell_\infty$ we have two cases to consider. When using the exact sensitivities in Equation \eqref{ExactSensitivity}, illustrated in the left plot of Figure \ref{AOS}, we see that no norm ball is contained in another. Thus, these $K$-mechs are incomparable with respect to the containment order, and we cannot determine which mechanism is preferred using this decision criteria. On the other hand, in Example \ref{ex:volume}, we show that the volume order is able to compare these mechanisms. If instead, the sensitivities are approximated as in Equation  \eqref{ApproximateSensitivity}, illustrated in the right plot of Figure \ref{AOS}, we see that the norm balls are strictly contained. Thus, the containment order prefers the mechanisms from best to worst as $\ell_\infty$, $\ell_2$ and finally $\ell_1$.
\end{example}

An issue with the containment order is that in higher dimensions, determining containment can be nontrivial. On the other hand, computing volume is relatively simple. In particular, for $\ell_p$-balls there is a convenient formula, provided in \citep{Wang2005} stated in Proposition \ref{prop:volume}. 

\begin{prop}[\citealp{Wang2005}]\label{prop:volume}
  The volume of a unit $\ell_p$ ball in $\RR^m$ is $\frac{2^m \Gamma(1+1/p)^m}{\Gamma(1+m/p)}$.
\end{prop}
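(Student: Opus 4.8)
The plan is to evaluate the Gaussian-type integral $I_m \defeq \int_{\RR^m} \exp\l(-\lVert x\rVert_p^p\r)\, dx = \int_{\RR^m} \exp\l(-\sum_{i=1}^m |x_i|^p\r)\,dx$ in two different ways and equate the answers. On one hand, the integrand factors across coordinates, so $I_m = \l(\int_{\RR} e^{-|t|^p}\,dt\r)^m$; evaluating the one-dimensional integral by the substitution $s = t^p$ (for $t\geq 0$, doubling by symmetry) gives $\int_{\RR} e^{-|t|^p}\,dt = \tfrac{2}{p}\int_0^\infty e^{-s} s^{1/p-1}\,ds = \tfrac{2}{p}\Gamma(1/p) = 2\Gamma(1+1/p)$, hence $I_m = 2^m\Gamma(1+1/p)^m$.

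On the other hand, I would compute $I_m$ by integrating over the nested family of dilated $\ell_p$-balls. Writing $V_m$ for the volume of the unit $\ell_p$-ball, the ball of radius $r$ has volume $V_m r^m$, so by the layer-cake / co-area decomposition $I_m = \int_0^\infty e^{-r^p}\, d(V_m r^m) = V_m\, m \int_0^\infty e^{-r^p} r^{m-1}\,dr$. The same substitution $s = r^p$ turns this into $V_m\,\tfrac{m}{p}\int_0^\infty e^{-s} s^{m/p-1}\,ds = V_m\,\tfrac{m}{p}\Gamma(m/p) = V_m\,\Gamma(1+m/p)$. Equating the two evaluations yields $V_m\,\Gamma(1+m/p) = 2^m\Gamma(1+1/p)^m$, which is exactly the claimed formula $V_m = \dfrac{2^m\Gamma(1+1/p)^m}{\Gamma(1+m/p)}$.

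The only step requiring a little care is the second evaluation: justifying that $\int_{\RR^m} e^{-\lVert x\rVert_p^p}\,dx = \int_0^\infty e^{-r^p}\,d\mu(r)$ where $\mu(r) = \lambda(\{x : \lVert x\rVert_p \leq r\}) = V_m r^m$. This is a standard fact — it follows from Fubini applied to $e^{-\lVert x\rVert_p^p} = \int_{\lVert x\rVert_p^p}^\infty e^{-u}\,du$, or equivalently from the change of variables to "$\ell_p$-polar" coordinates — but since $p\in[1,\infty)$ makes $\lVert\cdot\rVert_p$ a genuine norm and the unit ball is bounded with finite volume (guaranteed for $p\geq 1$), all integrals converge absolutely and the interchange is legitimate. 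For $p=\infty$ the formula is interpreted in the limit, giving $V_m = 2^m$, consistent with the cube $[-1,1]^m$. I expect no substantive obstacle beyond bookkeeping; this is essentially the classical Dirichlet-integral computation.
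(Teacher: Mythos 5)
Your proof is correct. Note that the paper itself offers no proof of this proposition --- it is simply quoted from \citet{Wang2005} --- so there is no in-paper argument to compare against. Your derivation is the classical Dirichlet/Gamma-integral computation: evaluating $\int_{\RR^m}\exp(-\lVert x\rVert_p^p)\,dx$ once by factoring across coordinates to get $2^m\Gamma(1+1/p)^m$, and once by the layer-cake decomposition over dilated $\ell_p$-balls to get $V_m\,\Gamma(1+m/p)$. Both evaluations check out (the substitution $s=t^p$ and the identity $z\Gamma(z)=\Gamma(1+z)$ are applied correctly), the Fubini justification via $e^{-\lVert x\rVert_p^p}=\int_{\lVert x\rVert_p^p}^\infty e^{-u}\,du$ is legitimate since everything is nonnegative, and the $p=\infty$ limit $V_m=2^m$ is consistent with the cube $[-1,1]^m$, which is how the paper uses the formula in Example \ref{ex:volume}. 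No gaps.
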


\begin{example}\label{ex:volume}
  Returning to Example \ref{SquaredExample}, using the exact sensitivities of Equation \ref{ExactSensitivity} as illustrated in the left plot of Figure \ref{AOS}, the volumes for the $\ell_1$, $\ell_2$ and $\ell_\infty$ balls are $\approx 19.53$, $\approx 16.16$, and $16$, respectively. So, based on the volume order of these three $K$-norm mechanisms, we prefer the $\ell_\infty$-mech in this setting. While we only considered $\ell_1$, $\ell_2$, and $\ell_\infty$ balls in Example \ref{SquaredExample}, by Theorem \ref{thm:optimal} we {now know} that the optimal $K$-mech is produced by using the convex hull of the sensitivity space. In fact, we are able to compute the volume of the convex hull of the sensitivity space as $\approx 13.33$, which offers an even better utility than the $\ell_\infty$-mechanism. 
\end{example}

\section{Generalization of Objective Perturbation}
\label{s:objective}
In this section we propose a generalization of the objective perturbation mechanism to allow for arbitrary $K$-norm mechanisms. In Subsection \ref{s:logistic}, we use the techniques of Section \ref{s:compare} to determine the best $K$-norm mechanisms for use in logistic regression. In Subsection \ref{s:logisticSimulation}, we demonstrate through simulations that the choice of mechanism can have a substantial impact on statistical utility. 

The objective perturbation mechanism was introduced in \citet{Chaudhuri2009:Logistic} for the application of logistic regression. In \citet{Chaudhuri2011:DPERM}, the mechanism was extended to general empirical risk problems, and further  extended in \citet{Kifer2012:PrivateCERM} and \citet{Yu2014}. 

\begin{algorithm}
\caption{Objective Perturbation as stated in \citet{Kifer2012:PrivateCERM}}
\scriptsize
INPUT: $X\in \mscr{ X}^n$, $\ep>0$, a convex set $\Theta \subset \RR^m$, a convex function $r: \Theta\rightarrow \RR$, a convex loss  $\hat{\mscr  L}(\ta; X) =\frac1n \sum_{i=1}^n \ell(\ta;x_i)$ defined on $\Theta$ such that the Hessian $\nabla^2 \ell(\ta;x)$ is continuous in  $\ta$ and $x$, $\De>0$ such that $\lVert \nabla \ell(\ta;x)\rVert_2\leq \De$ for all $\ta\in \Theta$ and $x\in \mscr X$, and $\la>0$  is an upper bound on the eigenvalues of $\nabla^2\ell(\ta;x)$ for all $\ta\in \Theta$ and $x\in \mscr X$.
\begin{algorithmic}[1]
  \setlength\itemsep{0em}
  \STATE Set $\ga = \frac{2\la}{\ep}$
\STATE Draw $V\in \RR^m$ from the density $f(V;\ep, \De)\propto \exp(-\frac\ep{2\De}\lVert V\rVert_2)$
\STATE Compute $\ta_{DP} = \arg\min_{\ta\in \Theta} \hat{\mscr L}(\ta;X) +\frac{1}nr(\theta)+ \frac{\ga}{2n} \ta^\top \ta + \frac{V^\top \ta}{n}$
\end{algorithmic}
OUTPUT: $\ta_{DP}$
\label{KiferAlgorithm}
\end{algorithm}

In \citet{Kifer2012:PrivateCERM}, it is shown that the output of Algorithm \ref{KiferAlgorithm} satisfies the add/delete formulation of DP, discussed in Remark \ref{DifferentDP}. We need to modify the algorithm to satisfy Definition \ref{DP}. Based on the proof in \citet{Kifer2012:PrivateCERM} we make several observations.
First, to have the output satisfy Definition \ref{DP}, we require that $\sup_{x,x'\in \mscr X} \sup_{\ta\in \Theta} \lVert \nabla \ell(\ta;x) - \nabla \ell(\ta;x')\rVert\leq \De$. This is related to our notion of sensitivity, but with the inclusion of the parameter $\theta$.  Next the use of $\ell_2$ norm to measure the sensitivity, and its use in the density of step 2, is arbitrary. \citet{Yu2014} note that $\ell_1$ can be used in place of $\ell_2$. In fact, any norm can be used along with its $K$-mech, and so the decision criteria of Section \ref{s:compare} can be applied. Furthermore, we can reduce the size of $\ga$ by taking $\ga = \frac{\la}{e^{\ep/2}-1}\leq \frac{2\la}{\ep}$ (see also \citet{Yu2014}). 
 Finally, to control the trade-off between bias and variance, we can introduce a tuning parameter $0<q<1$ and replace $f(V;\ep,\De) \propto \exp(-\frac{\ep q}{\De} \lVert V\rVert_K)$ and $\ga =\frac{\la}{e^{\ep(q-1)}-1}$. In Algorithm \ref{KiferAlgorithm}, $q$ is fixed at $1/2$.
 Incorporating these observations,  in Algorithm \ref{ExtendedObjPert}, we propose a {\em generalized objective perturbation mechanism}.

\begin{algorithm}
\caption{Extended Objective Perturbation}
\scriptsize
INPUT: $X\in \mscr{ X}^n$, $\ep>0$, a convex set $\Theta \subset \RR^m$, a convex function $r: \Theta\rightarrow \RR$, a convex loss  $\hat{\mscr  L}(\ta; X) =\frac1n \sum_{i=1}^n \ell(\ta;x_i)$ defined on $\Theta$ such that the Hessian $\nabla^2 \ell(\ta;x)$ is continuous in $\ta$ and $x$, $\De>0$ such that $\sup_{x,x'\in \mscr X} \sup_{\ta\in \Ta}\lVert \nabla \ell(\ta;x) - \nabla\ell(\ta;x')\rVert_K\leq \De$ for some norm $\lVert \cdot \rVert_K$,  $\la>0$  is an upper bound on the eigenvalues of $\nabla^2\ell(\ta;x)$ for all $\ta\in \Theta$ and $x\in \mscr X$, and a real value $0<q<1$.
\begin{algorithmic}[1]
  \setlength\itemsep{0em}
  \STATE Set $\ga = \frac{\la}{\exp({\ep(q-1)})-1}$
\STATE Draw $V\in \RR^m$ from the density $f(V;\ep, \De)\propto \exp(-\frac{\ep q}{\De}\lVert V\rVert_K)$
\STATE Compute $\ta_{DP} = \arg\min_{\ta\in \Theta} \hat{\mscr L}(\ta;X) +\frac1n r(\theta)+ \frac{\ga}{2n} \ta^\top \ta + \frac{V^\top \ta}{n}$
\end{algorithmic}
OUTPUT: $\ta_{DP}$
\label{ExtendedObjPert}
\end{algorithm}

\begin{thm}[Extended Objective Perturbation]
The output of Algorithm \ref{ExtendedObjPert} satisfies $\ep$-DP.
\label{ThmObjPert}
\end{thm}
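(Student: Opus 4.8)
The plan is to reduce the privacy analysis of Algorithm \ref{ExtendedObjPert} to the $K$-norm mechanism of Proposition \ref{prop:KNorm} applied to an appropriate gradient statistic, adapting the argument of \citet{Kifer2012:PrivateCERM} to Definition \ref{DP} and to the generalizations made (arbitrary norm $\lVert\cdot\rVert_K$, tuning parameter $q$, tightened $\gamma$). First I would set up the standard change-of-variables: for a fixed database $X$ with $\delta(X,X')=1$, the map $\theta \mapsto V$ defined implicitly by the stationarity condition
\[
\nabla \hat{\mscr L}(\theta;X) + \tfrac1n\nabla r(\theta) + \tfrac{\gamma}{n}\theta + \tfrac1n V = 0
\]
is a bijection between a neighborhood of the minimizer and the support of $V$ (using that $\Theta$ is convex, $r$ convex, and the added quadratic $\tfrac{\gamma}{2n}\theta^\top\theta$ makes the objective strongly convex so the argmin is unique and the map is well-defined). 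The released $\theta_{DP}$ is then a deterministic function of the noise $V$, so by the change-of-variables formula the density of $\theta_{DP}$ at a point $\theta$ is $f_V(V(\theta)) \cdot |\det J(\theta)|$, where $J(\theta) = \tfrac1n\bigl(n\nabla^2\hat{\mscr L}(\theta;X) + \nabla^2 r(\theta) + \gamma I\bigr)$ is the Jacobian of the inverse map.

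Next I would form the likelihood ratio between the densities induced by $X'$ and by $X$ at the same output point $\theta$, which factors into two pieces: the ratio of noise densities $f_V(V'(\theta))/f_V(V(\theta))$ and the ratio of Jacobian determinants $|\det J'(\theta)|/|\det J(\theta)|$. For the noise-density ratio, $V(\theta) - V'(\theta) = n\bigl(\nabla\hat{\mscr L}(\theta;X') - \nabla\hat{\mscr L}(\theta;X)\bigr) = \nabla\ell(\theta;x'_j) - \nabla\ell(\theta;x_j)$ for the single index $j$ where $X,X'$ differ (note the $1/n$ in the averaged loss cancels against the $n$), and by the sensitivity hypothesis this difference has $\lVert\cdot\rVert_K$-norm at most $\Delta$; hence by the triangle inequality for $\lVert\cdot\rVert_K$ and the form $f_V(v)\propto\exp(-\tfrac{\epsilon q}{\Delta}\lVert v\rVert_K)$, this ratio is bounded by $\exp(\epsilon q)$. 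For the Jacobian ratio, I would write $\nabla^2\hat{\mscr L}(\theta;X') - \nabla^2\hat{\mscr L}(\theta;X) = \tfrac1n\bigl(\nabla^2\ell(\theta;x'_j) - \nabla^2\ell(\theta;x_j)\bigr)$, a rank-bounded perturbation whose eigenvalues lie in $[-\lambda/n, \lambda/n]$ by the eigenvalue bound; using a determinant-ratio lemma (as in \citet{Kifer2012:PrivateCERM}, e.g.\ $\det(A+E)/\det(A) \le \prod(1 + \mu_i)$ for the generalized eigenvalues $\mu_i$ of $E$ relative to $A$, together with $A \succeq \tfrac{\gamma}{n}I$ from the added regularizer), this ratio is bounded by $(1 + \lambda/\gamma)$ raised to the appropriate power; substituting $\gamma = \lambda/(\exp(\epsilon(q-1))-1)$ is exactly the choice that makes $1 + \lambda/\gamma = \exp(\epsilon(1-q))$, so the Jacobian ratio contributes a factor $\exp(\epsilon(1-q))$. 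Multiplying the two bounds gives $\exp(\epsilon q)\cdot\exp(\epsilon(1-q)) = \exp(\epsilon)$, which is the $\epsilon$-DP guarantee of Definition \ref{DP}; an edge case to handle is when $\det J'$ is singular (non-invertible Hessian at the boundary), which I would address by a limiting argument or by noting the argmin still stays in the interior of $\Theta$ by strong convexity.

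The main obstacle I expect is the Jacobian/determinant ratio step: making precise the rank of the perturbation $\nabla^2\ell(\theta;x'_j) - \nabla^2\ell(\theta;x_j)$ and getting the exponent in $(1+\lambda/\gamma)^{\text{(rank)}}$ right, since in the general ERM setting (unlike GLMs where $\nabla^2\ell$ is rank-one) the Hessian difference need not be low rank, so one must use only the eigenvalue bound $\lambda$ and control $\det(I + J^{-1}E)$ via $\prod_i(1+\mu_i) \le (1+\lambda/\gamma)$ where in fact a single factor suffices because the generalized eigenvalues are bounded and their product telescopes under the PSD ordering — this is the delicate inequality that the tightened $\gamma$ is calibrated against, and I would need to reproduce the corresponding lemma from \citet{Kifer2012:PrivateCERM} carefully, checking that replacing their $2\lambda/\epsilon$ by $\lambda/(e^{\epsilon(q-1)}-1)$ and their fixed $q=1/2$ by general $q\in(0,1)$ does not break any step. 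The remaining ingredients — convexity giving a unique interior minimizer, the bijection between $V$ and $\theta_{DP}$, and postprocessing (Proposition \ref{PostComp}) if needed to pass from $\theta_{DP}$ to any further released quantity — are routine.
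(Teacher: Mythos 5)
Your proposal is correct and follows essentially the same route as the paper's proof: the change of variables $\theta_{DP}\mapsto V$, the factorization of the density ratio into a noise-density factor bounded by $\exp(\epsilon q)$ via the $K$-norm sensitivity and a Jacobian-determinant factor bounded by $1+\lambda/\gamma=\exp(\epsilon(1-q))$, and the appeal to the successive-approximation machinery of \citet{Kifer2012:PrivateCERM} for non-differentiable $r$ and constrained $\Theta$. The one step you flag as delicate — the rank of the Hessian perturbation — is handled in the paper by splitting numerator and denominator around the common matrix $C$ (the sum over the $n-1$ shared records plus $\nabla^2 r+\gamma I$) and using that each single-record Hessian $\nabla^2\ell(a;x)$ is PSD of rank at most one (as in the GLM losses targeted here), so the numerator contributes at most $1+\lambda/\gamma$ and the denominator is at least $1$.
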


The proof of Theorem \ref{ThmObjPert} mimics the proof in \citet{Kifer2012:PrivateCERM}, and can be found in the Appendix.

\subsection{Logistic Regression via Objective Perturbation}
\label{s:logistic}

In this subsection, we apply the  objective perturbation mechanism from Algorithm \ref{ExtendedObjPert} to the problem of logistic regression. We detail the sensitivity space for this problem, and compare the $\ell_1$, $\ell_2$ and $\ell_\infty$ mechanisms based on the containment and volume orderings of Section \ref{s:compare}.

Our setup is as follows: we observe $X_{ij}\in [-1,1]$ and $Y_i \in \{0,1\}$ for $i=1,\ldots, n$ and $j=1,\ldots, m$ \footnote{It may be necessary to rescale and truncate $X$ such as in \citet{Zhang2012:FunctionalMR} and \citet{Lei2016}.}. We take our loss function to be the negative log-likelihood of the logistic regression model:
\begin{equation}
\label{LossLogistic}
\hat {\mscr L} (\ta;X,Y) = \frac{1}{n} \sum_{i=1}^n \ell(\ta;X_i,Y_i) = \frac1n \sum_{i=1}^n \log(1+\exp(\ta^\top X_i))-Y_i \ta^\top X_i.
\end{equation}
The gradient and hessian of $\ell$ are 
\[
\nabla \ell (\ta;X_i,Y_i) = \l(\frac{\exp(\ta^\top X_i)}{1+\exp(\ta^\top X_i)}-Y_i\r) X_i,\qquad
\nabla^2 \ell(\ta;X_i,Y_i) = \l(\frac{\exp(\ta^\top X_i)}{(1+\exp(\ta^\top X_i))^2}\r) X_i X_i^\top
\]
By inspection, we note that the  eigenvalues of $\nabla^2\ell(\ta;x,y)$ are bounded above by $\la = \frac{m}{4}$. This bound is tight, by taking $X_i = (1,\ldots, 1)^\top$ and $\theta = (0,\ldots, 0)^\top$.

For  objective perturbation, the sensitivity space is slightly different than we defined in Definition \ref{AdjacentOutput}. Instead, we also allow for all values of $\theta$:
\begin{align*}
S &= \l\{ u \in \RR^m \middle| u = \nabla\ell(\ta;X_1,Y_1) - \nabla\ell(\ta;X_2,Y_2), \text{ s.t. } X_1,X_2\in [-1,1], Y_1,Y_2\in \{0,1\}, \text{ and }\ta\in \RR^m\r\}\\
&= \l\{ \l(\frac{\exp(\ta^\top X_1)}{1+\exp(\ta^\top X_1)} -Y_1\r)X_1 - \l( \frac{\exp(\ta^\top X_2)}{1+\exp(\ta^\top X_2)}-Y_2\r)X_2\r\}.
\end{align*}
Note that $\l( \frac{\exp(\ta^\top x)}{1+\exp(\ta^\top x)}\r)\in [0,1]$ no matter $\ta$ or $x$. We see that the per entry sensitivity of $\nabla \ell$ is bounded above by $2$. So, $\Delta_\infty(\nabla \ell) \leq 2$ and $S\subset [-2,2]^m$. For $m\geq 2$, we have found via simulations that the set 
$\l\{c\in \RR^m \middle| \exists k \text{ s.t. } c_i \in \{-2,2\} \text{ for } i\neq k \text{ and } c_k \in\{-1,1\}\r\}$.
is contained in $S$. This suggests that while the $\ell_\infty$-norm may not be optimal, as $m$ increases $\ell_\infty$ gets closer and closer to optimal. From this, we get the approximate sensitivities
$\Delta_\infty(\nabla \ell)  = 2$, $\Delta_2(\nabla \ell) = 2\sqrt{m}$,  and $\Delta_1(\nabla\ell) = 2m$. 
In fact these sensitivity calculations place us in a setting similar to the right plot of Figure \ref{AOS}, where the $\ell_\infty$, $\ell_2$, and $\ell_1$ balls are contained, in that order. Thus, by either the containment or the volume ordering, we have by Theorems \ref{thm:tightness}, \ref{thm:entropy}, and \ref{thm:variance} that $\ell_\infty$ is the preferred $K$-mech of these three options. Furthermore, as the convex hull of the sensitivity space is only slightly smaller than the $\ell_\infty$ ball, we have that the $\ell_\infty$-mech is nearly optimal in the sense of Theorem \ref{thm:optimal}.

\subsection{Logistic Regression Simulations}
\label{s:logisticSimulation}
In this section, we implement Algorithm \ref{ExtendedObjPert} for logistic regression on simulated data. We know from our analysis in the previous section along with the results of Section \ref{s:compare} that $\ell_\infty$-mech should outperform the $\ell_1$ or $\ell_2$ mechanisms. We show through simulations that the performance gains by choosing the $\ell_\infty$-mech are substantial, demonstrating that our choice of $K$-norm mechanism improves statistical utility. 

Our simulation procedure is described in Algorithm \ref{LogisticAlgorithm}. For a DP estimate $\beta_{DP}$, we measure its performance as the $\ell_2$ distance to the true $\beta$: $\ds\lVert \beta_{DP} - \beta\rVert_2$. We set $n=10^4$ and consider $\ep \in \{1/64,1/32,\ldots, 1,2\}$. The DP methods we implement are $\ell_1$-mech with $\De_1=2m$ and $q=1/2$, $\ell_2$-mech with $\De_2=2\sqrt m$ and $q=1/2$, and $\ell_\infty$-mech with $\De_\infty=2$ and $q\in \{1/2,.85\}$. First we compare $\ell_1$, $\ell_2$, and $\ell_2$ with $q=1/2$ as this is the value used in \citet{Chaudhuri2009:Logistic,Chaudhuri2011:DPERM,Kifer2012:PrivateCERM,Yu2014}. We chose $q=.85$ to show that performance can be further improved by tuning $q$. Unfortunately, we do not tune $q$ under DP so this limits its current usability. 

 \begin{algorithm}
 \caption{Logistic Regression on Simulated Data}
\scriptsize
 INPUT: $\ep$ and $n$
 \begin{algorithmic}[1]
  \setlength\itemsep{0em}
 \STATE Set $\beta = (0,-1,\frac{-1}{2},\frac{-1}{4},0,\frac{3}{4},\frac{3}{2})$ and $m=7$
\FOR{each of 100 replicates}
  \STATE Draw $X_{ij} \iid U[-1,1]$ and $U_{i} \iid U[0,1]$ for $i=1,\ldots, n$ and $j = 1,\ldots, m$
  \STATE Set $Y_{i} = \begin{cases} 1&\text{if } U_i < {e^{X\beta}}/(1+e^{X\beta})\\
0&\text{otherwise}\end{cases}$
\ENDFOR
\FOR{each DP estimate and each replicate $(X,Y)$}
\STATE Compute DP estimate $\beta_{DP}$ via Algorithm \ref{ExtendedObjPert}
\STATE Compute $\ell_2$ distance to $\beta$: $L_{DP} = \lVert \beta_{DP} - \beta\rVert_2$
\ENDFOR
\end{algorithmic}
OUTPUT: 
$\displaystyle\mathrm{median}_{\text{replicates}} \{L_{DP}\}$ for each method of DP.
 \label{LogisticAlgorithm}
 \end{algorithm}
\fig{width = .7\linewidth}{L2LogisticExtra}{Comparison of $\ell_1$, $\ell_2$, and $\ell_\infty$-mechanisms for logistic regression on simulated data, measured by $\ell_2$ distance to the true $\beta$. The estimates are via Algorithm \ref{ExtendedObjPert}, and the simulation procedure is described in Algorithm \ref{LogisticAlgorithm} with $n=10^5$. For all estimates, we use $q=1/2$ except for $L_\infty^*$, which uses $q=.85$. For each $\ep$, 100 replicates are used. The upper solid horizontal line indicates the distance between the zero vector and the true $\beta$. The lower solid line indicates the distance between the MLE $\hat \beta$ and the true $\beta$.
}
%
In Figure \ref{fig:L2LogisticExtra}, the $x$-axis indicates the value of $\ep$, and the $y$-axis is the median $\ell_2$ distance between the DP estimates and the true $\beta$. 
In this plot, we see that when we fix $q=1/2$, $\ell_\infty$ is better than $\ell_2$, which beats $\ell_1$; for example, $\ell_\infty$ saves approximately twice the privacy-loss budget $\epsilon$ compared to $\ell_2$ in this case, which is particularly important for small values of $\ep$. Specifically, the $\ell_\infty$ mechanism achieves a utility value of approximately 1 at $\ep=1/16$, whereas for $\ell_1$ to achieve a similar utility, it requires $\ep=1/8$. Recall that the $\ell_2$ is the norm used in \citet{Chaudhuri2009:Logistic}, \citet{Chaudhuri2011:DPERM} and \citet{Kifer2012:PrivateCERM}, and $\ell_1$ is used in \citet{Yu2014}. In \citet{Yu2014}, they argue that the $\ell_1$ should give better performance than $\ell_2$, which contradicts our result here. In their analysis, however they use $\Delta_1$ as an approximation for $\Delta_2$ which hinders the performance of $\ell_2$. Instead of either $\ell_1$ or $\ell_2$, we recommend the $\ell_\infty$ norm for this application, as the performance gains are substantial. Gains like this could have significant impact on real life applications and usability of DP mechanisms. 

We also include the $\ell_\infty$-mechanism with $q=.85$, labeled as $L_{\infty^*}$ in Figure~\ref{fig:L2LogisticExtra}. This tuning value $q$ was not chosen under DP, but does demonstrate that utility can be even further improved by considering other values of $q$. The choice of $q$ under DP is left to future researchers. 

\section{Linear Regression via Functional Mechanism }
\label{s:functional}
In this section, we show that the functional mechanism, a natural mechanism for linear regression, can be easily modified to allow for arbitrary $K$-norm mechanisms as well. We show that the convex hull of the sensitivity space can be written explicitly in this case, allowing for exact implementation of the optimal $K$-norm mechanism as determined by Theorem \ref{s:optimal}. In Subsection \ref{s:linearSimulations}, 
we demonstrate that the optimal $K$-mech improves the accuracy of the privatized estimates, as measured by the confidence interval coverage.  
In Subsection \ref{s:housing}, we show through a real data example that the choice of $K$-mech reduces the noise introduced in the privatized estimates compared to the non-private estimates. 
\subsection{Linear Regression Setup}
\label{s:linear}
Consider the setting where we have as input $X$, a $n\times(p+1)$ matrix with left column all $1$, and $Y$ a $n\times 1$ vector such that $X_{ij},Y_i\in [-1,1]$ for all $i,j$ \footnote{It may be necessary to rescale and truncate $X$ and $Y$ such as in \citet{Zhang2012:FunctionalMR} and \citet{Lei2016}.}. We want to estimate $\beta$ in the model $Y=X\beta+e$, where $e\sim N(0,\sas I)$. 
There are many ways of estimating $\beta$ under DP, such as those discussed in Section \ref{s:conclusions}.
Our approach in this section is to sanitize $X^\top X$ and $X^\top Y$ by either $\ell_1$-mech, $\ell_\infty$-mech, or the optimal $K$-mech, and obtain an estimate of $\beta$ via post-processing. This approach is similar to the functional mechanism \citep{Zhang2012:FunctionalMR}, which adds noise to the coefficients of the squared-loss function, before minimizing it. The differences in our approach compared to that in \citet{Zhang2012:FunctionalMR} are \begin{inparaenum}[1)]\item a tighter sensitivity analysis which requires less noise, \item an extension to any $K$-mech rather than just $\ell_1$-mech, and \item the use of Equation \eqref{DPEstimate} rather than minimizing the perturbed loss function (which results in more stable estimates)\end{inparaenum}. These three differences provide a useful extension of the work in \citet{Zhang2012:FunctionalMR}, resulting in better utility under $\ep$-DP.

Let $T$ be the vector of unique, non-constant entries of $X^\top X$ and $X^\top Y$ ($T$ has length $d = [\frac12 (p+1)(p+2)-1]+[p+1]$). From the sanitized version of $T$, we can recover approximations of $X^\top X$ and $X^\top Y$, which we call $(X^\top X)^*$ and $(X^\top Y)^*$ respectively. Then, using the postprocessing property of DP, Proposition \ref{PostComp},  our DP estimate of $\beta$ is 
\begin{equation}
\label{DPEstimate}
\hat\beta^* = [(X^\top X)^*]^\dagger(X^\top Y)^*,
\end{equation}
where $A^\dagger$ denotes the Moore-Penrose pseudoinverse of matrix $A$.

We sanitize $T$ by adding noise from either $\ell_1$, $\ell_\infty$, or the optimal $K$-mech. We rescale the elements of $T$ so that they all have sensitivity $2$. For example, since all $X_{ij}\in [-1,1]$ the sensitivity of $\sum_i X_{ij}$ is 2, but the sensitivity of $\sum_{i} X^2_{ij}$ is only 1. So, we replace $\sum_{i} X_{ij}^2$ with $2\sum_{i} X_{ij}^2$. The value in rescaling this way is demonstrated in a cautionary example in Section \ref{s:conclusions}. After adding noise, we can divide by $2$ to recover our estimate of $\sum_{i} X^2_{ij}$.

In order to implement the optimal $K$-mech, we use Algorithm \ref{SampleK} found in Section \ref{s:sampling} which requires us to sample uniformly from $K_T$, the convex hull of the sensitivity space 
\[S_T =\l \{u\in \RR^m\mid   
\begin{array}{c}
\exists\  \de\l((X,Y),(X',Y')\r)=1 \\
\text{ s.t. } u = T(X,Y)-T(X',Y')
\end{array}\r\}.\]
To understand the geometry of $K_T$,
we consider the two following subproblems:
\begin{itemize}
\item The convex hull of the sensitivity space for $(\sum_i X_i, \sum_i 2X_i^2)$, where $X_i\in [-1,1]$ is
\[K_2 =\l\{(u_1,u_2)\in[-2,2]^2 \text{ s.t. } 
|u_2|\leq
\begin{cases}
2-2(u_1-1)^2,&\text{if }u_1>1\\
2-2(u_1+1)^2,&\text{if }u_1<-1
\end{cases}
\r\}.\]
Note that $K_2$ is the sensitivity space studied in Example \ref{SquaredExample}. 
\item Suppose we want to release $(\sum_i X_i,\sum_i Y_i,\sum_i X_iY_i)$ where $X_i,Y_i\in [-1,1]$. The convex space for this statistic vector is
$K_{3} = \l\{ (u_1,u_2,u_3)\in [-2,2]^3 \text{ s.t. } |u_1|+|u_2|+|u_3|\leq 4\r\}.$
\end{itemize}
 For brevity, we omit the arguments that these are indeed the correct convex hulls. Then, $K_T$ is the $d$-dimensional convex set, which consists of several copies of $K_{2}$ and $K_{3}$ in different subspaces. This characterization of $K_T$ allows us to determine if a given vector lies in $K_T$. Using this, we are able to sample from the optimal $K$-mech via Algorithm \ref{RejectionAlgorithm} in Section \ref{s:appendix}. 

\subsection{Linear Regression Simulations}\label{s:linearSimulations}
In this section, we measure how close the estimates generated by \eqref{DPEstimate} are to the true $\beta$, for each DP mechanism. We consider an estimate close enough to the true $\beta$ if each entry of the estimate is in the $95\%$ non-private confidence interval (CI) for that entry of $\beta$. 

The procedure we follow is as described in Algorithm \ref{SimulationAlgorithm}. We use point-wise CIs as these are often used by practitioners to determine the significance of coefficients. If the DP estimate is in the CI, one would likely make the same inference using the DP estimate as the MLE. 

 For our simulations, we set $p=5$, $n=10^4$ or $n=10^6$, and consider $\ep\in\{1/16,1/8,\ldots,2,4\}$.
The results of these simulations are in Figure \ref{SimulationFigure}, where the $x$-axis denotes varying values of $\ep$, and the $y$-axis measures the proportion of times the estimate $\hat \beta^*$ falls in the $95\%$-CI of $\beta$. From these plots, we see that $\ell_\infty$-mech can reach the performance of $\ell_1$-mech with about half the privacy budget. For instance, in the bottom plot of Figure \ref{SimulationFigure} the $\ell_1$-mech achieves a fraction of approximately .7 at $\ep=1/2$, whereas the $\ell_\infty$-mech achieves a similar utility at $\ep=1/4$. This means that choosing $\ell_\infty$ over $\ell_1$-mech  results in DP estimates much closer to the true $\beta$. On the other hand, the $\ell_\infty$-mech and the optimal $K$-mech (derived in Subsection \ref{s:linear}) perform very similarly. Note that increasing $n$ improves the performance of all methods, but does not change the relative performance of these methods. Additional simulations indicated that the relative performance of the methods is similar for $n=10^2,10^3,10^5$ as well.

\begin{algorithm}
\caption{Simulate  Confidence Coverage}
\scriptsize
INPUT: $p$, $n$, $\ep$

\begin{algorithmic}[1]
  \setlength\itemsep{0em}
  \STATE Set $\beta = (0,-1.5,\ldots,1.5)\in \RR^{p+1}$, where the last $p$ entries are equally spaced.
\FOR{each of 200 replicates}
\STATE Draw $X_{ij}^0 \iid U[-1,1]$ for $i=1,\ldots, n$ and $j=1,\ldots,p$
\STATE Set $X = [1_n,X^0]$ and Draw $Y\sim N(X\beta, I_n)$
\ENDFOR
\FOR{ each replicate $(X,Y)$}
\STATE Compute $95\%$ CI for each of the last $p$ entries of $\beta$, based on $\hat \beta_{MLE}$
\ENDFOR
\FOR{each DP method and each replicate $(X,Y)$}
\STATE Compute the DP estimate $\beta_{DP}$ via \eqref{DPEstimate}
\STATE Compute average coverage :\\$C_{DP} = \frac{1}{p}\sum_{i=1}^p \#(\text{entries of $\beta_{DP}$ in its CI})$.
\ENDFOR
\end{algorithmic}
OUTPUT: $\frac{1}{200}\sum_{\text{replicates}}C_{DP}$ for each method of DP.
\label{SimulationAlgorithm}
\end{algorithm}

\begin{figure}[!h]
\begin{center}
\includegraphics[width=.7\textwidth]{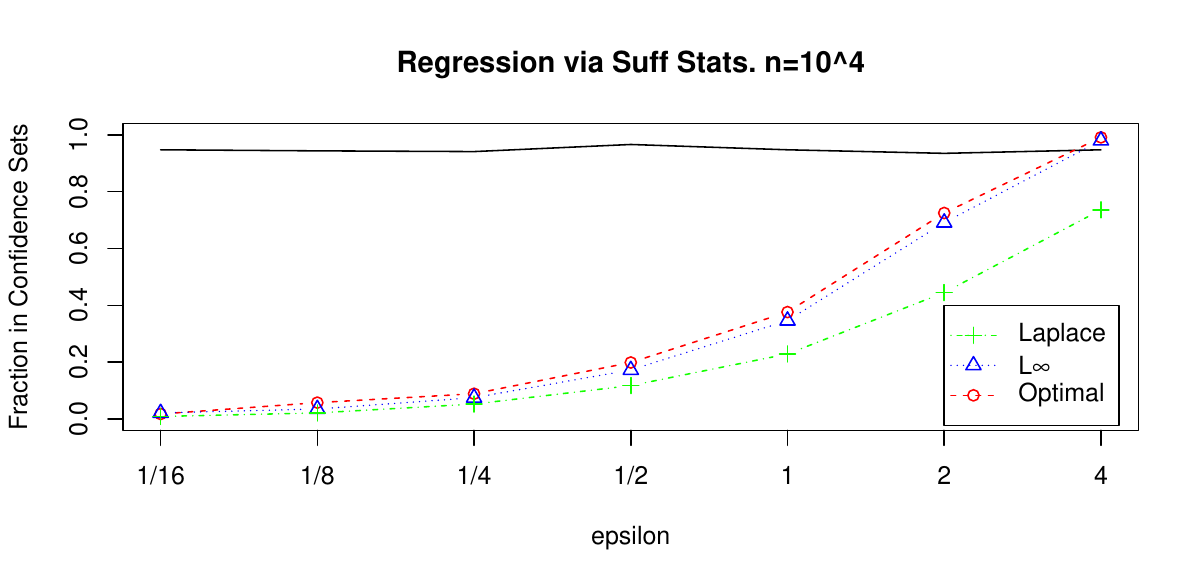}
\includegraphics[width=.7\textwidth]{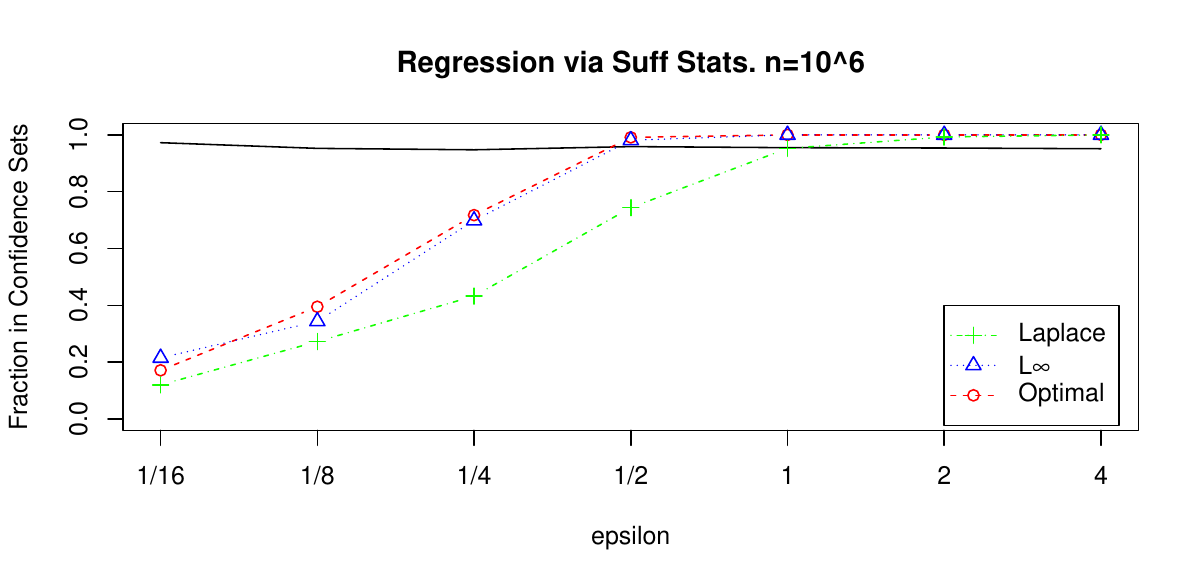}
\end{center}
\caption{Comparison of $\ell_1$-mech, $\ell_\infty$-mech, and optimal $K$-mech for linear regression, via Algorithm \ref{SimulationAlgorithm}. The estimates used in the above plot are via \eqref{DPEstimate}. For all simulations, $p=5$ and 200 replicates are used. In the top plot, $n=10^4$ and in the bottom plot, $n=10^6$. The solid line is how often the true $\beta$ falls in the confidence intervals, which is $\approx .95$. }
\label{SimulationFigure}
\end{figure}

\subsection{Linear Regression on Housing Data}\label{s:housing}
In this section we analyze a dataset containing information on $348,189$ houses in the San Francisco Bay area, collected between 2003 and 2006. Our response is rent, and the predictors are 
lot square-footage,
base square-footage,
 location in latitude and longitude,
 time of transaction,
 age of house,
 number of bedrooms,
and five indicators for the counties: Alameda, Contra Costa, Marin \& San Francisco
\& San Mateo, Napa \& Sonoma, Santa Clara. 

To clean the data, we follow a similar procedure  as in \citet{Lei2011:DPMest} and \citet{Lei2016}. We remove houses with prices outside of the range $105$ to $905$ thousand dollars, as well as houses with square-footage larger than $3000$.
In total, we have one response, 12 predictors, and 235,760 observations. As additional pre-processing, we apply a log-transformation to rent and both measures of square-footage. We then truncate all variables between the $0.0001$ and $.9999$ quantiles and then linearly transform the truncated variables to lie in $[-1,1]$. This procedure results in well-distributed values in each attribute. We also found that after this pre-processing, the assumptions of the linear model were reasonable.

As described in Subsection \ref{s:linear}, we  form the vector $T$ based on this data, add to it noise from either $\ell_1$-mech, $\ell_\infty$-mech, or the optimal $K$-mech, and post-process $T$ to get an estimate of the coefficient vector $\beta$ via \eqref{DPEstimate}.

We measure the performance of each DP estimate $\beta_{DP}$ by its $\ell_2$ distance to the MLE estimate $\beta_{MLE}$: $\lVert \beta_{DP} - \beta_{MLE}\rVert_2$. 
We give plots of the performance of $\ell_1$-mech, $\ell_\infty$-mech, and the optimal $K$-mech under this measure in Figure \ref{L2}. Each curve is an aggregate over 1000 replications of the DP algorithm. 
We see in Figure \ref{L2} that choosing $\ell_\infty$-mech over $\ell_1$-mech can affect performance about as much as doubling the privacy budget $\ep$. For a fixed $\ep$, the estimates from $\ell_\infty$-mech are considerably closer to the MLE and give substantially better estimates than the estimates from $\ell_1$-mech. On the other hand, the optimal $K$-mech does not offer sizeable benefits over $\ell_\infty$-mech.
\begin{figure}[!h]
\begin{center}
\includegraphics[width=.7\linewidth]{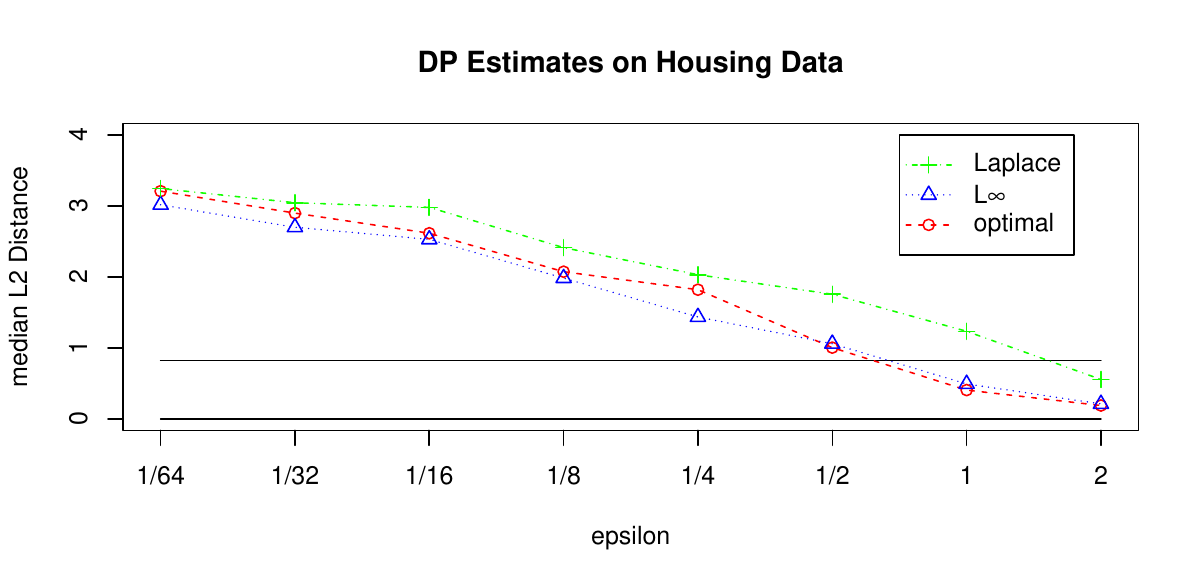}
\end{center}
\caption{Comparison of $\ell_1$-mech, $\ell_\infty$-mech, and optimal $K$-mech, calculated via \eqref{DPEstimate} for linear regression on Housing Data, measured by $\ell_2$ distance to $\hat \beta$. The solid line at height $\approx .82$ is the $\ell_2$ distance between the zero vector and $\hat \beta$. The lower horizontal line is at height $0$. 
  For each $\ep$,  1000 replicates are aggregated for each DP mechanism.}
\label{L2}
\end{figure}

\begin{remark}
The $\ell_1$ and $\ell_2$ sensitivities used in this example are not exact, and it may be possible to improve the performance of the $\ell_1$ and $\ell_2$ mechanisms slightly by optimizing these sensitivity calculations. Furthermore, the sensitivity space developed in Subsection \ref{s:linear}  assumed that all coordinates in $X$ can take values between $-1$ and $1$. However, in this example we have indicator variables which are inherently dependent (i.e. if the Alameda indicator variable is active, then the Contra Costa indicator must be inactive). This additional structure implies that the sensitivity space is actually smaller than the generic one developed in Subsection \ref{s:linear}. So, while all of our mechanisms in this section are valid, it is possible that they could be even further improved by taking these observations into account. 
\end{remark}

\section{Discussion}
\label{s:conclusions}


In this paper, we address the problem of releasing a noisy real-valued statistic vector $T$, a function of sensitive data under DP, via the class of $K$-norm mechanisms with the goal of minimizing the noise added to achieve privacy, and optimizing the use of the privacy-loss budget $\ep$. We propose a new notion we refer to as the {\em sensitivity space} to understand the geometric relation between a statistic and its sensitivity. We used the sensitivity space to study the class of $K$-mechs, a natural extension of the Laplace mechanism. Rather than naively using iid Laplace (the $\ell_1$-mech) or any other $K$-mech, we recommend choosing the $K$-mech based on properties of the  sensitivity space. To this end we propose three methods of evaluating the $K$-norm mechanisms in order to identify the optimal one for a fixed arbitrary (linear or non-linear) statistic $T$ and sample size $n$. We then showed a result fundamental for designing improved differentially private mechanisms: that the convex hull of the sensitivity space results in the optimal $K$-mech, which is stochastically tightest, has minimum entropy, and minimizes the conditional variance. On the other hand, if two (or more) $K$-mechs in particular are to be compared, this can be done by either checking for the containment of their associated norm balls, or by comparing the volume of their norm balls. In this case, even if using the convex hull is computationally intractible, we offer a framework  as illustrated in Figure \ref{fig:diagram}, which results in simple decision criteria to choose between several candidate $K$-mechs.

The proposed volume and containment orderings could be of broader statistical interest outside of privacy. In fact, we show that these orderings are connected to \emph{more scattered} and \emph{more dispersed}, which extend the notion of stochastic dominance to multivariate settings \citet{Zuo2000Nonparam}. Furthermore, as we showed in Section \ref{s:depth} stochastic tightness is closely related to stochastic depths, and may be of interest in that field as well.

Our extensions of objective perturbation and functional mechanism are also significant in the broader differential privacy community. Our modifications emphasize the flexibility that these mechanisms have in tailoring the noise-adding distribution to the problem at hand, and we show that by using the comparison criteria to determine the optimal $K$-mech, we are able to improve the performance of the output of these mechanisms, in terms of statistical utility. To facilitate the ease of implementing our proposed methods, we  provide a method of  sampling arbitrary $K$-mechs in Subsection \ref{s:sampling}, via rejecting sampling, and collect computationally efficient algorithms to sample the $\ell_1$, $\ell_2$, and $\ell_\infty$ mechanisms from the literature as well.

While the focus of this paper has been on reducing the noise introduced at a fixed level of $\ep$, alternatively our techniques allow one to achieve a desired amount of accuracy with a reduced privacy level $\ep$. This saves more of the privacy-loss budget for the computation of additional statistics or other statistical tasks, improving the usability of differentially private techniques. In fact, through the applications of linear and logistic regression, via the functional and objective perturbation mechanisms, we showed that choosing the $K$-mech based on our proposed criteria allows the same accuracy to be achieved with about half of the privacy-loss budget. The question of determining and setting the privacy-loss budget has come to prominence more recently with advances in DP methodologies and tools, and implementation of formal privacy in large organizations, especially those who must share data more broadly for an array of potential statistical analyses and maintain confidentiality, such as the U.S. Census. Theoretical results and practically implementable solutions such as those presented in this paper help address this fundamental question,

In Section \ref{s:compare}, we developed several methods of comparing $K$-norm mechanisms, and in Sections \ref{s:logisticSimulation}, \ref{s:linearSimulations}, and \ref{s:housing} we demonstrated that the proposed criteria provide a substantial improvement in practical utility. However, this may not always be the case. 
In particular, we provide an example below illustrating that even when the norm ball $H$ is contained in $K$, this does not imply that the $H$-mechanism outperforms the $K$-mechanism in terms of the marginal variance or the expected value of the $\ell_1$, $\ell_2$, or $\ell_\infty$ loss. 

{\bf A cautionary Example } Consider the example where $K$, the convex hull of the sensitivity space, is the blue solid rectangle in Figure \ref{fig:diamond.pdf}, and $H$ is the red textured diamond which contains $K$. We assume that the sensitivity of both is one. Then Theorem \ref{thm:variance} applies in this setting, and we know that the $K$-norm mechanism has smaller conditional variance than the conditional variance of the $H$-norm mechanism in every direction. However, the marginal variance of $H$ in the $x$ and $y$ coordinates are approximately $24139.87$ and   $242.09$ respectively, whereas the marginal variance of $K$ in the same directions are $40068.37$ and   $3.99$ using 100,000 samples from both mechanisms. In Table \ref{t:loss}, we see that the expected $L_\infty$, $L_1$ and $L_2$ loss are all minimized by $H$ rather than $K$. This example demonstrates a ``Simpson's paradox'' \citep{Blyth1972}, where the behavior of conditional variables is very different from the marginal variables

One reason this phenomena occurs is that the scaling in the $x$ and $y$ directions are of very different magnitudes. In terms of the marginal variance, the $K$-mech has a relatively large reduction in the $y$ direction, but a relatively moderate increase in the $x$ direction. However, in absolute terms, the increase in the $x$ direction dwarfs the reduction in the $y$-direction.

This same phenomenon could be constructed by instead setting the scales of $x$ and $y$ more equally, but using a loss function that disproportionately penalizes variability in the $x$ direction. Thus, it is important to note that in multivariate settings, without knowing the scaling of the variables and the particular loss function of interest, we cannot guarantee that any $K$-mech will outperform another in terms of these particular metrics.

On the other hand by scaling each entry of the statistic $T$ equally, as we did in Sections \ref{s:logistic} and \ref{s:linear} we are able to mitigate this problem as we saw in our numerical examples. 

\fig{width=.5\linewidth}{diamond.pdf}{The solid blue rectangle represents the space $K$ and the textured red diamond represents the space $H$. Note that $K$ is entirely contained in $H$.}

\begin{table}[H]
  \label{t:loss}
  \caption{Expected loss of the $K$ and $H$-norm mechanisms, described in Section 2. Monte carlo standard errors are in parentheses.}
  \centering
  \begin{tabular}{c|ccc}
    & $\ell_\infty$&$\ell_2$&$\ell_1$\\\hline
    $K$&150.17(0.418)&150.19(0.418)&151.66(0.420)\\
    $H$&110.89(0.345)&112.27(0.343)&120.89(0.349)
  \end{tabular}
\end{table}

{\bf Future work} As the Laplace mechanism is a part of many other mechanisms (i.e., Stochastic Gradient Descent \citep{Song2013:StochasticGradient} and Subsample-Aggregate \citep{Smith2011:Privacy-preservingSE}), our methodology can be used to improve finite sample performance of other mechanisms as well. Recently, \citet{Reimherr2019} developed a new mechanism, which can be viewed as a hybrid of the exponential mechanism and objective perturbation, which they call the $K$-norm gradient mechanism (KNG). Using the sensitivity space and tools of this paper, we may be able to optimize the performance of KNG, by choosing the optimal norm. Optimizing the performance of these mechanisms allows for improved statistical inference, increased usability of DP methods, and better use of the privacy-loss budget.



\bibliographystyle{plainnat}
\bibliography{./DataPrivacyBib.bib}{}

\section{Appendix }
\label{s:appendix}
\subsection{Implementing $K$-Norm Mechanisms}
\label{s:sampling}

\begin{algorithm}
\caption{Sampling from $\ell_1$-mech}
\scriptsize
INPUT: $T(X)$, $\Delta_1(T)$, and $\ep$
\begin{algorithmic}[1]
  \STATE Set $m\defeq {\rm length}(T(X))$.
\STATE Draw $V_j \iid \mathrm{Laplace}((\frac{\ep}{\Delta_1(T)})^{-1})$ for $j=1,\ldots,m$
\STATE Set $V = (V_1,\ldots, V_m)^\top$
\end{algorithmic}
OUTPUT: $T(X)+V$
\label{SampleL1}
\end{algorithm}

\begin{algorithm}
\caption{Sampling from $\ell_2$-mech \citep{Yu2014}}
\scriptsize
INPUT: $T(X)$, $\Delta_2(T)$, and $\ep$
\begin{algorithmic}[1]
  \STATE Set $m\defeq {\rm length}(T(X))$.
\STATE Draw $Z\sim N(0,I_m)$
\STATE Draw $r \sim \mathrm{Gamma}(\al = m,\beta = \ep/\Delta_2(T))$
\STATE Set $V = \frac{rZ}{\lVert Z\rVert_2}$
\end{algorithmic}
OUTPUT: $T(X)+V$
\label{SampleL2}
\end{algorithm}

In this section, we review algorithms to implement the $\ell_1,\ell_2,\ell_\infty$-mechs. Then we give a  method to implement arbitrary $K$-mechs. The $\ell_1$-mech can be easily implemented via Algorithm \ref{SampleL1}, which only uses independent Laplace random variables. Algorithm \ref{SampleL2}, which appears in \citet{Yu2014}, gives a method to sample the the $\ell_2$-mech. 
Algorithm \ref{SampleLinfty} appears in \citet{Steinke2017} and gives a method to sample the $\ell_\infty$-mech.

\begin{algorithm}
\caption{Sampling from $\ell_\infty$-mech   \citep{Steinke2017}}
\scriptsize
INPUT: $T(X)$, $\Delta_2(T)$, and $\ep$
\begin{algorithmic}[1]
  \STATE Set $m\defeq {\rm length}(T(X))$.
\STATE Set $U_j \iid U(-1,1)$ for $j=1,\ldots, m$
\STATE Draw $r \sim \mathrm{Gamma}(\al = m+1,\beta = \ep/\Delta_\infty(T))$
\STATE Set $V = r\cdot (U_1,\ldots, U_m)^\top$
\end{algorithmic}
OUTPUT: $T(X)+V$
\label{SampleLinfty}
\end{algorithm}

In general, sampling from the $K$-Norm mechanisms is non-trivial. \citet[Remark 4.2]{Hardt2010:GeometryDP}
 gives a method of sampling from $K$-mech, provided that one can determine whether a point is in $K$. Precisely, we require a function $I_{K}:\RR^m \rightarrow\{0,1\}$
given by $I_K(u) = 1$ if $u\in K$ and $I_K(u)=0$ otherwise. 
 in Algorithm \ref{SampleK}, we propose a  procedure to sample from $K$ using rejection sampling (see \citet[Chapter 11]{Bishop2006} for an introduction).

\begin{algorithm}
\caption{Sampling from $K$-Norm Mechanism with Rejection Sampling}
\label{RejectionAlgorithm}
\scriptsize
INPUT: $\ep$, $\Delta_\infty(T)$, $\Delta_K(T)$, $I_K(\cdot)$, $T(X)$
\begin{algorithmic}[1]
\STATE Set $m = \mathrm{length} (T(X))$.
\STATE Draw $r \sim \mathrm{Gamma}(\al = m+1, \beta = \ep/\Delta_K(T))$
\STATE Draw $U_j \iid \mathrm{Uniform}(-\Delta_\infty(T), \Delta_\infty(T))$ for $j=1,\ldots, m$
\STATE Set $U = (U_1,\ldots, U_m)^\top$
\STATE If $I_K(U) = 1$, release $T(X) + r\cdot U$, else go to 3).
\end{algorithmic}
\label{SampleK}
\end{algorithm}

\begin{example}
Back to the setting of Example \ref{SquaredExample}. The space $K_T = \mathrm{span}(S_T)$ is
\[K_T = \l\{(u_1,u_2)\in [-2,2]^2\middle | |u_2| \leq \begin{cases}
2-(1-u_1)^2&\text{if } u_1\geq 1\\
2-(u_1+1)^2&\text{if } u1<-1\end{cases}\r\}\]
Then for a vector $u = (u_1,u_2)\in \RR^2$, our indicator function is 
$I_{K_T}(u) = 1$ if $u\in K_T$ and $I_{K_T}(u)=0$ otherwise.
\end{example}

In \citet{Hardt2010:GeometryDP}, they propose a random grid-walk procedure to sample from $K$. However, this only gives approximate sampling. On the other hand, Algorithm \ref{SampleK} is easily implemented, and gives exact uniform sampling from $K$.

\subsection{Proofs and Technical Lemmas}
\label{s:proofs}



\begin{lem}\label{lem:Lebesgue}
  Let $\mscr S$ be a colletion of (Lebesgue) measurable sets in $\RR^m$ such that $\arg\inf_{S\in \mscr S} \lambda(S)$ is unique, where $\la(\cdot)$ is Lebesgue measure. Let $A:\RR^m\rightarrow \RR^m$ be an invertible linear transformation. Then $\arg\inf_{T\in A\mscr S} \lambda(T) = A \left(\arg\inf_{S\in \mscr S} \lambda(S)\right)$.
\end{lem}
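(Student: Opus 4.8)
**The plan is to use the multiplicativity of Lebesgue measure under linear maps, together with the hypothesis of uniqueness.**

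First I would recall the standard fact that for an invertible linear transformation $A:\RR^m \to \RR^m$ and any measurable set $S\subseteq\RR^m$, we have $\lambda(AS) = |\det A|\,\lambda(S)$. Since $A$ is invertible, $|\det A|$ is a fixed positive constant, call it $c>0$. This means the map $S \mapsto AS$ between the collections $\mscr S$ and $A\mscr S$ scales all measures by the same factor $c$, and hence preserves the ordering of sets by measure.

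Next I would argue as follows. Let $S^* = \arg\inf_{S\in\mscr S}\lambda(S)$, which exists and is unique by hypothesis. For any $T\in A\mscr S$, write $T = AS$ for some $S\in\mscr S$; then $\lambda(T) = c\,\lambda(S) \geq c\,\lambda(S^*) = \lambda(AS^*)$, with equality if and only if $\lambda(S) = \lambda(S^*)$, which by uniqueness of the minimizer forces $S = S^*$, i.e. $T = AS^*$. This shows $AS^*$ is the unique element of $A\mscr S$ of minimal measure, i.e. $\arg\inf_{T\in A\mscr S}\lambda(T) = AS^* = A\left(\arg\inf_{S\in\mscr S}\lambda(S)\right)$, as claimed. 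One should also note for completeness that $\inf_{S\in\mscr S}\lambda(S)$ being attained at a unique $S^*$ is needed precisely to conclude that the bijection $S\mapsto AS$ carries the unique minimizer to the unique minimizer; without uniqueness the statement would still hold at the level of sets of minimizers.

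I do not anticipate a serious obstacle here — this is essentially a one-line consequence of the change-of-variables formula for Lebesgue measure under linear maps. The only point requiring any care is making sure $A\mscr S$ really consists of measurable sets (true, since invertible linear maps send measurable sets to measurable sets) and that the uniqueness hypothesis is invoked correctly to pin down the $\arg\inf$ as a specific set rather than a collection. This lemma is a purely technical tool, presumably used later to transfer the optimality of concentration sets or convex hulls through the affine-equivariance arguments (e.g. in the proof of affine invariance of the depth function in Theorem \ref{thm:depth}).
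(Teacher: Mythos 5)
Your proof is correct and follows essentially the same route as the paper: both rest on the change-of-variables identity $\lambda(AS) = c\,\lambda(S)$ with $c = |\det A|$ a fixed positive constant, so that the bijection $S \mapsto AS$ preserves the ordering by measure and carries the unique minimizer to the unique minimizer. Your writeup is slightly more careful about where the uniqueness hypothesis enters (and gets the constant right, where the paper writes $|\det(A^{-1})|$ in place of $|\det A|$, an immaterial slip since either is a positive constant), but the argument is the same.
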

\begin{proof}
  First note that for any measurable $S$,
  \[ \lambda(AS) = \int_{AS} 1 \ dx = \int_S |\mathrm{det}(A^{-1})| \ du 
    = |\mathrm{det}(A^{-1})| \lambda(S),\]
  where we use the change of variables formula. Thus,
  \[\arg\inf_{T\in A\mscr S} \lambda(T) = A\arg\inf_{S\in \mscr S} \lambda(AS)
    =A \arg\inf _{S\in \mscr S} |\mathrm{det}(A^{-1})| \lambda(S)
    =A\arg\inf_{S\in \mscr S} \lambda(S).\qedhere\]
\end{proof}

\begin{lem}\label{lem:Sprops}
  Let $X$ be a random variable on $\RR^m$ which is unimodal (center zero), continuous, and decreasing away from the center (i.e. $f_X(x)\leq f_X(ax)$ for $a\in [0,1]$), and such that for all $t>0$, $P(\{x\mid f_X(x) = t\})=0$. Then
  \begin{enumerate}
  \item $S_X^\alpha$ is unique,
  \item $S_X^\alpha\subset S_X^\beta$ for $\alpha\leq \beta$ (nested),
  \item $c S_X^\alpha \subset S_X^\alpha$ for $c\in [0,1]$ (linear closure wrt center),
  \item $S_{Ax}^\alpha = A S_X^\alpha$ for any linear, invertible map $A:\RR^m \rightarrow \RR^m$.
  \end{enumerate}
\end{lem}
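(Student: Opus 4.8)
The plan is to reduce everything to the standard fact that a minimum-Lebesgue-measure set of prescribed probability content is a super-level set of the density \citep[Theorem 9.3.2]{Casella2002}, and then read off properties 2--4 by inspecting such super-level sets. Write $G(t) = P(f_X(X)\ge t)$ and $A_t = \{x\in\RR^m : f_X(x)\ge t\}$. For property 1, I would first note that the hypothesis $P(\{x: f_X(x) = t\}) = 0$ for all $t>0$ makes $G$ continuous on $(0,\infty)$ (left-continuity because $A_s\downarrow A_t$ as $s\uparrow t$; right-continuity because $G(t+) = P(f_X(X) > t) = G(t) - P(f_X(X) = t) = G(t)$), with $G(0+) = 1$ and $G(\infty-) = 0$ since $\{f_X = 0\}$ and $\{f_X = \infty\}$ are $P$-null. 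Hence for each $\alpha\in(0,1)$ there is $t_\alpha > 0$ with $P(X\in A_{t_\alpha}) = G(t_\alpha) = \alpha$. Then I would run the usual exchange argument: for any measurable $S$ with $P(X\in S)\ge\alpha$, decompose $\lambda(S) - \lambda(A_{t_\alpha}) = \lambda(S\setminus A_{t_\alpha}) - \lambda(A_{t_\alpha}\setminus S)$ and chain $t_\alpha\lambda(S\setminus A_{t_\alpha}) \ge \int_{S\setminus A_{t_\alpha}} f_X \ge \int_{A_{t_\alpha}\setminus S} f_X \ge t_\alpha\lambda(A_{t_\alpha}\setminus S)$, where the outer bounds use $f_X < t_\alpha$ off $A_{t_\alpha}$ and $f_X\ge t_\alpha$ on $A_{t_\alpha}$, and the middle inequality is $P(X\in S)\ge P(X\in A_{t_\alpha})$ after cancelling $S\cap A_{t_\alpha}$. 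This gives $\lambda(S)\ge\lambda(A_{t_\alpha})$; tracking the equality case, the strict inequality $f_X < t_\alpha$ off $A_{t_\alpha}$ forces $\lambda(S\setminus A_{t_\alpha}) = 0$, and $f_X = t_\alpha$ a.e.\ on $A_{t_\alpha}\setminus S$ forces $\lambda(A_{t_\alpha}\setminus S)\le\lambda(\{x: f_X(x) = t_\alpha\}) = 0$ by the no-atom hypothesis, so the minimizer is unique up to a Lebesgue-null set. I would fix $S_X^\alpha := A_{t_\alpha}$ as the canonical representative (the choice of $t_\alpha$ being immaterial, since distinct valid thresholds give sets differing only by null sets) and prove the remaining parts for it.

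Properties 2 and 3 are then immediate. If $\alpha\le\beta$ then $t_\alpha\ge t_\beta$ since $G$ is nonincreasing, so $S_X^\alpha = A_{t_\alpha}\subseteq A_{t_\beta} = S_X^\beta$. If $x\in S_X^\alpha$ and $c\in[0,1]$, then $f_X(cx)\ge f_X(x)\ge t_\alpha$ by the ``decreasing away from the center'' hypothesis (applied with $a = c$, the center being $0$), so $cx\in S_X^\alpha$, i.e.\ $cS_X^\alpha\subseteq S_X^\alpha$. For property 4, let $A:\RR^m\to\RR^m$ be invertible linear and $Y = AX$; one checks that $Y$ again satisfies the hypotheses of the lemma (so $S_Y^\alpha$ is defined), using $f_{AX}(y) = |\det A|^{-1} f_X(A^{-1}y)$ and $P(Y\in B) = P(X\in A^{-1}B)$. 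Substituting $T = AS$ gives $\{T : P(Y\in T)\ge\alpha\} = \{AS : P(X\in S)\ge\alpha\}$, and since property 1 makes the infimum of Lebesgue measure over $\{S : P(X\in S)\ge\alpha\}$ uniquely attained at $S_X^\alpha$, Lemma \ref{lem:Lebesgue} yields $S_Y^\alpha = \arg\inf_{P(Y\in\cdot)\ge\alpha}\lambda(\cdot) = A\,\arg\inf_{P(X\in\cdot)\ge\alpha}\lambda(\cdot) = A\,S_X^\alpha$. (Equivalently and directly, $A_s$ for $AX$ equals $A$ applied to $A_{|\det A|s}$ for $X$, and the thresholds match because $P(AX\in AB) = P(X\in B)$.)

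The main obstacle is making property 1 fully rigorous under exactly the stated hypotheses — the continuity of $G$ and the equality case of the exchange inequality — since properties 2--4 are each only a couple of lines once the explicit super-level-set representative is available. I would also flag explicitly that ``unique'' here means unique modulo Lebesgue-null sets (literal set uniqueness fails, e.g.\ by deleting a single point), and that all four assertions are to be read for the canonical super-level-set representative $S_X^\alpha = \{x: f_X(x)\ge t_\alpha\}$, which is the object the rest of the paper uses.
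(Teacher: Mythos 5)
Your proof is correct and follows essentially the same route as the paper's: characterize $S_X^\alpha$ as a super-level set $\{x : f_X(x)\ge t_\alpha\}$ of the density, read off properties 2 and 3 directly from that representation and the monotonicity hypothesis, and obtain property 4 from the equivariance of the family $\{S : P(X\in S)\ge\alpha\}$ together with Lemma \ref{lem:Lebesgue}. The only difference is one of detail: where the paper cites \citet[Theorem 9.3.2]{Casella2002} for the super-level-set characterization and uniqueness, you prove it from scratch via the continuity of $G(t)=P(f_X(X)\ge t)$ and the exchange argument, and you are more explicit than the paper that uniqueness holds only modulo Lebesgue-null sets.
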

\begin{proof}
  \begin{enumerate}
  \item Since $X$ is continuous and the sets $\{x\mid f_X(x)=t\}$ have probability zero for all $t>0$, there exists $C(\alpha)$ such that $S_X^\alpha = \{x\mid f_X(x) \geq C(\alpha)\}$, where $C(\alpha)$ is a decreasing function of $\alpha$ \citep[Theorem 9.3.2]{Casella2002}. This establishes uniqueness.
  \item Let $x\in S_X^\alpha$. Then $f_X(x) \geq C(\alpha)\geq C(\beta)$. Hence, $x\in S_X^\beta$.
  \item We calculate \[cS_X^\alpha = \{cx\mid f_X(x) \geq C(\alpha)\}
      =\{y\mid f_X(c^{-1} y) \geq C(\alpha)\}
      \subset \{y\mid f_X(y) \geq C(\alpha)\}
      =S_X^\alpha,\]
    where we use the fact that $f_X(a^{-1}y)\leq f_X(y)$.
  \item Define $\mscr S_X^\alpha = \{S\mid P(X\in S) \geq \alpha\}$. Then $S_X^\alpha = \arg\inf_{S\in \mscr S} \lambda(S)$, where $\la(\cdot)$ is Lebesgue measure. Then
    \[A \mscr S_X^\alpha = \{AS \mid P(X\in S) \geq \alpha\}
      =\{T\mid P(X\in A^{-1} T)\geq \alpha\}
      =\{T\mid P(AX\in T) \geq \alpha\}
      =\mscr S_{AX}^\alpha.\]
    Taking the infinum with respect to Lebesgue measure on both sides yields $AS_X^\alpha = S_{AX}^\alpha$. We are able to pass $A$ in front of the infimum by Lemma \ref{lem:Lebesgue}.\qedhere
  \end{enumerate}
\end{proof}

\begin{proof}[Proof of Lemma \ref{lem:gamma}.]
  We will compute the moment generating function (MGF) of $\lVert V \rVert_K$. Let $0\leq t\leq a$. Call $\alpha$ the integrating constant. Then 
  \begin{align*}
    \EE \exp\l(t\lVert V\rVert_K\r)& = \alpha^{-1} \int\cdots \int \exp(t\lVert v \rVert_K) \exp(-a\lVert v \rVert_K) \ dv_1\ldots, dv_m\\
                                   &=\alpha^{-1} \int\cdots \int \exp(-(a-t) \rVert v \rVert_K) \ dv_1, \ldots, dv_m\\
                                   &= (a-t)^{-m} a^m\\
    &=(1-t/a)^{-m},
  \end{align*}
  where we applied a $u$-substitution, noting that the integrand is of the same form as $f_V$. We identify this as the MGF of the random variable $\mathrm{Gamma}(m,a)$. 
\end{proof}

\begin{proof}[Proof of Lemma \ref{lem:concentration}.]
  First note that $S_V^\alpha$ is of the form $S_V^\alpha = \{v\mid f_V(v)\geq t\}$ for some $t$ \citep[Theorem 9.3.2]{Casella2002}. Since $f_V(v)$ is an increasing function of $\lVert v \rVert_K$, equivalently, we have $S_V^\alpha  = \{v\mid \lVert v \rVert_K \geq t\}$. We must determine the value of $t$ such that $P(V\in\{v\mid \rVert v \rVert_K\geq t\}) = \alpha$. Recall from Lemma \ref{lem:gamma} that $\lVert V\rVert_K \sim \mathrm{Gamma}(m,a)$. We conclude that $t$ is the $\alpha$ quantile of $\mathrm{Gamma}(m,a)$.
\end{proof}

\begin{proof}[Proof of Theorem \ref{thm:tightness}.]
  By the previous lemma, we know that for all $\alpha\in (0,1)$, $S_V^\alpha = \{x \mid \lVert x \rVert_{K_V}\leq t_\alpha\} = t_\alpha\cdot K_V$ and $S_W^\alpha = \{x \mid \lVert x \rVert_{K_W}\leq t_\alpha\} = t_\alpha\cdot K_W$  for the same value of $t$. Since $K_V\subset K_W$, we have that $S_V^\alpha \subset S_W^\alpha$ for all $\alpha \in (0,1)$. 
\end{proof}

\begin{proof}[Proof of Theorem \ref{thm:depth}]
   All we have to show is that $D_X$ is a depth function for unimodal, continuous, decreasing random variables. For simplicity, we assume that the center of $X$ is zero.
  \begin{itemize}
  \item [(A1)]
    \begin{align*}
      D_{AX}(Ax)&= 1-\inf \{\alpha \mid Ax \in S_{Ax}^\alpha\}\\
                &= 1- \inf \{ \alpha \mid x\in A^{-1} S_{Ax}^{\alpha}\}\\
                &=1-\inf \{\alpha \mid X\in A^{-1} A S_X^{\alpha}\}\\
                &= 1-\inf \{\alpha \mid x\in S_{X}^\alpha\}\\
      &=D_X(x),
    \end{align*}
    where we use property 4) of Lemma \ref{lem:Sprops}.
  \item [(A2)] By property 2) of Lemma \ref{lem:Sprops}, we know that the sets $S_X^\alpha$ are nested. So, the minimum value of $D_X$ is attained at the mode, which has depth of $1$.
  \item [(A3)] Since we assume that the center is at $x_0=0$, it suffices to show that $D_X(x) \leq D_X(ax)$ for $a\in[0,1]$. Let $y\in \RR^m$. First we will show that $\{\alpha \mid y\in a S_X^\alpha\} \subset \{\alpha \mid y\in S_X^\alpha\}$. Let $\alpha\in \{\alpha\mid y\in a S_X^\alpha\}$. Then $y\in a S_X^\alpha \subset S_X^\alpha$, by property 3) of Lemma \ref{lem:Sprops}. So, $\alpha \in \{\alpha\mid y\in S_X^\alpha\}$. It follows that $\inf \{\alpha \mid y\in a S_X^\alpha\}\geq \inf \{ \alpha\mid y\in S_X^\alpha\}$. Finally, by choosing $y=ax$ we have that
    \[D_X(x) = D_{aX}(y)
      =1-\inf\{\alpha \mid y\in a S_X^\alpha\}
      \leq 1-\inf\{\alpha \mid y\in S_X^\alpha\}
      =D_X(y)=D_X(ax),\]
    where we use property (A1) of Definition \ref{def:depth}, and property 4) of Lemma \ref{lem:Sprops}.
  \item [(A4)] We will show that $\lim_{t\rightarrow \infty} D_X(tx)=0$ for all $x\in \RR^m$. Let $\ga>0$ and $x\in \RR^m$ be given. By assumption, we have that for all $x$, $\lim_{t\rightarrow \infty} f_X(tx)=0$. So, there exists $t>0$ such that $f(tx)\leq C(1-\ga)$, where $C(\cdot)$ is the function defined in the proof of Lemma \ref{lem:Sprops} part 1. Then $D_X(tx) = 1-\inf\{\alpha\mid f_X(tx)\geq C(\alpha)\}\leq 1-(1-\ga)=\ga$, where we use the fact that $1-\ga\leq \inf \{\alpha\mid f_X(tx)\geq C(\alpha)\}$.
  \end{itemize}
\end{proof}

\begin{proof}[Proof of Proposition \ref{prop:entropy}.]
  We can write the integrand as $-\log f(V) = -\log\left((\ep/\Delta)^m \frac{1}{m! \la(K)}\right) + \frac{\ep}{\Delta} \lVert V \rVert_K$. Then
  \begin{align*}
    H(V)&= \EE-\log f(V)\\
        &= \log\left( (\Delta/\ep)^m m! \la(K)\right) + \frac{\ep}{\Delta} \EE \lVert V \rVert_K\\
    &=\log\left( (\Delta/\ep)^m m! \la(K)\right) + \frac{\ep}{\Delta} \frac{m \Delta}{\ep},
  \end{align*}
  where we recall from Lemma \ref{lem:gamma} that $\lVert V \rVert_K \sim \mathrm{Gamma}(m,\ep/\Delta)$.
\end{proof}

\begin{proof}[Proof of Lemma \ref{lem:conditional}.]
\begin{enumerate}
    \item Recall from  \citet[Remark 4.2]{Hardt2010:GeometryDP} that $V \overset d =R\cdot U$, where $R\sim \mathrm{Gamma}(m+1,a)$, $U\sim \mathrm{Unif}(K)$, and $R\indep U$. Then $\lVert V \rVert_K = R \lVert U \rVert$, and $\frac{V}{\lVert V \rVert_K} = \frac{RU}{R\lVert U \rVert_K} = \frac{U}{\lVert U \rVert_K}$. So, it suffices to show that $U \indep \frac{U}{\lVert U \rVert_K}$. To this end, we will derive the conditional cdf of $\lVert U \rVert_K$ given that $\frac{U}{\lVert U \rVert_K}=e$ for some $\lVert e \rVert_K=1$. 
    
    First, for any $\ga>0$, we define the set $\mathrm{Cone}(e,\ga) = \left\{v\in \RR^m \mid \arccos\left( \frac{v^\top e}{\lVert v\rVert_2 \cdot \lVert e \rVert_2}\right) \leq \ga\right\}$, which is the set of all vectors $v$ whose angle from $e$ is less than $\ga$. Then for any $t\geq 0$, we can express the conditional cdf as 
    \begin{align*}
        P\left(\lVert U \rVert_K \leq t \ \Big| \ \frac{U}{\lVert U \rVert_K}=e\right)
        &=\lim_{\ga\rightarrow 0} \frac{P(U \in \left( tK \cap \mathrm{Cone}(e,\ga)\right))}{P(U \in (K\cap \mathrm{Cone}(e,\ga)))}\\
        &=\lim_{\ga\rightarrow 0} \frac{\lambda(tK\cap \mathrm{Cone}(e,\ga))/\lambda(K)}{\lambda(K\cap \mathrm{Cone}(e,\ga))/\lambda(K)}\\
        &=\lim_{\ga\rightarrow 0} \frac{t^m \lambda(K \cap \mathrm{Cone}(e,\ga))}{\lambda(K\cap \mathrm{Cone}(e,\ga))}\\
        &= t^m.
    \end{align*}
    We see that the conditional cdf $P(\lVert U \rVert_K \leq t\mid \frac{U}{\lVert U \rVert_K} = e)$ does not depend on $e$. We conclude that $\lVert U \rVert_K \indep \frac{U}{\lVert U \rVert_K}$ and hence $\lVert V \rVert_K \indep \frac{V}{\lVert V\rVert_K}$.
    \item Since $V\in \mathrm{span}(e)$, we know that $|V^\top e| = \lVert V\rVert_2 \cdot \lVert e \rVert_2=\lVert V \rVert_2$. Then 
    \begin{align*}
    |V^\top e| &= \lVert V \rVert_2 
    = \left \lVert \lVert V \rVert_K \cdot \frac{V}{\lVert V \rVert_K}\right\lVert_2
    =\lVert V\rVert_K \left \lVert  \frac{e}{\lVert e \rVert_K}\right\lVert_2
    =\frac{\lVert V \rVert_K}{\lVert e \rVert_K}
    \sim \mathrm{Gamma}(m,a\lVert e \rVert_K),
    \end{align*}
    where we use the fact that $\frac{V}{\lVert V \rVert_K} =\pm \frac{e}{\lVert e \rVert_K}$, that 
    $\lVert V\rVert_K$ is independent of $\frac{V}{\lVert V \rVert_K}$, and that $\lVert V \rVert_K \sim \mathrm{Gamma}(m,a)$ from Lemma \ref{lem:gamma}.\qedhere
\end{enumerate}
\end{proof}

\begin{proof}[Proof of Theorem \ref{thm:variance}.]
By Lemma \ref{lem:conditional}, $W=\l(V_K^\top e\middle | V_K\in E\r)$ is distributed as $\mathrm{Gamma}(m,\frac{\ep\lVert e \rVert_K}{\Delta_K})$, which has variance $\frac{m\Delta_K}{\ep^2\lVert e \rVert_K}$. Minimizing the variance between the $K$-norm and $H$-norm is equivalent to maximizing $\frac{\lVert e \rVert_K}{\Delta_K}$. Note that $\frac{\lVert e \rVert_K}{\Delta_K}$ is the same value as the diameter of the set $\mathrm{span}(e)\cap (\Delta_K\cdot K)$. Since $\Delta_K\cdot K \subset \Delta_H\cdot H$, we have that $\mathrm{span}(e)\cap (\Delta_K\cdot K)\subset \mathrm{span}(e)\cap (\Delta_H\cdot H)$. The result follows.
\end{proof}

\begin{proof}[Proof of Theorem \ref{ThmObjPert}.]
First we will assume that $r(\theta)$ is twice-differentiable and $\Theta = \RR^m$. At the end of the proof we will use the results in \citet{Kifer2012:PrivateCERM} to extend the proof to arbitrary convex $r(\theta)$ and arbitrary convex sets $\Theta$. 

By \citet[Proposition 2.3]{Awan2019}, it suffices to show that for all $a\in \RR^m$ and all $X,X'\in \mscr X^n$ with $\de(X,X')=1$ we have 
\[\frac{\mathrm{pdf}(\ta_{DP} = a\mid X)}{\mathrm{pdf}(\ta_{DP} = a\mid X')} \leq \exp({\ep}).\]
To this end, let $a\in \RR^m$ and $X,X'$ be such that $\de(X,X') = 1$. Then if $\ta_{DP}=a$, we have
$\ds a = \arg\min_{\ta\in \RR^m} n \hat {\mscr L} (a; X) + r(\ta) + \frac{\ga}{2}\ta^\top \ta + V^\top \ta$. 
By taking the gradient with respect to $\ta$ and setting it equal to zero, we can solve for $V$ as a function of $a$: $V(a;X) =-(n\nabla \hat {\mscr L}(a;X) + \nabla r(a) + \ga a).$
Then applying this one-to-one change of variables, we get 
\[\frac{\mathrm{pdf}(\ta_{DP} = a\mid X)}{\mathrm{pdf}(\ta_{DP} = a\mid X')}
=\frac{f(V(a;X)\mid X)}{f(V(a;X')\mid X')}\frac{|\det \nabla V(a;X')|}{|\det\nabla V(a;X)|}\]
We will bound these two factors separately. First, we have 
$\frac{f(V;X)}{f(V;X')} \leq \exp({\ep q}),$
by Proposition \ref{prop:KNorm}.
As $\de(X,X')=1$, without loss of generality, assume that $X_i=X_i'$ for all $i=1,\ldots,n-1$. Call 
$\ds A= \nabla V(a;X),$ 
$B= \nabla V(a;X'),$ 
and $C= \sum_{i=1}^{n-1} \nabla^2 \ell(a; X_i) + \nabla^2 r(a) + \ga I_m,$
where $I_m$ is the $m\times m$ identity matrix. Note that $A = C + \nabla^2\ell(a; X_n)$ and $B =C+\nabla^2\ell(a;X_n')$. Then we have 
\begin{align}
\l| \frac{\det\nabla V(a;X')}{\det\nabla V(a;X)}\r|
&= \l| \frac{\det(B)}{\det(A)}\r|
=\l| \frac{\det(C+\nabla^2 \ell(a;X_n'))}{\det(C + \nabla^2\ell(a;X_n))}\r|\\
&= \l| \frac{\det(C) \det(I_m + C^{-1}\nabla^2 \ell(a;X_n'))}{\det(C) \det(I_m + C^{-1} \nabla^2 \ell(a;X_n))}\r|\\
&\leq \frac{1+ \frac{\la}{\ga}}{|\det(I_m + C^{-1}\nabla^2(a;X_n))|}\label{det1}\\
&\leq 1+\frac{\la}{\ga}\label{det2}= \exp({\ep(q-1)})
\end{align}
To justify the inequality in \eqref{det1}, note that $C^{-1}\nabla^2 \ell(a;X_n')$ is positive definite of rank at most $1$. We know that $\nabla^2 \ell(a;X_n')$ has at most one nonzero eigenvalue. Furthermore, $\ga$ is a lower bound on the eigenvalues of $C$,since $C$ is the sum of positive definite functions and $\ga I_m$. So, $C^{-1} \nabla^2\ell(a;X_n')$ has at most one nonzero eigenvalue, which is bounded between $0$ and $\la/\ga$. 

Next we  justify the inequality in \eqref{det2}. Since $C^{-1}\nabla^2 \ell(a;X_n)$ is positive definite, all of its eigenvalues are non-negative. Thus, all eigenvalues of $(I + C^{-1}\nabla^2\ell(a;X_n))$ are greater than or equal to $1$. Hence, $\det(I + C^{-1}\nabla^2\ell(a;X_n))\geq 1$.

The last equality just uses the fact that $\ga = \frac{\la}{\exp({\ep(q-1)})-1}$. 
Finally, we combine our bounds:
\[\frac{\mathrm{pdf}(\ta_{DP}=a;X)}{\mathrm{pdf}(\ta_{DP}=a;X')} = \frac{f(V;X)}{f(V;X')} \l| \frac{\det(\nabla V(a;X'))}{\det(\nabla V(a;X))}\r|\leq \exp({\ep q}) \exp({\ep(q-1)}) = \exp({\ep}).\]

Theorem 1 in \citet{Kifer2012:PrivateCERM} on successive approximations extends without modification to Definition \ref{DP}. Using this theorem along with the techniques detailed in Appendix C.2, C.3 of \citet{Kifer2012:PrivateCERM}, we extend this proof to arbitrary convex functions $r(\ta)$ and arbitrary convex sets $\Theta$. 
\end{proof}



\end{document}